\documentclass[12pt,draftclsnofoot,onecolumn,twoside,letterpaper]{IEEEtran}

\usepackage[noadjust]{cite}
\usepackage{amsmath,amssymb,amsfonts,amscd,amsthm,amsbsy}
\usepackage{graphicx}
\usepackage[english]{babel}
\usepackage{mathrsfs}
\usepackage{mathtools}
\usepackage{array}
\usepackage{blkarray}
\usepackage{caption}
\usepackage{subcaption}
\newtheorem{theorem}{Theorem}
\newtheorem{lemma}{Lemma}

\newtheorem{corollary}{Corollary}
\theoremstyle{definition}
\newtheorem{remark}{Remark}

\theoremstyle{definition}

\theoremstyle{definition}
\newtheorem{example}{Example}
\newtheorem{definition}{Definition}
\theoremstyle{definition}

\theoremstyle{definition}


\usepackage{tabulary}
\usepackage{makecell}

\usepackage{etoolbox}
\AtEndEnvironment{example}{\null\hfill\IEEEQED}%
\AtEndEnvironment{remark}{\null\hfill\IEEEQED}%

\newcommand{\Fb}{\mathbb{F}}
\newcommand{\Zb}{\mathbb{Z}}

\newcommand{\xb}{{\bf{x}}}
\newcommand{\eb}{{\bf{e}}}
\newcommand{\bxi}{{\pmb{\xi}}}

\newcommand{\Hb}{{\bf{H}}}
\newcommand{\hb}{{\bf{h}}}

\newcommand{\Ib}{{\bf{I}}}

\newcommand{\cb}{{\bf{c}}}
\newcommand{\wb}{{\bf{w}}}

\newcommand{\ei}{\epsilon}
\newcommand{\colspan}{\mathcal{C}}
\newcommand{\rank}{{\rm rank}}
\newcommand{\lopt}{\ell^*}

\newcommand{\Ab}{{\textbf{A}}}
\newcommand{\Bb}{{\textbf{B}}}

\newcommand{\ical}{\mathcal{I}}

\newcommand{\fcal}{\mathcal{F}}
\newcommand{\kcal}{\mathcal{K}}

\newcommand{\acal}{\mathcal{A}}
\newcommand{\ycal}{\mathcal{Y}}

\newcommand{\wtm}{\mathrm{wt}}

\newcommand{\xcal}{\mathcal{X}}

\newcommand{\vand}{{\rm Vand}}



\normalsize

\title{Blind Updates in Coded Caching}

\author{Suman~Ghosh, Prasad~Krishnan and Lakshmi~Prasad~Natarajan

\thanks{This paper was presented in part at the 2020 IEEE Information Theory Workshop (ITW 2020), Riva del Garda, Italy.}%
\thanks{Suman Ghosh and Lakshmi Prasad Natarajan are with the Department of Electrical Engineering, Indian Institute of Technology Hyderabad, Sangareddy 502 285, India (email:
{ee16resch11006, lakshminatarajan}@iith.ac.in).}
\thanks{Prasad Krishnan is with the Signal Processing and Communications Research Center,
International Institute of Information Technology, Hyderabad 500032, India.
(email: prasad.krishnan@iiit.ac.in)}
}

\begin{document}

\maketitle

\begin{abstract}
We consider the centralized coded caching system where a library of files is available at the server and their subfiles are cached at the clients as prescribed by a placement delivery array (PDA). We are interested in the problem where a specific file in the library is replaced with a new file at the server, the contents of which are correlated with the file being replaced, and this change needs to be communicated to the caches. Upon replacement, the server has access only to the updated file and is unaware of its differences with the original, while each cache has access to specific subfiles of the original file as dictated by the PDA. We model the correlation between the two files by assuming that they differ in at the most $\ei$ subfiles, and aim to reduce the number of bits broadcast by the server to update the caches. We design a new elegant coded transmission strategy for the server to update the caches blindly, and also identify a simple scheme that is based on MDS codes. We then derive converse bounds on the minimum communication cost $\lopt$ among all linear strategies. For two well-known families of PDAs -- Maddah-Ali \& Niesen's caching scheme and a PDA by Tang \& Ramamoorthy and Yan et al. -- our new scheme has cost $\lopt(1 + o(1))$ when the updates are sufficiently sparse, while the scheme using MDS codes has order-optimal cost when the updates are dense. 
\end{abstract}

\begin{IEEEkeywords}
blind update, broadcast channel, coded caching, communication cost, placement delivery array.
\end{IEEEkeywords}

\section{Introduction}

Coded caching is a powerful technique that utilizes memory as a resource to offset communication costs~\cite{Ali_Niesen_2014}. A systematic placement of files in the client caches and a careful design of coded transmissions can significantly reduce the number of bits being broadcast during file delivery.
Coded caching has been shown to provide benefits in a variety of scenarios, such as in wireless networks~\cite{CacheAidedInterference,LampirisElia}, decentralized caching~\cite{DecentralizedPaper}, in D2D networks~\cite{D2Dpaper}, and in the presence of relays~\cite{CombinationNetworks}.
Caching could play a key role in overcoming bandwidth bottlenecks in next generation communication networks.

We consider centralized coded caching schemes that are designed based on \emph{placement delivery arrays} or PDAs~\cite{YCTC_2017}.
A PDA is a structure that provides~\emph{(i)} a scheme for partitioning every file in the library into subfiles and placing these subfiles in the caches, and~\emph{(ii)} a method to generate coded transmissions to meet the demand of the clients during the delivery phase. The notion of PDAs provides a systematic framework for designing centralized coded caching schemes. 
A number of works in the literature have designed coded caching schemes to achieve a graceful trade-off between communication rate and subpacketization level, and several of them -- including Maddah-Ali \& Niesen scheme of~\cite{Ali_Niesen_2014} -- fall into the framework of PDAs, such as~\cite{YCTC_2017,YTCC_COMML_18,SZG_IT_18,TaR_IT_18,Cheng_TComm_19,Cheng_arXiv_17,Chittoor_GC_19}.
Apart from coded caching, PDAs are also known to play an important role in the design of coded distributed computing techniques~\cite{LMYA_IT_17,RaK_ISIT19,KoR_GLOBECOM_18}.

In this paper, we consider the scenario where one of the files in the library is replaced with a new file {of equal size}. 
We assume that the contents of the new file are written over the contents of the file being replaced, and hence, the server loses the old file once the library is updated.
This is a reasonable mode of operation when the server, in order to save memory and reduce system complexity, does not intend to maintain a detailed log of the changes in the library and the store the previous versions of the state of the library. 
Once the file is replaced at the server, these changes must be communicated to the caches in order to update their contents.
We consider the scenario where the contents of the new file and the old file (that has been removed at the server) are correlated, and we design coded transmission strategies for updating the cache contents that exploit this correlation.
We model the correlation between the two in terms of the number of subfiles in which the two files differ.

We assume that the files are cached using a $(K,F,Z,S)$-PDA, where $K$ is the number of caches, $F$ is the number of subfiles that each file is split into, $Z$ is the number of subfiles of each file that is cached at each client, and $S$ denotes the number of transmissions in the delivery phase of coded caching. 
We further impose the conditions (which are satisfied by several PDAs known in the literature) that every subfile must be cached in a constant number of caches, say $r$ caches out of $K$, and each subfile must be cached in a distinct subset of users (see~\eqref{eq:mild_condition_pda}).
If there exist two or more users whose cache contents are identical, without loss of generality, we replace this set of users by a single user with the same cache content.
Under such scenario, suppose that a specific file is replaced with another file, whose contents differ in at the most $\ei$ subfiles. 
We assume that the server knows the value of $\ei$, but since the server has lost the older file, it does not know in which subfiles the two files differ. 
The objective is to design a strategy that can be used by the server to communicate this update to the users.
Since the server is ignorant of the exact difference between the old and new files, we refer to this process as \emph{blind cache update} or simply \emph{cache update}.
We use the number of bits transmitted by the server as the performance metric of an update scheme.

In a naive strategy the server broadcasts the entirety of the new file, which can be directly used by the clients to update their respective caches.
However this strategy does not exploit the side information available at the clients, which are the subfiles of the older file available in their cache.
In this paper, we show that the communication cost can be reduced by using carefully designed coded transmissions that exploit this side information. 
Each user will use the coded transmissions from the server and the subfiles of the older file that are available in its cache to decode the newer version of these subfiles.

\begin{example} \label{ex:toy_example}
\begin{figure}[!t]
\centering
\includegraphics[width=5in]{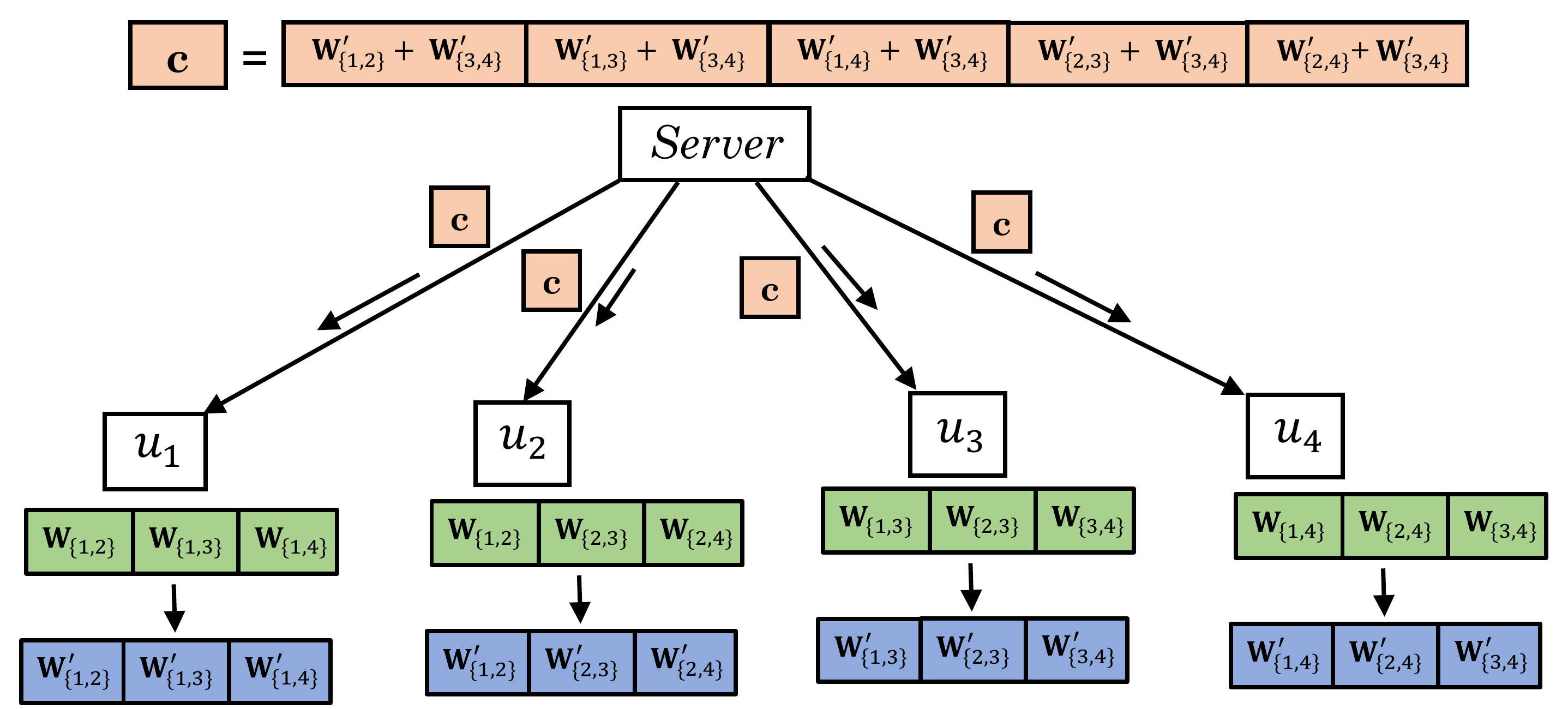}
\caption{The coded blind cache update scheme in Example~\ref{ex:toy_example}.}
\label{fig:toy_example}
\end{figure} 
In this toy example, we illustrate how the older subfiles available in user caches can be used to reduce the communication cost.
Consider the Maddah-Ali \& Niesen's caching scheme across $K=4$ users $u_1,\dots,u_4$ with caching ratio $Z/F=1/2$.
A file ${\bf w}$ is split into $6$ subfiles ${\bf w} = (w_T, T \in \binom{[4]}{2})$, where $\binom{[4]}{2}$ is the collection of all $2$-sized subsets of $[4]=\{1,\dots,4\}$. 
For simplicity, assume that each subfile is a single bit, i.e., $w_T \in \Fb_2$ for all $T \in \binom{[4]}{2}$.
User $u_k$ caches subfile $w_T$ if and only if $k \in T$.
Suppose ${\bf w}$ is updated to a newer version ${\bf w}'=(w'_T, T \in \binom{[4]}{2})$, such that ${\bf w}$ and ${\bf w}'$ differ in at the most one subfile. 
That is, the Hamming weight of the update vector ${\bf e}=(e_T, T \in \binom{[4]}{2}) \triangleq {\bf w}' - {\bf w}$ is at the most $\ei=1$. 
If a naive update scheme is used, the server will broadcast the entire updated file ${\bf w}'$ consisting of $6$ bits. Each user $u_k$ can obtain its new cache content $(w'_T~|~T \in \binom{[4]}{2}, k \in T)$ from the corresponding sub-vectors of ${\bf w}'$.
However, if we use the coded transmission shown in Fig.~\ref{fig:toy_example}, cache updates can be effected using $5$ transmitted bits. 
We illustrate the decoding procedure at $u_1$. The procedure at other caches are similar.

User $u_1$ intends to recover $(w'_{\{1,2\}},w'_{\{1,3\}},w'_{\{1,4\}}) = (w_{\{1,2\}},w_{\{1,3\}},w_{\{1,4\}}) + (e_{\{1,2\}},e_{\{1,3\}},e_{\{1,4\}})$, or equivalently, recover $(e_{\{1,2\}},e_{\{1,3\}},e_{\{1,4\}})$ since he already knows $(w_{\{1,2\}},w_{\{1,3\}},w_{\{1,4\}})$. 
In order to decode $(e_{\{1,2\}},e_{\{1,3\}},e_{\{1,4\}})$, $u_1$ computes a syndrome $(s_1,s_2)$ using the components of the broadcast codeword and its present cache contents as follows 
\begin{align*}
s_1 \triangleq (w'_{\{1,2\}} + w'_{\{3,4\}}) + (w'_{\{1,3\}} + w'_{\{3,4\}}) + w_{\{1,2\}} + w_{\{1,3\}} = e_{\{1,2\}} + e_{\{1,3\}}, \\
s_2 \triangleq (w'_{\{1,3\}} + w'_{\{3,4\}}) + (w'_{\{1,4\}} + w'_{\{3,4\}}) + w_{\{1,3\}} + w_{\{1,4\}} = e_{\{1,3\}} + e_{\{1,4\}}.
\end{align*} 
Note that $(s_1,s_2)$ is the syndrome of $(e_{\{1,2\}},e_{\{1,3\}},e_{\{1,4\}})$ for the length-$3$ binary repetition code. Since the Hamming weight of $(e_{\{1,2\}},e_{\{1,3\}},e_{\{1,4\}})$ is at the most $1$ and since the length-$3$ repetition code is a $1$-error correcting code we conclude that $(s_1,s_2)$ is sufficient to identify $(e_{\{1,2\}},e_{\{1,3\}},e_{\{1,4\}})$.
\end{example}

The coding schemes designed in this paper for blind cache updates are also useful in the following communication scenario known as \emph{broadcasting with noisy side information (BNSI)}~\cite{SuN_2019}.
In the \emph{placement phase} of a coded caching system, the server broadcasts each file in the library to the users, and each user stores a specific subset of the subfiles of each file in its cache.
Now suppose that the channel between the server and the users faces a temporary outage during the transmission of a particular file.
Because of the outage, the users receive erroneous versions of the subfiles of this file and request the server for a retransmission.
In the retransmission phase, each user demands a specific collection of subfiles from the server, and each user has side information in the form of a noisy version of the set of subfiles that it demands from the server.
The task of designing a communication scheme for this retransmission phase is a BNSI problem (we describe this problem formally in Section~\ref{bnsi}).
Our cache update coding strategies serve as solutions to this BNSI problem (details are in Section~\ref{cacheupdate_bnsi}).

\subsection{Contributions}

\begin{table}[!t]
\centering
\begin{tabular}{|l|l|c|c|}
\Xhline{5\arrayrulewidth}
\textsc{Section}   & \textsc{Summary of Contents} & \textsc{Caching Scheme} & \textsc{Main Results} \\
\Xhline{5\arrayrulewidth}
\hline
Sec~\ref{sec:system_model} & System model. Review of PDAs & Any PDA & \\
\hline
\hline
Sec~\ref{cacheupdate_bnsi} & Cache update and BNSI problems are equivalent & Any PDA & Theorem~\ref{thm:cache_update_BNSI} \\
\hline
Sec~\ref{prep} & Design criterion for linear cache update schemes & Any PDA & Theorem~\ref{thrm1}\\
\hline
Sec~\ref{prep} & A cache update scheme based on MDS codes & Any PDA & Lemma~\ref{ub1} \\
\hline
\hline
Sec~\ref{sec:coding_scheme} & A new cache update scheme & PDAs satisfying~\eqref{eq:mild_condition_pda} & Theorem~\ref{thm:new_coding_scheme} \\
\hline
\hline
Sec~\ref{sub:sec:converse:general} & Condition for $\lopt=F$ (for naive update scheme to be optimal) & Any PDA & Lemma~\ref{lem:condition_lopt_F} \\
\hline
Sec~\ref{sub:sec:converse:general} & Proving that $\lopt \geq 2\ei$. Conditions for $\lopt=2\ei,2\ei+1,2\ei+2$ & Any PDA & Lemmas~\ref{lem:cost_2ei},~\ref{lem:cost_2ei_plus} \\
\hline
Sec~\ref{sub:sec:converse:general} &  A generic lower bound for any linear update scheme & Any PDA & Theorem~\ref{thm:generic_lower_bound} \\
\hline
\hline
Sec~\ref{sub:sec:shangguan} & Lower bound on $\lopt$. Exact $\lopt$ for a special case & Construction~I of~\cite{SZG_IT_18} & \\
\hline
\hline
Sec~\ref{sub:sec:MN-PDA} & Performance analysis for $\ei=O(F)$;  $\log_2 \ei =  O(\log_2 F), o(\log_2 F)$ & Maddah-Ali \& Niesen~\cite{Ali_Niesen_2014} & Lemmas~\ref{lem:MN:sparse_update},~\ref{lem:MN:main:1} \\
\hline
\hline
Sec~\ref{sub:sec:uv_PDA} & Performance analysis for $\ei=O(F)$;  $\log_2 \ei = O(\log_2 F), o(\log_2 F)$ & A scheme from~\cite{YCTC_2017,TaR_IT_18} & Lemmas~\ref{lem:uv_PDA_sparse:1},~\ref{lem:uv_PDA_sparse:2} \\
\Xhline{5\arrayrulewidth}
\end{tabular}
\caption{Summary of the contents of this paper.}
\label{table:summary}
\end{table}

In this paper, we provide a formal framework for the design of blind schemes for updating cache contents. 
Throughout this paper we study only linear coding schemes, and all the converse bounds and results related to optimality of communication costs are with respect to the family of linear coding strategies. 
The main contributions of this paper (summarized in Table~\ref{table:summary}) are as follows.

{\textit{1. Design criterion and a construction based on MDS codes (Section~{\ref{cu_bnsi}}):}} 
We show that the cache update problem is equivalent to a BNSI problem. 
Based on this equivalence, we obtain a design criterion for constructing a linear code for the cache update problem (Theorem~\ref{thrm1}). 
Relying on a construction from~\cite{SuN_2019}, we also identify a simple coding scheme with communication cost of $(F - (Z -2\ei)^+)\,B$ bits that uses the parity-check matrix of MDS codes for encoding, where $B$ is the number of bits in each subfile and $x^+ = \max\{x,0\}$.

{\textit{2. A new code construction (Section~\ref{sec:coding_scheme}):}}
We propose a new coding scheme for the cache update problem with cost $\left(2\ei(K-r) + 1\right)B$ bits. 
The encoder is designed by associating a subspace with each user and by choosing each column of the encoder matrix from the intersection of a carefully chosen collection of these subspaces. 
The proposed construction is random and is over the finite field of size ${2^B}$. 
Using the Schwartz-Zippel lemma, we show that this construction succeeds with high probability for all sufficiently large $B$.

{\textit{3. Converse bounds and performance analysis (Section~\ref{sec:lower_bounds}):}} 
%
We present a lower bound on the optimal communication cost $\lopt$ of linear update schemes (Theorem~\ref{thm:generic_lower_bound}).
Using this bound, we explicitly identify $\lopt$ for a special case of the class of PDAs designed by Shangguan et al.~{\cite{SZG_IT_18}}.
We then specialize our lower bound to two well known families of PDAs -- the Maddah-Ali \& Niesen's scheme~\cite{Ali_Niesen_2014} and a PDA designed independently by Yan et al.~\cite{YCTC_2017} and Tang \& Ramamoorthy~\cite{TaR_IT_18} -- and conduct performance analysis when the number of users $K$ is made large while the caching ratio $Z/F$ is held constant.
For these two families of PDAs, the communication cost of our new scheme from Section~\ref{sec:coding_scheme} is nearly-optimal and equals $\lopt(1 + o(1))$ when $\log_2 \ei = o(\log_2 F)$, i.e., if the updates are sufficiently sparse. When the updates are dense, i.e., when $\ei = O(F)$, the MDS codes based construction is order optimal.

We introduce the system model in Section~{\ref{sec:system_model}} and draw conclusions in Section~{\ref{conclusions}}.

\subsection{Related Work}

We are not aware of any prior coding strategies in the literature that exploit the contents of a deleted file to push blind updates into caching nodes.
However, a problem of similar flavour has been studied in~\cite{PrM_IT_18,NSR_GC_14} for coded distributed storage systems where storage nodes maintain coded versions (i.e., specific linear combinations) of a data file to provide protection against node failures. Communication schemes were designed to update the contents of a `stale' node in the system by either a central server~\cite{PrM_IT_18} or the other nodes in distributed storage~\cite{NSR_GC_14} that have access to the updated content.
Our paper differs fundamentally from the problem setting of~\cite{PrM_IT_18,NSR_GC_14} since we are interested in uncoded placement of subfiles in the caching nodes, i.e., the caching nodes store fragments of the original file instead of linear combinations.

\noindent
\emph{Notation:} Matrices and column vectors will be denoted using bold upper case and bold lower case letters, respectively. For any matrix $\mathbf{A}$, $\colspan(\mathbf{A})$ denotes the column span of $\mathbf{A}$. For any vector $\xb=(x_f~|~f \in \fcal)$ whose entries are indexed by the set $\fcal$ and for any $\acal \subseteq \fcal$, $\xb_{\acal} = (x_f~|~f \in \acal)$ is the subvector of $\xb$ consisting of entries indexed by $\acal$.
For any subspace $U$, $\dim(U)$ denotes the dimension of the subspace.
If $\mathcal{A}$ is a set and $a$ an integer, then $\binom{\mathcal{A}}{a}$ is the collection of all $a$-sized subsets of $\mathcal{A}$. For any positive integer $n$, $[n]$ denotes the set $\{1,\dots,n\}$. The empty set is denoted by $\phi$.
For any real number $x$, $x^+ = \max\{x,0\}$. 

\section{System Model} \label{sec:system_model}

\begin{figure}[t]
\begin{subfigure}{0.5\textwidth}
\centering
\includegraphics[width=3.4in]{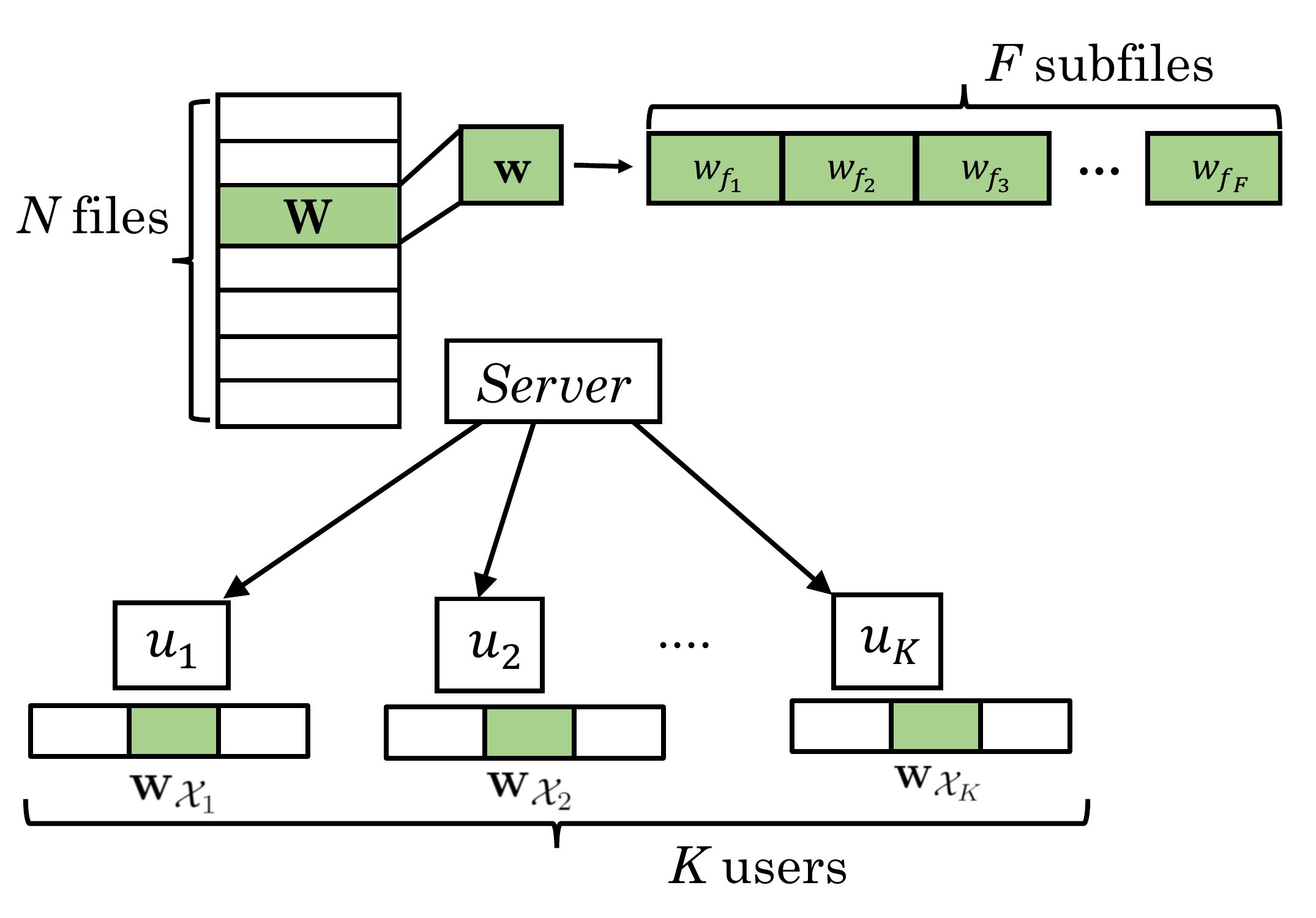} 
\caption{System Model.}
\label{fig:subim1}
\end{subfigure}
\begin{subfigure}{0.5\textwidth}
\centering
\includegraphics[width=3in]{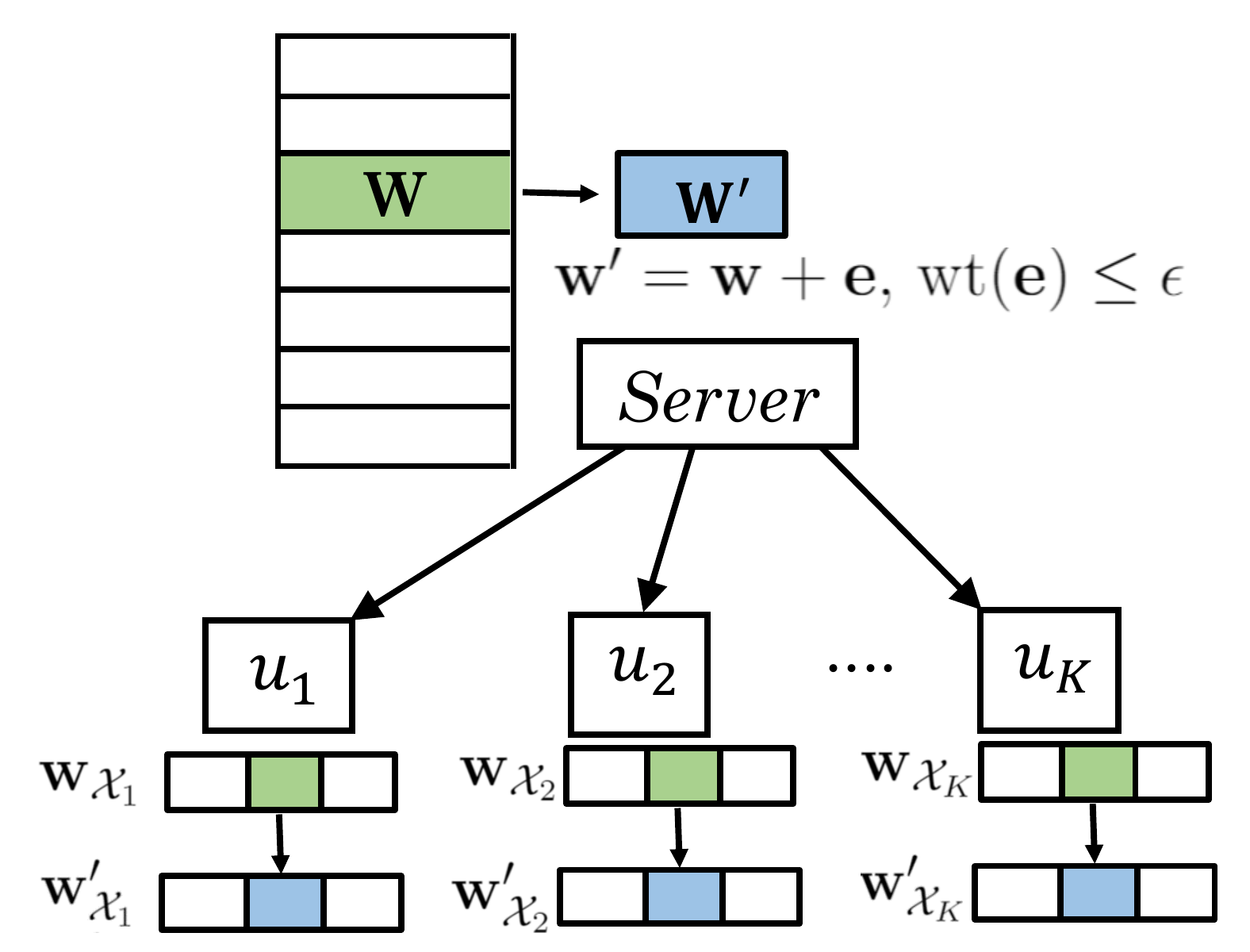}
\caption{Update of a file $W$ to $W'$.}
\label{fig:subim2}
\end{subfigure}
\caption{System model of the cache update problem. Here, the index set of subfiles is $\fcal=\{f_1,\dots,f_F\}$ and the index set of users is $\kcal=\{1,\dots,K\}$.}
\label{fig:image0}
\end{figure}
We consider a caching system with $K$ clients and a single server as shown in Fig~\ref{fig:subim1}. The server stores a library of $N$ files. We view each file as a collection of subfiles, and the subfiles of all the $N$ files are cached in a predetermined way across the $K$ caching nodes.
We assume that one specific file in the library, say $W$, is being replaced with a new file as shown in Fig~\ref{fig:subim2}. The other $N-1$ files existing in the system are unaffected by this change. 
Hence, throughout this paper, we focus only on $W$ and its subfiles and ignore the existence of the other files in the coded caching system. 

The file $W$ is a collection of $F$ equal sized subfiles $W_f,~f \in \fcal$, where $\fcal$ denotes the index set of the subfiles and $|\fcal|=F$. 
The $K$ caching nodes are denoted as $u_k$, $k \in \kcal$, where $\kcal$ is the index set of the nodes and $|\kcal|=K$.
Throughout this paper we will assume that the subfiles are cached across the nodes in an uncoded fashion based on a placement delivery array~\cite{YCTC_2017}.
We also assume that the server is connected to the $K$ nodes through a noiseless broadcast link. 

\subsection{A Brief Review of Placement Delivery Arrays}

A $(K,F,Z,S)$ Placement Delivery Array (PDA) is an $F \times K$ array $P$ whose entries consist of a symbol $\ast$ and $S$ non-negative integers $0,1,\dots,S-1$. The rows and columns of this array are indexed by the sets $\fcal$ and $\kcal$, respectively, i.e., $P = [p_{f,k}]$, $f \in \fcal$ and $k \in \kcal$. The array satisfies the following properties
\begin{enumerate}
\item the symbol $\ast$ appears $Z$ times in each column;
\item each non-negative integer $s \in \{0,1,\dots,S-1\}$ appears at least once in the array; and
\item for any two distinct entries $(f_1,k_1)$ and $(f_2,k_2)$ such that $p_{f_1,k_1}=p_{f_2,k_2}=s$, we have $f_1 \neq f_2$, $k_1 \neq k_2$ and $p_{f_1,k_2}=p_{f_2,k_1}=\ast$.
\end{enumerate} 
We assume that the file $W$ is cached across the $K$ nodes using the strategy based on the PDA $P$, which is as follows: a node $u_k$ caches a subfile $W_f$ if and only if $p_{f,k}=\ast$.

\begin{remark}
The locations of the $\ast$ symbol in a PDA dictate the placement of subfiles in caches, while the integers $\{0,1,\dots,S-1\}$ determine the coded transmissions during the delivery phase of the coded caching system when the clients reveal their file demands to the server. 
In this paper, we are only interested in the specific placement of subfiles in caches since our primary objective is to update the contents of the caches. Hence, we will not be interested in the integer entries of the PDA.
\end{remark}

In this work we consider the family of PDAs in which $\ast$ appears a constant number of times in each row. This family includes several well known PDAs proposed in the literature for coded caching~\cite{Ali_Niesen_2014,YCTC_2017,SZG_IT_18,TaR_IT_18,YTCC_COMML_18} and coded MapReduce~\cite{LMYA_IT_17,KoR_GLOBECOM_18,RaK_ISIT19}.
The number of times $\ast$ appears in each row of the PDA will be denoted by $r$. 
This implies that each node caches $Z$ out of $F$ subfiles and each subfile is replicated at $r$ nodes.
If we count the number of times the symbol $\ast$ appears in each column of the $F \times K$ array $P$ then the total count is $KZ$. Similarly counting the number of times the symbol $\ast$ appears in each row of the array $P$ the total count is $rF$. Now equating these two numbers we get $r=\frac{KZ}{F}$.

The index set of the subfiles cached at node $u_k$ will be denoted as $\xcal_k=\{f \in \fcal~|~p_{f,k}=\ast\}$. Note that $|\xcal_k|=Z$ for all $k \in \kcal$. 
Observe that the tuple $\xcal=(\xcal_k, k \in \kcal)$ completely describes the cache placement strategy.

\begin{example} \label{exmp1}
Suppose we are given the following $(K=4,F=6,Z=3,S=4)$ PDA
\begin{equation*}
{\small
P=
\begin{blockarray}{ccccc}
 & 1 & 2 & 3 & 4\\
\begin{block}{c(cccc)}
\{1,2\} & \ast & \ast & 0 & 1\\
\{1,3\} & \ast & 0 & \ast & 2 \\
\{1,4\} & \ast & 1 & 2 & \ast \\
\{2,3\} & 0 & \ast & \ast & 3\\
\{2,4\} & 1 & \ast & 3 & \ast\\
\{3,4\} & 2 & 3 & \ast & \ast\\
\end{block}
\end{blockarray}~.
}
\end{equation*}
For this PDA, the node index set is $\kcal=[4]=\{1,2,3,4\}$ and the subfile index set is $\fcal = \binom{[4]}{2}$
which is the collection of all $2$-sized subsets of $\kcal$. 
The caching scheme consists of $4$ nodes $u_1,u_2,u_3,u_4$ and $6$ subfiles $W_{\{1,2\}}, W_{\{1,3\}}, W_{\{1,4\}}, W_{\{2,3\}}, W_{\{2,4\}}, W_{\{3,4\}}$. The index set of the subfiles cached at the four nodes are $\xcal_1=\{\{1,2\},\{1,3\},\{1,4\}\}$, $\xcal_2=\{\{1,2\},\{2,3\},\{2,4\}\}$, $\xcal_3=\{\{1,3\},\{2,3\},\{3,4\}\}$ and $\xcal_4=\{\{1,4\}, \{2,4\},\{3,4\}\}$, respectively. 
Note that $|\xcal_k|=Z,~\forall k \in \{1,2,3,4\}$, and each subfile is cached at $r=2$ nodes.
The update problem in Example~\ref{ex:toy_example} is based on this PDA. 
\end{example}

\subsection{The Blind Cache Update Problem}

Let $\wb=(w_f,f \in \fcal) \in \Fb_q^F$ denote the content of the file to be updated where each subfile content, denoted by $w_f$, is an element over a finite field $\Fb_q$. 
The cached content of node $u_k$ is denoted as $\wb_{\xcal_k}=(w_f~|~f \in \xcal_k)$. We observe that $\wb_{\xcal_k} \in \Fb_q^Z$ for all $k \in \kcal$. 

Suppose the file content $\wb$ is updated to $\wb+\eb$, where $\eb \in \Fb_q^F$ represents an update to the file content, i.e., $\wtm(\eb) \leq \ei$ where $\wtm()$ denotes the Hamming weight of a vector and $\ei$ is a known constant. In other words the original file undergoes an update where at the most $\ei$ subfiles have been updated. The update is modeled as a substitution of the original collection of the subfiles. 
We assume that the server is unaware of the old version of the file $\wb$ and the identities of updated subfiles. The server knows the updated file content $\wb+\eb$ and knows that the number of the updated subfiles is at the most $\ei$. {The parameter $\ei$ characterizes the update process. For example, if the ratio $\frac{\ei}{F}$ is small and tends to $0$ as $F \to \infty$ then we view such an update as a sparse update. On the other hand if $\ei$ is proportional to $F$ then the update is dense.}

The server wants to communicate the update to the users, so that each user $u_k$ can update its contents from $\wb_{\xcal_k}$ to $(\wb+\eb)_{\xcal_k}$.
Considering $\ei$ as a design parameter, we are interested in the problem of designing a coding scheme to update the cache of each user described by the cache placement strategy $\xcal=(\xcal_k, k \in \kcal)$ with an update of at the most $\ei$ subfiles. We will call this problem the $(\xcal,\ei)$ \textit{cache update problem} or the $(\xcal,\ei)$ \textit{update problem}.
\begin{definition}
A valid encoding function of codelength $l$ for the $(\xcal,\ei)$ update problem over the field $\Fb_q$ is a function
$\mathfrak{E}: \Fb_q^{F} \to \Fb_q^l$
such that for each node $u_k,~k \in \kcal$ there exists a decoding function $\mathfrak{D}_k: \Fb_q^l \times \Fb_q^Z \to \Fb_q^Z$ satisfying the following property: $\mathfrak{D}_k(\mathfrak{E}(\wb+\eb),\wb_{\xcal_k})=(\wb+\eb)_{\xcal_k}$ for every $\wb \in \Fb_q^{F}$ and $\eb \in \Fb_q^{F}$ with $\wtm(\eb) \leq \ei$.
\end{definition}
The communication cost of the coding scheme, in number of bits, is $l \times \log_2 q$ bits.
The objective of the code construction is to design the coding scheme $(\mathfrak{E},\mathfrak{D}_k, k \in \kcal)$ such that the codelength $l$ is minimized. 

A coding scheme $(\mathfrak{E},\mathfrak{D}_1,\dots,\mathfrak{D}_K)$ is said to be linear if the encoding function is an $\Fb_q$-linear transformation. For a linear coding scheme, the transmitted codeword $\cb=\Hb(\wb+\eb)$, where $\Hb \in \Fb_q^{l \times F}$ is the encoder matrix. 
%
The optimum communication cost among all valid linear coding schemes (considering all possible choices of the finite field $\Fb_q$) for the $(\xcal,\ei)$ update problem will be denoted as $\lopt(\xcal,\ei)$ or simply $\lopt$. 

We assume that the code designer has the flexibility to choose the operating finite field when constructing the update scheme. 
Our code construction in Section~\ref{sec:coding_scheme} is applicable for all sufficiently large finite fields, while our lower bounds in Section~\ref{sec:lower_bounds} are independent of the choice of the finite field.

The trivial coding scheme that transmits the updated file content $\wb+\eb$ as such, i.e., $\cb =
\Ib_{F \times F}(\wb + \eb)$ is a valid coding scheme with codelength $F$ since each node can directly update its cache contents using $\cb$. We refer to this trivial coding scheme as the \textit{naive scheme} where $\Hb=\Ib$. Thus, we have the following
trivial upper bound on the optimum linear codelength
\begin{equation} \label{eq:trivial_upper_bound}
\lopt \leq F.
\end{equation}
 
\section{Equivalence between Cache Update\\ and Broadcasting with Noisy Side Information} \label{cu_bnsi}

In this section we show that every cache update problem is equivalent to a \emph{broadcasting with noisy side information (BNSI)} problem~\cite{SuN_2019}. We start this section with a brief introduction of BNSI problem. We show that by performing a simple change of variables we obtain an instance of BNSI problem starting from an instance of cache update problem. At the end of this section we briefly describe some preliminary results, derived for BNSI problem, in our notation that will help to design linear codes for the cache update problem.

\subsection{Broadcasting with Noisy Side Information (BNSI) Problem} \label{bnsi}

The BNSI problem deals with broadcasting a vector of $F$ information symbols denoted as \mbox{$\xb=(x_f~|~f \in \fcal) \in \Fb_q^F$} from a server to $K$ users denoted as $u_k, k \in \kcal$ over a noiseless broadcast channel. We are given subsets $\xcal_k \subseteq \fcal$, $k \in \kcal$, such that the user $u_k$ demands the subvector $\xb_{\xcal_k} = (x_f, f \in \xcal_k)$. 
We will assume that $|\xcal_k|=Z$ for each $k$. 
Further, each user $u_k$ already knows an erroneous version of its demand as side information, i.e., knows the value of $\xb_{\xcal_k} + \bxi_k$ where $\bxi_k \in \Fb_q^{Z}$ is an unknown noise vector satisfying $\wtm(\bxi_k) \leq \ei$.
The users and the server do not know the exact values of the side information error vector, but are aware that the side information at any user and its demand differ in at the most $\ei$ coordinates.
The aim of the problem is to design an \emph{encoding function} $\mathfrak{E}: \Fb_q^F \rightarrow \Fb_q^l$
such that for each user $u_k$ there exists a \emph{decoding function}
\mbox{$\mathfrak{D}_k:\Fb_q^l \times \Fb_q^{Z} \rightarrow \Fb_q^{Z}$}
satisfying the property $\mathfrak{D}_k(\mathfrak{E}(\xb),\xb_{\xcal_k}+\bxi_k)=\xb_{\xcal_k}$ for every \mbox{$\xb \in \Fb_q^F$} and $\bxi_k \in \Fb_q^{Z}$ with $\wtm(\bxi_k) \leq \ei$. 
Defining \mbox{$\xcal=(\xcal_k, k \in \kcal)$} to be the $K$-tuple that represents the demands of all the $K$ users, we refer to this communication problem as the \emph{$(\xcal,\ei)$ BNSI (Broadcasting with Noisy Side Information) problem}. 

\subsection{Equivalence Between Cache Update and BNSI Problems} \label{cacheupdate_bnsi}

In this subsection we make the observation that any coding scheme for the $(\xcal,\ei)$ cache update problem is a solution for the $(\xcal,\ei)$ BNSI problem, and vice-versa. 

We observe that in both problems each user demands a particular subvector of an $F$-length vector from the transmitter. In the BNSI problem the $F$-length vector available at the source is $\xb$ and $u_k$ demands $\xb_{\xcal_k}$, while in the cache update problem the transmitter has $\wb + \eb$ and $u_k$ demands $(\wb + \eb)_{\xcal_k} = \wb_{\xcal_k} + \eb_{\xcal_k}$.
In the BNSI problem as well as the cache update problem, the side information available at each user is a noisy version of its own demand. In the former problem, $u_k$ demands $\xb_{\xcal_k}$ and knows $\xb_{\xcal_k} + \bxi_k$ as side information, $\bxi_k$ being the noise at $u_k$. In the latter problem, $u_k$ demands $\wb_{\xcal_k} + \eb_{\xcal_k}$ and knows $\wb_{\xcal_k}$. The difference between the side information and demand in the cache update problem, viz. $\wb_{\xcal_k} - \left(\wb_{\xcal_k} + \eb_{\xcal_k} \right) = -\eb_{\xcal_k}$ is the effective noise at $u_k$. 

The noise affecting the $K$ users in the cache update problem, $-\eb_{\xcal_k}, k \in \kcal$, are all subvectors of the negative of the update vector $\eb \in \Fb_q^F$.
In contrast, the noise vectors affecting the users in the BNSI problem $\bxi_1,\dots,\bxi_K$ are arbitrary and could be independent of each other. 
This is the key difference between the two problems. However, in spite of this difference, we next observe that the coding solutions to both the problems are identical. The reason why the dependence or correlation of the noise vectors in the cache update problem does not provide any additional coding leverage is because the communication channel is a broadcast link and the users do not collude or cooperate during decoding.

We first show that any coding scheme for the $(\xcal,\ei)$ cache update problem is a valid coding scheme for the $(\xcal,\ei)$ BNSI problem. 
Let $\mathfrak{E}$, $\mathfrak{D}_k, k \in \kcal$ be valid encoding and decoding functions for the cache update problem, that is 
\begin{equation} \label{eq:equivalence:1}
\mathfrak{D}_k(\mathfrak{E}(\wb + \eb),\wb_{\xcal_k}) = \wb_{\xcal_k} + \eb_{\xcal_k} = (\wb + \eb)_{\xcal_k}
\end{equation} 
for all choices of $\wb$ and $\eb$ with $\wtm(\eb) \leq \ei$ and for every user $u_k$. 
We know that the noise vectors $\bxi_k \in \Fb_q^Z$, $k \in \kcal$, in the BNSI problem have weight at the most $\ei$. For each $k \in \kcal$, define the vector $\eb^{(k)} \in \Fb_q^F$ as follows, $\eb^{(k)}_{\xcal_k} = -\bxi_k$ and $\eb^{(k)}_f=0$ for all $f \notin \xcal_k$. Observe that $\wtm(\eb^{(k)}) \leq \ei$ and $\xb_{\xcal_k} + \bxi_k = \xb_{\xcal_k} - \eb^{(k)}_{\xcal_k} = (\xb - \eb^{(k)})_{\xcal_k}$. When the coding scheme $(\mathfrak{E},\mathfrak{D}_k, k \in \kcal)$ is applied to the BNSI problem, for each $k \in \kcal$, we have
\begin{align*}
\mathfrak{D}_k(\mathfrak{E}(\xb),\xb_{\xcal_k} + \bxi_k) 
= \mathfrak{D}_k\left(\,\mathfrak{E}\left((\xb - \eb^{(k)}) + \eb^{(k)}\right),(\xb - \eb^{(k)})_{\xcal_k}\,\right)
= ((\xb - \eb^{(k)}) + \eb^{(k)})_{\xcal_k} = \xb_{\xcal_k},
\end{align*} 
where the second equality follows from~\eqref{eq:equivalence:1} with $(\xb - \eb^{(k)})$ and $\eb^{(k)}$ playing the roles of $\wb$ and $\eb$, respectively.
Hence, $(\mathfrak{E},\mathfrak{D}_k, k \in \kcal)$ is a valid coding scheme for the $(\xcal,\ei)$ BNSI problem.

Conversely, now assume that $\mathfrak{E}$ and $\mathfrak{D}_k,k \in \kcal$, are valid encoding and decoding functions for the $(\xcal,\ei)$ BNSI problem. That is, for every $u_k$ and any choice of $\xb$, $\bxi_k$, $k \in \kcal$ with $\wtm(\bxi_k) \leq \ei$,
\begin{equation} \label{eq:equivalence:2}
\mathfrak{D}_k(\mathfrak{E}(\xb),\xb_{\xcal_k} + \bxi_k) = \xb_{\xcal_k}.
\end{equation} 
For given values of $\wb$ and $\eb$ in the cache update problem, define $\xb = \wb + \eb$ and $\bxi_k = -\eb_{\xcal_k}$ for all $k \in \kcal$.
We know that $\wtm(\eb) \leq \ei$, and hence, $\wtm(\bxi_k) = \wtm(-\eb_{\xcal_k}) \leq \wtm(-\eb) \leq \ei$. 
Also, $\wb_{\xcal_k} = \xb_{\xcal_k} - \eb_{\xcal_k} = \xb_{\xcal_k} + \bxi_k$.
From~\eqref{eq:equivalence:2}, and using the fact $\wtm(\bxi_k) \leq \ei$, we have
\begin{align*}
\mathfrak{D}_k(\mathfrak{E}(\wb + \eb),\wb_{\xcal_k} ) 
= \mathfrak{D}_k(\mathfrak{E}(\xb), \xb_{\xcal_k} + \bxi_k ) = \xb_{\xcal_k} 
= \wb_{\xcal_k} + \eb_{\xcal_k}. 
\end{align*} 
Hence, $(\mathfrak{E},\mathfrak{D}_k, k \in \kcal)$ is a valid coding scheme for the cache update problem.

The equivalence proved above holds for all codes, including linear and non-linear codes. 
Suppose linear codes are used, i.e., the codeword is generated at the transmitter by multiplying the $F$-length information vector with an $l \times F$ matrix $\Hb$. We say that the encoding matrix $\Hb$ is \emph{valid} for the given problem (either the BNSI or the cache update problem) if every receiver can decode its demand using the codeword generated by $\Hb$ and its own side information.
Applying the equivalence proved in this subsection to linear codes we obtain

\begin{theorem} \label{thm:cache_update_BNSI}
A matrix $\Hb$ is a valid encoder for the $(\xcal,\ei)$ cache update problem if and only if it is a valid encoder for the $(\xcal,\ei)$ BNSI problem.
\end{theorem}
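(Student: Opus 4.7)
The theorem is essentially a restatement of the equivalence argument in the paragraphs preceding it, specialized to linear encoders. My plan is to observe that in the constructive equivalence already shown, the encoding function is used unchanged on both sides — only the interpretation of the inputs (and the decoders) is relabelled. So the proof reduces to invoking what has already been derived, while being careful that ``valid encoder'' means only that \emph{some} decoding functions exist that work with it; no linearity is required of the decoders.

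For the forward direction, I would assume $\Hb$ is a valid encoder for the $(\xcal,\ei)$ cache update problem, i.e.\ there exist $\mathfrak{D}_k$, $k \in \kcal$, such that the linear encoding $\mathfrak{E}(\wb+\eb) = \Hb(\wb+\eb)$ together with $\mathfrak{D}_k$ satisfies~\eqref{eq:equivalence:1}. Given a BNSI instance with information vector $\xb$ and noise vectors $\bxi_k$ with $\wtm(\bxi_k) \le \ei$, I would define, as in the preceding discussion, $\eb^{(k)} \in \Fb_q^F$ by $\eb^{(k)}_{\xcal_k} = -\bxi_k$ and $\eb^{(k)}_f = 0$ for $f \notin \xcal_k$, so that $\wtm(\eb^{(k)}) \le \ei$ and $\xb_{\xcal_k} + \bxi_k = (\xb - \eb^{(k)})_{\xcal_k}$. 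Then the identity
\begin{equation*}
\mathfrak{D}_k\bigl(\Hb\xb,\, \xb_{\xcal_k} + \bxi_k\bigr) = \mathfrak{D}_k\bigl(\Hb((\xb-\eb^{(k)}) + \eb^{(k)}),\, (\xb-\eb^{(k)})_{\xcal_k}\bigr) = \xb_{\xcal_k}
\end{equation*}
follows by applying the cache update decoding guarantee with the substitution $\wb \leftarrow \xb-\eb^{(k)}$, $\eb \leftarrow \eb^{(k)}$. This exhibits valid BNSI decoders for the same linear encoder $\Hb$, hence $\Hb$ is a valid BNSI encoder.

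For the converse, I would do the symmetric substitution: given a valid BNSI encoder $\Hb$ with decoders satisfying~\eqref{eq:equivalence:2}, and arbitrary $\wb,\eb \in \Fb_q^F$ with $\wtm(\eb) \le \ei$, I would set $\xb = \wb + \eb$ and $\bxi_k = -\eb_{\xcal_k}$, note that $\wtm(\bxi_k) \le \wtm(\eb) \le \ei$, and deduce $\wb_{\xcal_k} = \xb_{\xcal_k} + \bxi_k$. Feeding this into~\eqref{eq:equivalence:2} yields decoders for the cache update problem that use the same encoder $\Hb$, establishing validity in the cache update direction.

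There is no substantive obstacle here; the content is really in the equivalence argument that precedes the theorem. The only point that requires a little care is the logical shape of ``valid encoder'': the decoders used on the two sides of the equivalence are formally different, so I must emphasize that the theorem asserts the existence of \emph{some} family of decoders rather than equating particular decoders. Once this is spelled out, the two implications above are immediate.
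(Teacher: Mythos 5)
Your proposal is correct and mirrors the paper's own argument: the paper proves the general (code-agnostic) equivalence in the paragraphs before the theorem using exactly the substitutions $\eb^{(k)}_{\xcal_k}=-\bxi_k$ and $\xb=\wb+\eb$, $\bxi_k=-\eb_{\xcal_k}$, and then states Theorem~\ref{thm:cache_update_BNSI} as the linear specialization. Your clarification that ``valid encoder'' asserts the existence of \emph{some} decoders, not the identity of particular ones, is the right reading and matches the paper's intent.
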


\subsection{Preliminaries} \label{prep}

We now recall some relevant results from \cite{SuN_2019}, including a necessary and sufficient condition for a matrix $\Hb$ to be a valid encoder for the BNSI problem. 
Since the BNSI problem is equivalent to the cache update problem, we directly state these results as applied to the cache update problem. We then recall a construction of the encoder matrix from~\cite{SuN_2019} based on Maximum Distance Separable (MDS) codes.

The linear code design criterion of~\cite{SuN_2019} is in terms of the span of the columns of specific submatrices of $\Hb$. For each node $u_k,~k \in \kcal$, let $\ycal_k=\fcal \setminus \xcal_k$ denote the index set of subfiles that are not cached by the node $u_k$. Since $|\xcal_k|=Z$, we have $|\ycal_k|=F-Z$.

We index the $F$ columns of $\Hb$ by the elements of $\fcal$. For any $\acal \subseteq \fcal$, let $\Hb_{\acal} \in \Fb_q^{l \times |\acal|}$ be the submatrix of $\Hb$ consisting of the columns of $\Hb$ with indices belonging to $\acal$. Also, let $\colspan(\Hb_{\acal})$ denote the subspace of $\Fb_q^l$ spanned by the columns of $\Hb_{\acal}$.

\begin{theorem}~\cite[Corollary~1]{SuN_2019} \label{thrm1}
A matrix $\Hb$ is a valid encoder matrix for the $(\xcal,\ei)$ cache update problem if and only if for every $k \in \kcal$, any non-zero linear combination of any $2\ei$ or fewer columns of $\Hb_{\xcal_k}$ does not belong to $\colspan(\Hb_{\ycal_k})$, i.e.,
\begin{equation*}
 \Hb_{\xcal_k}{\bf x} + \Hb_{\ycal_k}{\bf y} \neq {\bf 0} \text{ for all } {\bf x} \in \Fb_q^Z \setminus \{{\bf 0}\} \text{ with } \wtm({\bf x}) \leq 2\ei \text{ and } {\bf y} \in \Fb_q^{F-Z}.
\end{equation*} 
\end{theorem}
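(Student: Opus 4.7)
The plan is to prove the criterion by directly analyzing the information available at each user, rather than routing through Theorem~\ref{thm:cache_update_BNSI}; the argument through BNSI works equally well, but the cache-update viewpoint makes the reason for the factor $2$ in $2\ei$ transparent. Fix a user $u_k$ and suppose the server transmits $\cb = \Hb(\wb + \eb)$. Split the columns of $\Hb$ according to $\xcal_k$ and $\ycal_k$ so that
\begin{equation*}
\cb \;=\; \Hb_{\xcal_k}(\wb+\eb)_{\xcal_k} + \Hb_{\ycal_k}(\wb+\eb)_{\ycal_k}.
\end{equation*}
Because $u_k$ knows $\wb_{\xcal_k}$, it can form the folded observation
\begin{equation*}
\tilde{\cb}_k \;\triangleq\; \cb - \Hb_{\xcal_k}\wb_{\xcal_k} \;=\; \Hb_{\xcal_k}\eb_{\xcal_k} + \Hb_{\ycal_k}(\wb+\eb)_{\ycal_k},
\end{equation*}
which carries the desired update $\eb_{\xcal_k}$ (of weight at most $\ei$) together with an interference vector $\Hb_{\ycal_k}\zb$ whose coefficient $\zb=(\wb+\eb)_{\ycal_k}$ is arbitrary from the viewpoint of $u_k$ (since $\wb_{\ycal_k}$ was never cached at $u_k$ and both $\wb$ and $\eb$ may vary).

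The next step is to recast decodability as an injectivity statement. Recovery of $\eb_{\xcal_k}$ at $u_k$ is possible for every admissible input iff the map
\begin{equation*}
(\eb_{\xcal_k},\zb) \;\longmapsto\; \Hb_{\xcal_k}\eb_{\xcal_k} + \Hb_{\ycal_k}\zb
\end{equation*}
is one-to-one in its first coordinate whenever $\wtm(\eb_{\xcal_k}) \leq \ei$ and $\zb \in \Fb_q^{F-Z}$ is free. By linearity, this is equivalent to requiring that $\Hb_{\xcal_k}(\eb_{\xcal_k}-\eb_{\xcal_k}') + \Hb_{\ycal_k}(\zb - \zb') = \bf 0$ forces $\eb_{\xcal_k}=\eb_{\xcal_k}'$. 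Setting $\xb = \eb_{\xcal_k} - \eb_{\xcal_k}'$ and $\yb = \zb - \zb'$, and noting that the difference of two weight-$\ei$ vectors has weight at most $2\ei$, this is exactly the stated condition: $\Hb_{\xcal_k}\xb + \Hb_{\ycal_k}\yb \neq {\bf 0}$ for every nonzero $\xb$ of weight at most $2\ei$ and every $\yb$.

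For the two directions, the \emph{if} part is then a direct unique-decoding argument: given $\tilde{\cb}_k$, let $\eb_{\xcal_k}$ be any weight-$\leq \ei$ preimage; the condition forces any other such preimage to agree with it, so $u_k$ can recover $\eb_{\xcal_k}$ and hence $(\wb+\eb)_{\xcal_k} = \wb_{\xcal_k}+\eb_{\xcal_k}$. The \emph{only if} direction proceeds by contrapositive: if the condition fails for some $k$, one produces two genuine pairs $(\wb,\eb)$ and $(\wb',\eb')$ with $\wtm(\eb),\wtm(\eb')\leq\ei$ that induce the same codeword $\cb$ and the same cached contents $\wb_{\xcal_k}=\wb'_{\xcal_k}$ at $u_k$, yet differ in $(\wb+\eb)_{\xcal_k}$, thereby ruling out any valid decoder. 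Concretely, one can split a witness $\xb$ of weight $\leq 2\ei$ into two weight-$\leq \ei$ pieces $\eb_{\xcal_k}$ and $-\eb'_{\xcal_k}$ and choose $\wb,\wb'$ so that $\wb_{\xcal_k}=\wb'_{\xcal_k}$ while absorbing the discrepancy $\yb$ into the unseen coordinates.

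The only mildly delicate point is the weight budget $2\ei$ in the last step: one has to verify that every $\xb$ of weight at most $2\ei$ actually arises as a difference of two weight-$\leq\ei$ vectors supported in $\xcal_k$, which is immediate by partitioning the support of $\xb$ into two halves of size at most $\ei$. Everything else is bookkeeping about linearity and the block decomposition of $\Hb$ along $\xcal_k \sqcup \ycal_k$.
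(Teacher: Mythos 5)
Your proof is correct, but it takes a genuinely different route from the paper's. The paper does not prove Theorem~\ref{thrm1} directly: it establishes in Section~\ref{cacheupdate_bnsi} that the cache-update problem is equivalent to the BNSI problem of~\cite{SuN_2019} (Theorem~\ref{thm:cache_update_BNSI}), and then imports the design criterion wholesale as~\cite[Corollary~1]{SuN_2019}. You instead give a self-contained, first-principles argument in the cache-update language: form the folded observation $\tilde{\cb}_k = \cb - \Hb_{\xcal_k}\wb_{\xcal_k} = \Hb_{\xcal_k}\eb_{\xcal_k} + \Hb_{\ycal_k}(\wb+\eb)_{\ycal_k}$, observe that the $\ycal_k$-block coefficient is free from $u_k$'s viewpoint (since $\wb_{\ycal_k}$ is uncached and arbitrary), and recast decodability as injectivity of $\eb_{\xcal_k}\mapsto\Hb_{\xcal_k}\eb_{\xcal_k}$ modulo $\colspan(\Hb_{\ycal_k})$ over weight-$\leq\ei$ inputs; differencing two such inputs yields the weight-$2\ei$ condition, and your support-splitting argument closes the converse by producing two indistinguishable $(\wb,\eb)$ pairs whenever the criterion fails. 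Both approaches are valid. The paper's route buys re-use of~\cite{SuN_2019}'s machinery and makes the BNSI connection explicit (which is itself a contribution of the paper); your route is more elementary, makes the origin of the factor $2$ transparent, and avoids any dependence on external results. The only point worth stating slightly more carefully in your write-up is the converse construction: you should make explicit that you may take $\eb$ and $\eb'$ supported entirely on $\xcal_k$ (so $\wtm(\eb)=\wtm(\eb_{\xcal_k})\leq\ei$ and likewise for $\eb'$), with $\wb_{\xcal_k}=\wb'_{\xcal_k}$ arbitrary and $\wb_{\ycal_k}-\wb'_{\ycal_k}=\yb$, which then forces $\Hb(\wb+\eb)=\Hb(\wb'+\eb')$ while $(\wb+\eb)_{\xcal_k}\neq(\wb'+\eb')_{\xcal_k}$. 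As written this is compressed into one sentence; spelling it out removes any residual ambiguity about whether the two scenarios are genuinely admissible.
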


Since the column span $\colspan(\Hb_{\ycal_k})$ includes $\pmb{0}$, Theorem~\ref{thrm1} implies that any $2\ei$ or fewer columns of $\Hb_{\xcal_k}$ must be linearly independent.

\begin{corollary} \cite[Corollary~2]{SuN_2019} \label{corr1}
If $\Hb$ is a valid encoder matrix for the $(\xcal,\ei)$ cache update problem then any $2\ei$ or fewer columns of $\Hb_{\xcal_k}$ are linearly independent for every $k \in \kcal$.
\end{corollary}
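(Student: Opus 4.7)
The plan is to derive this corollary directly from Theorem~\ref{thrm1} by specializing its condition to the case where the linear combination from $\colspan(\Hb_{\ycal_k})$ is the zero vector. The structure of the argument will be contrapositive (or equivalently, by contradiction): I will assume that for some $k \in \kcal$ there is a set of $2\ei$ or fewer columns of $\Hb_{\xcal_k}$ that is linearly dependent, and then I will exhibit a violation of the condition required by Theorem~\ref{thrm1}.

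Concretely, first I would fix an arbitrary $k \in \kcal$ and suppose, for contradiction, that some collection of at most $2\ei$ columns of $\Hb_{\xcal_k}$ is linearly dependent. Then there exists a non-zero vector $\xb \in \Fb_q^Z$ with $\wtm(\xb) \leq 2\ei$ such that $\Hb_{\xcal_k}\xb = \pmb{0}$. Second, I would take $\yb = \pmb{0} \in \Fb_q^{F-Z}$, which trivially satisfies $\Hb_{\ycal_k}\yb = \pmb{0}$, so that $\Hb_{\xcal_k}\xb + \Hb_{\ycal_k}\yb = \pmb{0}$. This exhibits a non-zero linear combination of $2\ei$ or fewer columns of $\Hb_{\xcal_k}$ that lies in $\colspan(\Hb_{\ycal_k})$ (namely, it equals the zero element of this subspace). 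By Theorem~\ref{thrm1}, this contradicts the assumption that $\Hb$ is a valid encoder matrix.

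There is really no obstacle here: the corollary is an immediate weakening of Theorem~\ref{thrm1} obtained by noting that $\pmb{0} \in \colspan(\Hb_{\ycal_k})$ always. The only thing worth emphasizing explicitly in the write-up is precisely this observation, since it is the single line that makes the corollary follow. I would therefore keep the proof to two or three sentences, stating the contrapositive setup, invoking $\pmb{0} \in \colspan(\Hb_{\ycal_k})$, and concluding by Theorem~\ref{thrm1}.
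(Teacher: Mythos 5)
Your proposal is correct and matches the paper's own reasoning exactly: the paper, immediately before stating Corollary~\ref{corr1}, notes that since $\pmb{0} \in \colspan(\Hb_{\ycal_k})$, Theorem~\ref{thrm1} directly implies linear independence of any $2\ei$ or fewer columns of $\Hb_{\xcal_k}$. Your contrapositive write-up (taking $\yb = \pmb{0}$) is just a slightly more explicit rendering of the same one-line observation.
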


\begin{example}
Consider the $(\xcal,\ei=1)$ cache update problem, with $K=4$ users and $F=6$ subfiles and the cache placement as given in Example~\ref{exmp1} over any finite field $\Fb_q$. 
Here the subfiles are indexed by all $2$-subsets of $[4]=\{1,\dots,4\}$, i.e., $\fcal=\binom{[4]}{2}$.
Now consider 
\begin{equation*}
\Hb=
\begin{blockarray}{cccccc}
 \{1,2\} & \{1,3\} & \{1,4\} & \{2,3\} & \{2,4\} & \{3,4\}\\
\begin{block}{(cccccc)}
1 & 0 & 0 & 0 & 0 & 1\\
0 & 1 & 0 & 0 & 0 & 1\\
0 & 0 & 1 & 0 & 0 & 1\\
0 & 0 & 0 & 1 & 0 & 1\\
0 & 0 & 0 & 0 & 1 & 1\\
\end{block}
\end{blockarray}~.
\end{equation*}
Note that any five columns of $\Hb$ are linearly independent. 
For user $u_1$, $\xcal_1=\left\{ \{1,2\}, \{1,3\}, \{1,4\} \right\}$ and $\ycal_1 = \fcal \setminus \xcal_1 = \left\{ \{2,3\}, \{2,4\}, \{3,4\} \right\}$.
Observe that any two columns of $\Hb_{\xcal_1}$ are linearly independent, and any non-zero linear combination of any two columns of $\Hb_{\xcal_1}$ does not belong to $\colspan(\Hb_{\ycal_1})$.
Similar observations hold for users $u_2,u_3$ and $u_4$ as well.
Hence, $\Hb$ is a valid encoder matrix and achieves codelength $l=5$ that saves $1$ channel use with respect to the naive update scheme.
This is the coding scheme used in Example~\ref{ex:toy_example}.
\end{example}

The following result from~\cite{SuN_2019} will be used in Section~\ref{sub:sec:converse:general} to identify the scenarios where coding can reduce the communication cost with respect to the naive transmission scheme.

\begin{theorem} \label{thm:lower_bound_XS}
\cite[Theorem~3]{SuN_2019} Let $S = \{ k \in \kcal~|~|\xcal_k| \leq 2\ei\}$ be the index set of all nodes that cache $2\ei$ or fewer subfiles. 
Let $\xcal_S = \cup_{k \in S} \xcal_k$ be the index set of all subfiles cached among the nodes in $S$. Then 
$\lopt \geq |\xcal_S| + \min\{2\ei,F-|\xcal_S|\}$.
\end{theorem}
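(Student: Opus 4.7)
My plan is to prove that every valid encoder $\Hb$ satisfies $\rank(\Hb) \geq |\xcal_S| + \min\{2\ei, F - |\xcal_S|\}$; since $\lopt$ is the minimum such rank over valid encoders, the bound on $\lopt$ will follow. Let $\ycal_S := \fcal \setminus \xcal_S$. The argument has two ingredients: (i) a rank decomposition $\rank(\Hb) = |\xcal_S| + \rank(\Hb_{\ycal_S})$, and (ii) a Singleton-type lower bound $\rank(\Hb_{\ycal_S}) \geq \min\{2\ei, |\ycal_S|\}$. Both ingredients use the same mechanism: given a putative nonzero linear dependence with support $T$, pick a user $k$ with $T \cap \xcal_k \neq \emptyset$, split $T$ into $T \cap \xcal_k$ and $T \setminus \xcal_k \subseteq \ycal_k$, observe via Corollary~\ref{corr1} that the $\xcal_k$-part is a nonzero combination of at most $2\ei$ columns of $\Hb_{\xcal_k}$ whenever $|T \cap \xcal_k| \leq 2\ei$, and obtain a contradiction with Theorem~\ref{thrm1}, which forbids this nonzero combination from lying in $\colspan(\Hb_{\ycal_k})$.

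For ingredient (i), I would first show $\rank(\Hb_{\xcal_S}) = |\xcal_S|$. If a nontrivial dependence has support $T \subseteq \xcal_S$, then since $\xcal_S = \bigcup_{k \in S} \xcal_k$ I can pick $k_0 \in S$ with $T \cap \xcal_{k_0} \neq \emptyset$; the condition $|T \cap \xcal_{k_0}| \leq |\xcal_{k_0}| \leq 2\ei$ triggers the mechanism and yields the contradiction. I would then show $\colspan(\Hb_{\xcal_S}) \cap \colspan(\Hb_{\ycal_S}) = \{\mathbf{0}\}$ by an essentially identical argument, using additionally that $\ycal_S \subseteq \ycal_{k_0}$ for every $k_0 \in S$ (because $\xcal_{k_0} \subseteq \xcal_S$) to place both the $\Hb_{\xcal_S \setminus \xcal_{k_0}}$ and $\Hb_{\ycal_S}$ contributions inside $\colspan(\Hb_{\ycal_{k_0}})$. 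These two facts combine to give the decomposition $\rank(\Hb) = |\xcal_S| + \rank(\Hb_{\ycal_S})$.

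For ingredient (ii), I would show that every nonzero vector in $\ker(\Hb_{\ycal_S})$ has Hamming weight at least $2\ei + 1$. If $T \subseteq \ycal_S$ is its support, then $T \cap \xcal_k = \emptyset$ for every $k \in S$, and the PDA assumption $r \geq 1$ (every subfile is cached somewhere) forces some $k^{*} \in \kcal \setminus S$ to satisfy $T \cap \xcal_{k^{*}} \neq \emptyset$. Applying the mechanism to $k^{*}$ forces $|T \cap \xcal_{k^{*}}| > 2\ei$, whence $|T| \geq 2\ei + 1$. Viewing $\ker(\Hb_{\ycal_S})$ as a linear code of length $|\ycal_S|$ with minimum Hamming distance $\geq 2\ei + 1$, the Singleton bound gives $\dim \ker(\Hb_{\ycal_S}) \leq |\ycal_S| - 2\ei$ when $|\ycal_S| \geq 2\ei$, and forces the kernel to be trivial when $|\ycal_S| < 2\ei$; in either case $\rank(\Hb_{\ycal_S}) \geq \min\{2\ei, |\ycal_S|\}$. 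The main obstacle I foresee is the careful set-theoretic bookkeeping needed at each invocation of the mechanism—choosing the right user $k$ and verifying containments such as $T \setminus \xcal_k \subseteq \ycal_k$ and $\ycal_S \subseteq \ycal_{k}$—after which the appeals to Corollary~\ref{corr1}, Theorem~\ref{thrm1}, and the Singleton bound are routine.
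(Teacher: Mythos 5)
The paper does not prove this statement; it is imported verbatim from \cite[Theorem~3]{SuN_2019}, so there is no in-paper proof to compare against. Evaluated on its own terms, your argument is sound and uses exactly the tools this paper makes available (Theorem~\ref{thrm1} and Corollary~\ref{corr1}): the rank splits cleanly as $\rank(\Hb)=|\xcal_S|+\rank(\Hb_{\ycal_S})$ because every nonempty dependence support meeting $\xcal_S$ hits some $\xcal_{k_0}$ with $k_0\in S$ in at most $2\ei$ coordinates, which Theorem~\ref{thrm1} forbids; and $\ker(\Hb_{\ycal_S})$ has minimum weight at least $2\ei+1$ because any support $T\subseteq\ycal_S$ must hit some $\xcal_{k^*}$ with $k^*\notin S$, and the same mechanism forces $|T\cap\xcal_{k^*}|>2\ei$. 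The Singleton step then gives $\rank(\Hb_{\ycal_S})\geq\min\{2\ei,|\ycal_S|\}$ and the bound follows. One remark worth keeping explicit: your invocation of $r\geq 1$ (every subfile is cached at some node) is not cosmetic but necessary---if some $f$ lies in no $\xcal_k$, the column $\hb_f$ can be taken to be zero and the stated inequality can fail---and in the cache-update setting the relevant sanity check, which the paper notes, is that $S$ is either $\phi$ or all of $\kcal$ since all $|\xcal_k|$ equal $Z$.
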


In the cache update setting $|\xcal_k|=Z$ for all $k$. Hence, in Theorem~\ref{thm:lower_bound_XS}, we either have $S=\phi$ or $S=\kcal$, and correspondingly, $\xcal_S=\phi$ or $\xcal_S=\fcal$.

We now recall a coding scheme from~\cite{SuN_2019} that relies on MDS codes. In this scheme, we choose $\Hb$ to be the parity-check matrix of an MDS code of length $F$ and dimension $(Z-2\ei)^+$. The number of rows $l$ of $\Hb$ is $F - (Z-2\ei)^+$, and from the properties of MDS codes we know that any $l$ columns of $\Hb$ are linearly independent. 
Note that the number of columns of $\Hb_{\xcal_k}$ is $|\xcal_k|=Z$.
To check if $\Hb$ satisfies the criteria of Theorem~\ref{thrm1}, consider any $k \in \kcal$ and the union of any $\min\{2\ei,Z\}$ columns of $\Hb_{\xcal_k}$ and all the columns of $\Hb_{\ycal_k}$. 
The total number of columns in this union is $\min\{2\ei,Z\} + |\ycal_k| = \min\{2\ei,Z\} + F - Z = F - (Z-2\ei)^+ =l$. Hence, these columns are linearly independent and satisfy the criteria of Theorem~\ref{thrm1}.
Such an MDS code exists over $\Fb_q$ if $q \geq F$.
Hence, we have the following upper bound on $\lopt$.

\begin{lemma} \label{ub1}
The optimal codelength $\lopt$ of a $(\xcal,\ei)$ cache update problem satisfies $\lopt \leq F-(Z-2\ei)^+$.
\end{lemma}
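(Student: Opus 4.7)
The plan is to give an explicit construction of a valid encoder matrix $\Hb$ with $l = F - (Z-2\ei)^+$ rows, which together with the equivalence in Theorem~\ref{thrm1} immediately yields the stated bound on $\lopt$. I would take $\Hb$ to be the parity-check matrix of an $[F, (Z-2\ei)^+]_q$ MDS code, which exists over any finite field $\Fb_q$ with $q \geq F$ (for instance a Reed--Solomon code). By construction $\Hb$ has $l = F - (Z-2\ei)^+$ rows and $F$ columns, and the defining property of MDS codes tells us that every set of at most $l$ columns of $\Hb$ is linearly independent.

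Next I would verify the criterion of Theorem~\ref{thrm1}. Fix any user $k \in \kcal$. I need to show that no nonzero $\Fb_q$-linear combination of at most $2\ei$ columns of $\Hb_{\xcal_k}$ lies in $\colspan(\Hb_{\ycal_k})$. A standard reformulation is to show that for any subset $\mathcal{T} \subseteq \xcal_k$ with $|\mathcal{T}| \leq \min\{2\ei, Z\}$, the columns of $\Hb$ indexed by $\mathcal{T} \cup \ycal_k$ are linearly independent. The total number of such columns is at most $\min\{2\ei,Z\} + |\ycal_k| = \min\{2\ei,Z\} + F - Z = F - (Z-2\ei)^+ = l$, so the MDS property applies and independence follows.

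Hence the encoder $\Hb$ is valid by Theorem~\ref{thrm1}, and a linear update scheme achieving codelength $l = F - (Z-2\ei)^+$ exists over all fields $\Fb_q$ with $q \geq F$, proving $\lopt \leq F - (Z-2\ei)^+$. There is no real obstacle here beyond bookkeeping: the only points worth being careful about are the edge case $Z \leq 2\ei$ (in which $(Z-2\ei)^+ = 0$, $l = F$, and the bound reduces to the naive bound in~\eqref{eq:trivial_upper_bound}), and the field-size condition $q \geq F$ needed to invoke the existence of a suitable MDS code. Since the definition of $\lopt$ allows optimization over all finite fields, the latter is not a restriction.
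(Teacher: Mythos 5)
Your proof is correct and takes essentially the same route as the paper: both choose $\Hb$ to be the parity-check matrix of an $[F,(Z-2\ei)^+]$ MDS code over $\Fb_q$ with $q \geq F$, and both verify Theorem~\ref{thrm1} by counting $\min\{2\ei,Z\} + |\ycal_k| = l$ columns and invoking the MDS property that any $l$ columns are linearly independent.
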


This code design provides savings in communication cost with respect to naive transmission, i.e., has codelength $l < F$ if and only if $Z \geq 2\ei + 1$.

\section{A Scheme for Blind Updates in Coded Caching} \label{sec:coding_scheme}

In this section we provide a construction of a linear coding scheme for the $(\xcal,\ei)$ update problem arising from a PDA. 
We will assume that the PDA satisfies the following condition
\begin{equation} \label{eq:mild_condition_pda}
 \{k~|~p_{f_1,k} = \ast\} ~\neq~ \{k~|~p_{f_2,k} = \ast\} \text{ for any } f_1 \neq f_2,
\end{equation} 
that is, for any two distinct subfiles $W_{f_1}$ and $W_{f_2}$ the set of nodes storing $W_{f_1}$ and the set of nodes storing $W_{f_2}$ are distinct.
Several popular families of PDAs satisfy this condition, such as~\cite{SZG_IT_18,YTCC_COMML_18,TaR_IT_18,Ali_Niesen_2014}.
The communication cost of our coding scheme is $l = 2\ei(K-r) + 1$, where $r$ is the number of times $\ast$ appears in each row of the PDA.
The construction is random and yields a valid encoder matrix with probability $1 - O(q^{-1})$ when designed over the finite field $\Fb_q$.
If we use a sufficiently large finite field, then this probability is non-zero, and hence, this proves the existence of a valid code.
For ease of exposition, and considering the engineering significance, we will consider only finite fields of characteristic $2$.
The main result of this section is 

\begin{theorem} \label{thm:new_coding_scheme}
Over every sufficiently large finite field of characteristic $2$ there exists a valid linear code for the $(\xcal,\ei)$ update problem with codelength $l=2\ei(K-r)+1$ if the PDA satisfies~\eqref{eq:mild_condition_pda}.
\end{theorem}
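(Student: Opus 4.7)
The plan is to build the $l \times F$ encoder matrix $\Hb$ column by column using a subspace framework and to verify the condition of Theorem~\ref{thrm1} via a Schwartz--Zippel argument. For each user $k \in \kcal$ let $\mathbf{A}_k \in \Fb_q^{2\ei \times l}$ be a matrix of indeterminates and set $U_k := \ker(\mathbf{A}_k)$, so that generically $\mathrm{codim}(U_k) = 2\ei$. For each subfile $f \in \fcal$ let $N(f) = \{k \in \kcal : p_{f,k} = \ast\}$ denote the set of $r$ users caching $f$, and let $B_f$ be the $2\ei(K-r) \times l = (l-1) \times l$ matrix formed by vertically stacking $\{\mathbf{A}_k : k \notin N(f)\}$. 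Take the column $\hb_f$ of $\Hb$ to be a nonzero vector of $\ker(B_f)$, for instance the vector of signed $(l-1) \times (l-1)$ minors of $B_f$; then $\hb_f \in V_f := \bigcap_{k \notin N(f)} U_k$. For every $f \in \ycal_k$ we have $k \notin N(f)$, so $\hb_f \in U_k$, and hence $\colspan(\Hb_{\ycal_k}) \subseteq U_k$.

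Because $\colspan(\Hb_{\ycal_k}) \subseteq U_k$, Theorem~\ref{thrm1} reduces to the requirement that for every $k \in \kcal$ and every nonempty $T \subseteq \xcal_k$ with $|T| \leq 2\ei$, no nonzero linear combination of $\{\hb_f : f \in T\}$ lies in $U_k$, equivalently, the $2\ei \times |T|$ matrix $\mathbf{A}_k[\hb_f]_{f \in T}$ has full column rank $|T|$. Each entry of $\hb_f$ is a polynomial of degree $l-1$ in the entries of $\{\mathbf{A}_{k'} : k' \notin N(f)\}$, and crucially $\mathbf{A}_k$ does \emph{not} appear in $\hb_f$ for any $f \in T$, because $k \in N(f)$. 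Hence ``some $|T| \times |T|$ minor of $\mathbf{A}_k[\hb_f]_{f \in T}$ is nonzero'' is a polynomial condition of total degree at most $2\ei\, l$ in the joint entries of $\mathbf{A}_1, \ldots, \mathbf{A}_K$, and there are only polynomially many such $(k,T)$ conditions.

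The main obstacle is showing that, for each $(k,T)$, this polynomial is not identically zero. This is where condition~\eqref{eq:mild_condition_pda} enters: since the sets $N(f)$ for $f \in T$ are pairwise distinct and all contain $k$, the complements $N(f)^c$ are pairwise distinct subsets of $\kcal \setminus \{k\}$, so the sub-collections of $\mathbf{A}_{k'}$'s appearing in the different $B_f$'s are distinct. I would use this structural difference to exhibit an explicit assignment of the $\mathbf{A}_{k'}$ ($k' \neq k$) that renders the vectors $\{\hb_f : f \in T\}$ linearly independent in $\Fb_q^l$, by exploiting for each $f \in T$ a ``private'' index in $N(f)^c$ that distinguishes it from the others; once these are linearly independent, a generic choice of $\mathbf{A}_k$ (viewed as a surjection $\Fb_q^l \to \Fb_q^{2\ei}$) preserves linear independence of any $\leq 2\ei$ of them, yielding the required nonvanishing.

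Once the non-vanishing step is secured, a routine Schwartz--Zippel union bound finishes the proof: sampling the entries of $\mathbf{A}_1,\ldots,\mathbf{A}_K$ uniformly from $\Fb_q$ with $q = 2^m$ makes each polynomial condition fail with probability at most $(2\ei\, l)/q$, so a union bound over polynomially many conditions gives a valid $\Hb$ with probability $1 - O(q^{-1})$, which is strictly positive for all sufficiently large $m$, proving existence over every sufficiently large characteristic-$2$ field.
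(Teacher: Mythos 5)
Your construction mirrors the paper's quite closely: you build $V_k$ (what you call $U_k$) as the kernel of a random $2\ei\times l$ matrix, take $\hb_f$ spanning $\bigcap_{k\notin N(f)} U_k$, observe that $\colspan(\Hb_{\ycal_k})\subseteq U_k$, and reduce the criterion of Theorem~\ref{thrm1} to a finite list of polynomial non-vanishing conditions handled by Schwartz--Zippel. The one structural difference is that the paper picks each $U_k$ as the orthogonal complement of the column span of an $l\times 2\ei$ Vandermonde matrix parameterized by $2\ei$ scalars, rather than a fully random $2\ei\times l$ matrix. This is not cosmetic: it gives the entries of $\hb_f$ an explicit elementary-symmetric-polynomial form (the coefficients of $\prod_{k\in\ical_f}\prod_{m=1}^{2\ei}(x-a_{k,m})$), and that explicit form is what the paper exploits in the crucial step your proposal leaves open.

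That step is the genuine gap. You correctly identify the main obstacle -- showing that for each $(k,T)$ the relevant polynomial is not identically zero -- and you appeal to condition~\eqref{eq:mild_condition_pda} to claim that each $N(f)^c$, $f\in T$, has a ``private'' index distinguishing it from the other $N(f')^c$. But~\eqref{eq:mild_condition_pda} only guarantees that the sets $N(f)^c$ are \emph{pairwise} distinct $(K-r)$-subsets of $\kcal\setminus\{k\}$; it does not guarantee any of them has an element unique to it. For example, with $K-r=2$ and $T=\{f_1,f_2,f_3\}$ the complements could be $\{1,2\},\{1,3\},\{2,3\}$: every element of each set is shared with one of the others, so no set has a private index. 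Your proposed construction of an explicit assignment therefore breaks down, and the non-vanishing claim is unproved. (The underlying statement may well still be true for generic $\mathbf{A}_{k'}$ -- in the $\{1,2\},\{1,3\},\{2,3\}$ example one can argue $\hb_3\notin\ker(\mathbf{A}_1)\supseteq\mathrm{span}(\hb_1,\hb_2)$ generically -- but that is a different, pairwise kind of argument, and you would need to make it work for arbitrary collections of up to $2\ei$ distinct $(K-r)$-subsets.) The paper sidesteps this by choosing Vandermonde $U_k$'s, writing out $\det(\Bb)$ for a carefully selected $2\ei\times 2\ei$ submatrix $\Bb$, and proving by an induction over the grading of the variables $t_{k,m}$ in $m$ that one particular monomial $g^*=\prod_{j}\prod_{k\in\ical_{f_j}}\prod_{m\geq j}t_{k,m}$ appears with coefficient exactly $1$; distinctness of the $\ical_{f_j}$'s is used at each inductive step, but never through the existence of a private element. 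To salvage your version you would either need to redo this kind of graded argument for the minors of random $B_f$'s (where the structure is much more opaque), or switch to the Vandermonde subspaces as the paper does.

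One smaller point: you should also verify, before taking $\hb_f\in\ker(B_f)$, that the $(l-1)\times l$ matrix $B_f$ generically has rank $l-1$; with fully random $\mathbf{A}_{k'}$'s this is immediate, but it should be stated, and it is an additional event to include in the union bound.
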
 

Combining this result with Lemma~\ref{ub1}, for any PDA satisfying~\eqref{eq:mild_condition_pda} we have
\begin{equation} \label{eq:joint_upper_bound}
\lopt \leq \min\{2\ei(K-r)+1,F - (Z-2\ei)^+\}.
\end{equation}

We provide an overview of the construction in Section~\ref{sec:sub:construction}, and in Section~\ref{sec:sub:proofs_construction} prove that this construction yields a valid code with high probability for large finite fields.

Throughout this section we will consider the parameters $K,F,Z,S,r$ of the PDA and the update parameter $\ei$ as constants, and treat the field size $q$ as a variable. We will say that an event occurs with \emph{high probability} if its probability is at least $1- O(q^{-1})$. 

\subsection{Construction of the Encoder Matrix} \label{sec:sub:construction} 

Let $l = 2\ei(K-r) + 1$. We pick $K$ carefully designed random subspaces $V_k, k \in \kcal$, one corresponding to each user, independently of each other. The construction of $V_k$ will be described later in this sub-section. The subspaces $V_k$, $k \in \kcal$, will be chosen such that their dimension is $l - 2\ei$ with high probability.
For any collection $\ical \subset \kcal$ of nodes, we define $V_{\ical} = \cap_{k \in \ical} V_k$.

For each subfile index $f \in \fcal$, let 
\begin{equation*}
\ical_f = \{ k \in \kcal~|~p_{f,k} \neq \ast\} = \{ k \in \kcal~|~ f \notin \xcal_k\}
\end{equation*} 
denote the set of nodes which do not cache the subfile $W_f$. Thus $|\ical_f|$ is the number of non-$\ast$ entries in row $f$ of the PDA, and hence, $|\ical_f|=K-r$ for all $f \in \fcal$. 
Note that 
\begin{equation} \label{eq:V_If_definition}
V_{\ical_f} = \bigcap_{k \in \ical_f} V_k = \bigcap_{k: p_{f,k} \neq \ast} V_k.
\end{equation} 
 
With high probability, the subspaces $V_{\ical_f}, f \in \fcal$ will be $1$-dimensional (see Section~\ref{sec:sub:proofs_construction}). 
To construct the encoder matrix, we choose the $f^{\text{th}}$ column of $\Hb$, denoted as $\hb_f$, to be any non-zero vector in the $1$-dimensional subspace $V_{\ical_f}$. That is, the $F$ columns of $\Hb=[\hb_f]_{f \in \fcal}$ are the basis vectors of the subspaces $V_{\ical_f}, f \in \fcal$, respectively.

\begin{example}
\begin{figure}[!t]
\centering
\includegraphics[width=5in]{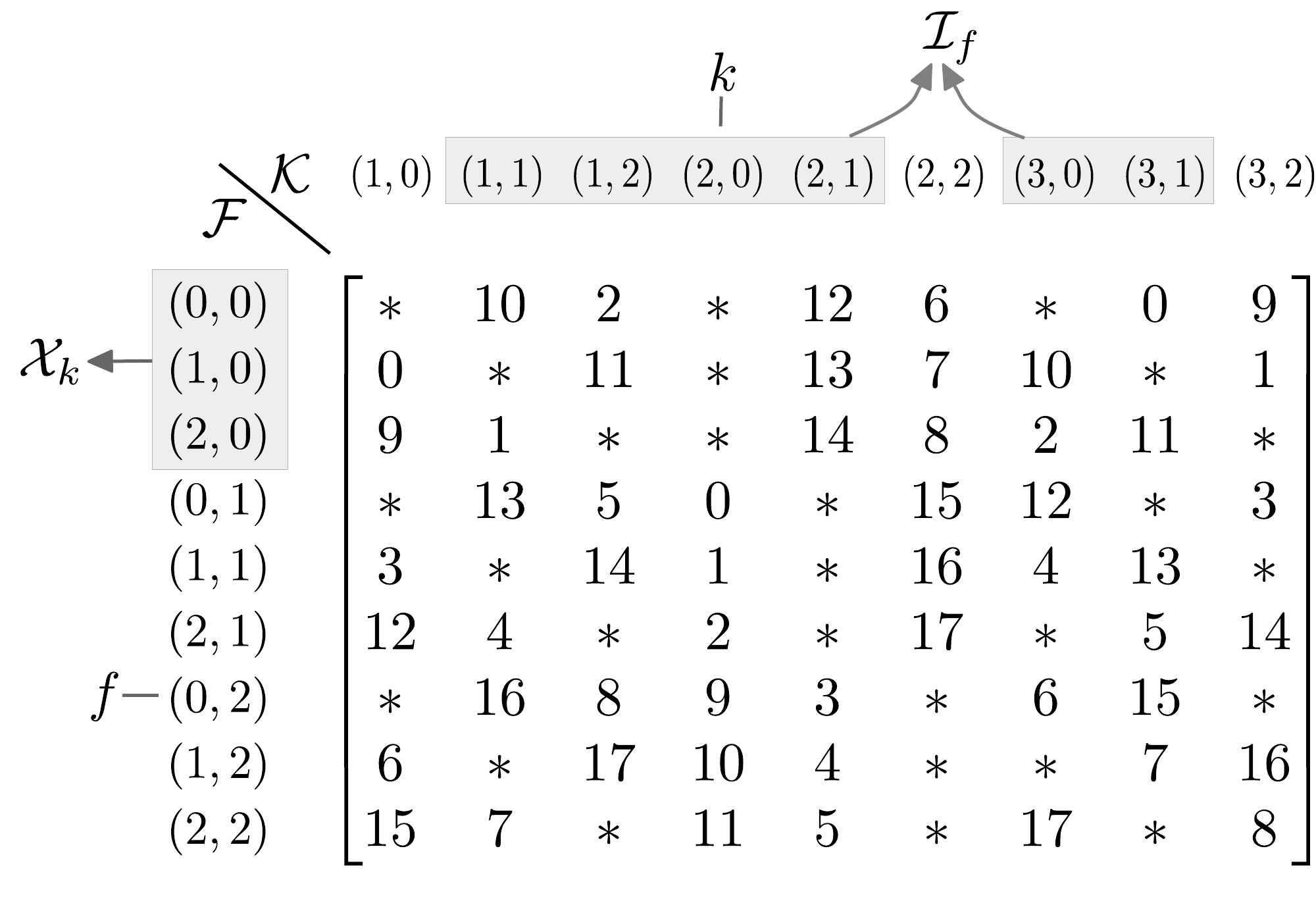}
\caption{The $9 \times 9$ matrix (enclosed within square brackets) is the PDA called $\Ab^{(3,2)}$ from~\cite{YCTC_2017}. The parameters of this PDA are $K=9$, $F=9$, $Z=3$, $S=18$, $r=3$.
The columns and rows of the PDA are indexed by $2$-tuples as shown, and this notation arises from the construction technique of this PDA. 
The set of column indices is $\kcal$ which represents the caching nodes, and the set of row indices is $\fcal$ which denotes the set of subfiles.
The subfiles cached at node $k=(2,0)$ is $\xcal_k=\{(0,0),\,(1,0),\,(2,0)\}$. The nodes that \emph{do not} contain the subfile $f=(0,2)$ is $\ical_f=\{(1,1),(1,2),(2,0),(2,1),(3,0),(3,1)\}$. Our coding scheme assigns a randomly generated subspace $V_k$ to each column $k$ of the PDA. 
Using these $K$ subspaces, we generate $F$ subspaces, one corresponding to each row of the PDA. The subspace assigned to row $f$ is $V_{\ical_f} = \cap_{k \in \ical_f} V_k$.
}
\label{fig:A32_PDA}
\hrule
\end{figure} 
See Fig.~\ref{fig:A32_PDA} for a graphical description of the assignment of subspaces $V_{\ical_f}$ to each subfile $f \in \fcal$ for the PDA from~\cite[Table~III]{YCTC_2017}.
\end{example}

\begin{example}
Consider the PDA in Example~\ref{exmp1} and let $\ei=1$. For this PDA we know that $K=4$, $F=6$ and $r=2$. Recall that $\kcal = \{1,2,3,4\}$ and $\fcal = \binom{[4]}{2}$, i.e., each $f \in \fcal$ is a $2$-subset of $\{1,2,3,4\}$.
By inspection, we see that $\ical_f = \kcal \setminus f$, i.e., $\ical_{\{1,2\}}=\{3,4\}$, $\ical_{\{1,3\}} = \{2,4\}$ etc.

To apply our construction, we use $l=2\ei(K-r) + 1 = 5$. For each $k \in \{1,\dots,4\}$, $V_k$ will be a random subspace of $\Fb_q^5$, with dimension equal to $l-2\ei=3$ with high probability. For each of the six subfiles $f$, we choose the subspace $V_{\ical_{f}}$ as follows $V_{\ical_{\{1,2\}}}=V_3 \cap V_4$, $V_{\ical_{\{1,3\}}}=V_2 \cap V_4$, $V_{\ical_{\{1,4\}}}=V_2 \cap V_3$, $V_{\ical_{\{2,3\}}}=V_1 \cap V_4$, $V_{\ical_{\{2,4\}}}=V_1 \cap V_3$ and $V_{\ical_{\{3,4\}}}=V_1 \cap V_2$.
Finally, for each $f \in \fcal$, we pick $\hb_f$ to be any non-zero vector in the subspace $V_{\ical_f}$. The encoder matrix is 
$\Hb =
\begin{bmatrix}
\hb_{\{1,2\}} & \hb_{\{1,3\}} &\hb_{\{1,4\}} &\hb_{\{2,3\}} &\hb_{\{2,4\}} &\hb_{\{3,4\}} 
\end{bmatrix}$.
\end{example}

We now describe how the subspaces $V_k$, $k \in \kcal$, are chosen. We utilize a set of $2\ei K$ random scalars $a_{k,m}$, $k \in \kcal$, $m \in [2\ei]$, which are independent and uniformly distributed over $\Fb_q$. 
We define $V_k$ through its orthogonal complement $V_k^\perp$ as follows:
$V_k^\perp$ is the column space of the $l \times 2\ei$ Vandermonde matrix generated by $A_k \triangleq \{a_{k,1},a_{k,2},\dots,a_{k,2\ei}\}$,
\begin{equation*}
\vand(A_k) \triangleq \begin{bmatrix}
1 & 1 & \cdots & 1 \\
a_{k,1} & a_{k,2} & \cdots & a_{k,2\ei} \\
\vdots & \vdots &   & \vdots \\
a_{k,2\ei}^{l-1} & a_{k,2}^{l-1} & \cdots & a_{k,2\ei}^{l-1}
\end{bmatrix},
\end{equation*}
that is, $V_k^\perp = \colspan(\vand(A_k))$. 
The scalars $a_{k,1},\dots,a_{k,2\ei}$ will be distinct with probability $1- O(q^{-1})$, and hence, $\dim(V_k^\perp) = 2\ei$ and $\dim(V_k) = l-2\ei$ with high probability.

\subsection{Technical Proofs} \label{sec:sub:proofs_construction}

We first prove a general result on random Vandermonde matrices.

\begin{lemma} \label{lem:random_vandermonde}
Let $U$ be an arbitrary $2\ei$-dimensional subspace of $\Fb_q^l$ and $V_k^\perp=\colspan(\vand(A_k))$, where $A_k = \left\{a_{k,1},\dots,a_{k,2\ei}\right\}$.
If $a_{k,1},\dots,a_{k,2\ei}$ are chosen independently and uniformly at random from $\Fb_q$, then $U \cap V_k = \{\pmb{0}\}$ with probability $1-O(q^{-1})$. 
\end{lemma}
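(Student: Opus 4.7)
The plan is to recast $U \cap V_k = \{\pmb{0}\}$ as the non-vanishing of a single polynomial in the random scalars $a_{k,1},\dots,a_{k,2\ei}$, and then invoke the Schwartz--Zippel lemma (already used elsewhere in this section).

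First I would fix an $l \times 2\ei$ matrix $\Mb$ whose columns form a basis of $U$, so $\rank(\Mb) = 2\ei$. Since $V_k = \colspan(\vand(A_k))^\perp$ equals $\ker(\vand(A_k)^T)$, a vector $\Mb\xb \in U$ lies in $V_k$ iff $\vand(A_k)^T \Mb\, \xb = \pmb{0}$. Hence $U \cap V_k = \{\pmb{0}\}$ if and only if the $2\ei \times 2\ei$ matrix $\vand(A_k)^T \Mb$ is invertible, i.e.\ $D(a_{k,1},\dots,a_{k,2\ei}) := \det\bigl(\vand(A_k)^T \Mb\bigr) \ne 0$.

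Next, associate to the $i$-th column of $\Mb$ the polynomial $p_i(x) := \sum_{m=0}^{l-1} M_{m+1,i}\, x^m$ of degree less than $l$. A direct computation shows that the $(j,i)$ entry of $\vand(A_k)^T \Mb$ equals $p_i(a_{k,j})$, so $D$ is a polynomial in $a_{k,1},\dots,a_{k,2\ei}$ of total degree at most $2\ei(l-1)$. I then need to show $D$ is not identically zero. Since $\Mb$ has full column rank, $p_1,\dots,p_{2\ei}$ are $\Fb_q$-linearly independent elements of the $l$-dimensional space of polynomials of degree $<l$. Passing if necessary to an extension field with at least $l$ elements, fix $l$ distinct points $b_1,\dots,b_l$; the evaluation map $p \mapsto (p(b_j))_{j=1}^l$ is an isomorphism (its matrix is a square invertible Vandermonde), so the vectors $(p_i(b_j))_{j=1}^l$ for $i=1,\dots,2\ei$ are linearly independent. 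Selecting $2\ei$ rows of this $l \times 2\ei$ matrix that form a non-singular minor gives a specialization of the $a_{k,j}$'s at which $D \ne 0$, so $D \not\equiv 0$.

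Finally, the Schwartz--Zippel lemma bounds the probability that a nonzero polynomial over $\Fb_q$ of total degree $d$ vanishes at a uniformly random point by $d/q$. Here $d \le 2\ei(l-1)$ is a constant (since $\ei$ and $l$ are fixed), so $\Pr[U \cap V_k \ne \{\pmb{0}\}] \le 2\ei(l-1)/q = O(q^{-1})$, which is the claim. The main obstacle is establishing $D \not\equiv 0$; the polynomial-interpolation route above is the shortest I see, though one could equivalently argue directly from the invertibility of a full $l \times l$ Vandermonde together with the rank of $\Mb$.
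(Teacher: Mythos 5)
Your proof is correct, but it follows a genuinely different route from the paper's. The paper argues directly: it treats vectors in $\Fb_q^l$ as polynomials of degree $<l$, notes that a nonzero $p\in U$ lies in $V_k$ iff all of $a_{k,1},\dots,a_{k,2\ei}$ are roots of $p$, uses the independence of the scalars to bound $P[p\in V_k]\le((l-1)/q)^{2\ei}$, and then applies a union bound over the $\frac{q^{2\ei}-1}{q-1}$ monic polynomials in $U$, giving roughly $(l-1)^{2\ei}/q$. You instead encode the whole event as the vanishing of a single determinant $D=\det\bigl(\vand(A_k)^T\Mb\bigr)$ of degree at most $2\ei(l-1)$, establish $D\not\equiv 0$ via an interpolation argument (picking $l$ distinct evaluation points, possibly in an extension, and using the full column rank of the evaluation matrix to find a nonsingular $2\ei\times 2\ei$ minor), and then invoke Schwartz--Zippel. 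Both approaches are valid; the paper's is self-contained and elementary, while yours produces a slightly sharper explicit constant ($2\ei(l-1)/q$ versus $(l-1)^{2\ei}/q$) and is stylistically consistent with the Schwartz--Zippel machinery the paper already deploys in Appendix~\ref{app:lem:construction_any_2ei_independent}. The one place to be slightly careful in your write-up is the passage to an extension field: it is fine because $D\not\equiv 0$ over an extension implies $D\not\equiv 0$ over $\Fb_q$ (the coefficients live in $\Fb_q$), and Schwartz--Zippel is then applied with the evaluation set $\Fb_q$; stating this explicitly would make the argument airtight.
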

\begin{proof}
We will treat $l$-dimensional vectors ${\bf p}=(p_1,\dots,p_l)^T$ as polynomials $p(x) = p_1 + p_2 x + p_3 x^2 + \cdots + p_lx^{l-1}$ of degree at the most $l-1$. 
A non-zero vector ${\bf p} \in V_k$ if and only if ${\bf p}^T \vand(A_k) = {\bf 0}^T$, i.e., $a_{k,1},\dots,a_{k,2\ei}$ are all roots of $p(x)$. If the degree of $p(x)$ is $d$, then $p(x)$ has at the most $d$ distinct roots in $\Fb_q$. 
Hence, the probability that a randomly chosen scalar $a_{k,m}$ is a root of $p(x)$ is at the most $d/q$. Since $a_{k,1},\dots,a_{k,2\ei}$ are independent random variables, the probability that \emph{all} of them are roots of $p(x)$ is at the most $(d/q)^{2\ei} \leq ((l-1) \, / \, q)^{2\ei}$.

We will prove the lemma by showing that the probability that $U \cap V_k$ contains a non-zero polynomial is $O(q^{-1})$. 
Since $U \cap V_k$ is a subspace, it contains a non-zero polynomial if and only if it contains a \emph{monic} polynomial.
Using the fact that there are exactly $(q^{2\ei}-1)/(q-1)$ monic polynomials in $U$, we have
\begin{align*}
P[U \cap V_k \neq \{\pmb{0}\}] 
&= P \left[ \bigcup_{\substack{p(x) \in U \setminus \{\pmb{0}\} }} \{p(x) \in V_k \}  \right] 
= P \left[ \bigcup_{\substack{p(x) \in U \\ p(x) \text{ is monic} }} \{p(x) \in V_k \}  \right] \\
&\leq \sum_{\substack{p(x) \in U \\ p(x) \text{ is monic} }} \!\!\!\!\!\!\! P\left[ p(x) \in V_k \right] 
\leq \frac{q^{2\ei}-1}{q-1} \, \left(\frac{l-1}{q}\right)^{2\ei} 
= O(q^{-1}).
\end{align*} 
\end{proof}

Towards showing that our construction succeeds we will now show that the subspaces $V_{\ical_f}, f \in \fcal$ are one-dimensional with high probability.

\begin{lemma}
The subspaces $V_{\ical_f}$, $f \in \fcal$, are $1$-dimensional with probability $1 - O(q^{-1})$.
\end{lemma}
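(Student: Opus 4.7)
The plan is to pass to orthogonal complements and exploit the concrete Vandermonde structure. By the standard linear-algebraic identity $\bigl(\bigcap_i U_i\bigr)^\perp = \sum_i U_i^\perp$, one has
\[
V_{\ical_f}^\perp \;=\; \sum_{k \in \ical_f} V_k^\perp \;=\; \sum_{k \in \ical_f} \colspan\bigl(\vand(A_k)\bigr),
\]
which realises $V_{\ical_f}^\perp$ as the column span of the $l \times 2\ei(K-r) = l \times (l-1)$ block-Vandermonde matrix $\Mb_f$ obtained by horizontally concatenating the matrices $\vand(A_k)$ for $k \in \ical_f$. Its $l-1$ columns are precisely the vectors $(1, a, a^2, \ldots, a^{l-1})^T$ as $a$ ranges over the multiset $\bigcup_{k \in \ical_f} A_k$ of $l-1$ random scalars. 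To establish $\dim V_{\ical_f} = 1$ it therefore suffices to prove that $\Mb_f$ has full column rank $l-1$.

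The next step is the classical fact that Vandermonde vectors at \emph{pairwise distinct} evaluation points are linearly independent: for any $n \le l$ distinct scalars $a_1,\dots,a_n \in \Fb_q$, the vectors $(1, a_i, a_i^2, \ldots, a_i^{l-1})^T$ are linearly independent in $\Fb_q^l$, because the top $n \times n$ submatrix has determinant $\prod_{i<j}(a_j - a_i) \ne 0$. Consequently, $\Mb_f$ has rank $l-1$ (equivalently, $\dim V_{\ical_f} = 1$) if and only if the $l-1$ evaluation points $\{a_{k,m} : k \in \ical_f,\, m \in [2\ei]\}$ are pairwise distinct.

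What remains is a routine union-bound argument. Because the $a_{k,m}$ are i.i.d.\ uniform over $\Fb_q$ and $l$ is treated as a constant, the probability that some pair of these $l-1$ scalars collides is at most $\binom{l-1}{2}q^{-1} = O(q^{-1})$. Hence $\dim V_{\ical_f} = 1$ with probability $1-O(q^{-1})$ for each fixed $f$, and a second union bound over the $F$ subfiles (also a constant) delivers the simultaneous conclusion
\[
\Pr\bigl[\,\exists\, f \in \fcal : \dim V_{\ical_f} \neq 1\,\bigr] \;\le\; F\,\binom{l-1}{2}\,q^{-1} \;=\; O(q^{-1}).
\]
The main conceptual hurdle, in my view, is spotting the dual point of view. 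A direct attack on the intersection $\bigcap_{k \in \ical_f} V_k$ would seem to demand an induction on $|\ical_f|$ together with a generalisation of Lemma~\ref{lem:random_vandermonde} to progressively larger fixed subspaces; this is considerably more delicate than the one-shot block-Vandermonde argument above, which reduces the whole question to the single combinatorial event that $l-1$ i.i.d.\ uniform scalars are distinct.
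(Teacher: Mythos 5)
Your proof is correct and takes essentially the same route as the paper: pass to orthogonal complements, recognize $V_{\ical_f}^\perp$ as the column span of the $l\times(l-1)$ Vandermonde matrix $\vand(\cup_{k\in\ical_f}A_k)$, observe that this matrix has full column rank $l-1$ precisely when the $l-1$ evaluation points are pairwise distinct, and bound the collision probability by $O(q^{-1})$. You spell out the Vandermonde-determinant justification and the final union bound over $f\in\fcal$ a bit more explicitly than the paper does, but the underlying argument is identical.
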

\begin{proof}
From the definition~\eqref{eq:V_If_definition} of $V_{\ical_f}$, we have $V_{\ical_f}^\perp = \sum_{k \in \ical_f} V_k^\perp$, which is the subspace obtained by the sum of the column spaces of $\vand(A_k)$, $k \in \ical_f$. That is, $V_{\ical_f}^\perp$ is the column space of $\vand(\cup_{k \in \ical_f} A_k)$ which is the $l \times 2\ei(K-r)$ Vandermonde matrix generated by the scalars $\cup_{k \in \ical_f}A_k = \left\{a_{k,m}~|~k \in \ical_f, m \in [2\ei] \right\}$. This matrix has rank $2\ei(K-r)=l-1$, and hence, $\dim(V_{\ical_f})=1$, as long as all the scalars in $\cup_{k \in \ical_f}A_k$ are distinct. The probability that any two random variables in $\cup_{k \in \ical_f}A_k$ take the same value is $O(q^{-1})$.
Thus the proof is complete.
\end{proof}

We will denote the set of $2\ei(K-r)$ scalars $\cup_{k \in \ical_f} A_k$ by $A_{\ical_f}$. We will use the notation $a_{k,m} \in A_{\ical_f}$ to imply that $a_{k,m}$ is such that $k \in \ical_f$ and $m \in [2\ei]$. 

We will now explicitly identify a basis vector for $V_{\ical_f}$. Since $V_{\ical_f}^\perp$ is the column span of $\vand(A_{\ical_f})$, an $l$-dimensional vector will belong to $V_{\ical_f}$ if its components are the coefficients of a polynomial with $A_{\ical_f}$ as its roots. In particular, we consider the polynomial $\prod_{a_{k,m} \in A_{\ical_f}} (x - a_{k,m}) = \prod_{k \in \ical_f} \prod_{m=1}^{2\ei}(x - a_{k,m})$ which is of degree $2\ei |\ical_f| = 2\ei(K-r) = l-1$. Since the characteristic of the underlying field $\Fb_q$ is $2$, the coefficients of this polynomial are (starting from the smallest degree term)
\begin{equation*} 
~~\prod_{a_{k,m} \in A_{\ical_f}} a_{k,m}, ~~\sum_{S \in \text{{$\binom{A_{\ical_f}}{l-2}$}}}\prod_{a_{k,m} \in S} a_{k,m},~~\dots~~,~~\sum_{S \in \binom{A_{\ical_f}}{2}} \prod_{a_{k,m} \in S} a_{k,m},~~ \sum_{a_{k,m} \in A_{\ical_f}}\!\! a_{k,m},~~1.
\end{equation*} 
The $f^{\text{th}}$ column vector $\hb_f$ of the encoder matrix is the vector whose $l$ coordinates are these coefficients, that is, the $i^{\text{th}}$ component of $\hb_f$ is as follows
\begin{equation} \label{eq:components_of_H}
h_{i,f} = \sum_{S \in \binom{A_{\ical_f}}{l-i}} \prod_{a_{k,m} \in S} a_{k,m},
\end{equation}
with the convention that multiplication over an empty set of scalars is $1$, i.e., $h_{l,f}=1$.

We use this structure of $\hb_f$, along with the Schwartz-Zippel lemma, to arrive at this next result.

\begin{lemma} \label{lem:construction_any_2ei_independent}
Any set of $2\ei$ columns of $\Hb$ are linearly independent with probability $1 - O(q^{-1})$.
\end{lemma}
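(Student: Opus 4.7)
The plan is to apply the Schwartz-Zippel lemma. For any $T \subseteq \fcal$ with $|T| = 2\ei$, consider the $l \times 2\ei$ submatrix $\Hb_T$ consisting of the columns $\{\hb_f : f \in T\}$. By~\eqref{eq:components_of_H}, every entry of $\Hb_T$ is a polynomial in the random scalars $\{a_{k,m}\}$ of degree at most $l - 1$, so every $2\ei \times 2\ei$ minor of $\Hb_T$ is a polynomial in $\{a_{k,m}\}$ whose degree is bounded independently of $q$. If at least one such minor is not identically zero, the Schwartz-Zippel lemma guarantees that its random evaluation is non-zero with probability $1 - O(q^{-1})$, and hence $\Hb_T$ has rank $2\ei$; a union bound over the $\binom{F}{2\ei}$ choices of $T$ (a constant, since $F$ and $\ei$ are fixed) then yields the lemma.

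The heart of the argument is therefore showing that the columns $\{\hb_f : f \in T\}$ are linearly independent over $\Fb_q(a)$. I would reinterpret each column $\hb_f$ as the coefficient vector of $P_f(x) = \prod_{k \in \ical_f} Q_k(x)$, where $Q_k(x) = \prod_{m=1}^{2\ei}(x - a_{k,m})$ is the degree-$2\ei$ polynomial attached to node $k$. Linear independence of the $\hb_f$'s over $\Fb_q(a)$ is then equivalent to linear independence of the polynomials $\{P_f : f \in T\}$ in $\Fb_q(a)[x]$. Two structural ingredients are available: (i) by condition~\eqref{eq:mild_condition_pda}, the sets $\ical_f$ are pairwise distinct across $f \in T$, so the factorizations of the $P_f$'s as products of $Q_k$'s are all distinct; and (ii) the $Q_k$'s are pairwise coprime generically, since their roots $\{a_{k,m}\}$ are all distinct.

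To prove the $P_f$'s are linearly independent, suppose $\sum_{f \in T} c_f P_f(x) \equiv 0$ with $c_f \in \Fb_q(a)$ and evaluate at $x = a_{k,m}$. Since $P_f(a_{k,m}) = 0$ iff $k \in \ical_f$, the relation reduces to $\sum_{f \in T,\, k \notin \ical_f} c_f P_f(a_{k,m}) = 0$ for every $k \in \kcal$ and $m \in [2\ei]$. When some node $k^*$ satisfies $k^* \notin \ical_{f^*}$ for exactly one $f^* \in T$, this immediately forces $c_{f^*} = 0$ (since $P_{f^*}(a_{k^*,m}) \neq 0$ generically), and the same elimination can be iterated on $T \setminus \{f^*\}$. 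The main obstacle is the ``covering'' case, when every $f^* \in T$ has $\kcal \setminus \ical_{f^*}$ contained in $\bigcup_{f \in T, f \neq f^*}(\kcal \setminus \ical_f)$, so no single node isolates a single $c_f$. A convenient reformulation in this case is to write $P_f = Q / R_f$ with $Q = \prod_{k \in \kcal} Q_k$ and $R_f = \prod_{k \in \kcal \setminus \ical_f} Q_k$, reducing the relation to $\sum_f c_f / R_f = 0$; multiplying by $\prod_{k \in \Sigma} Q_k$, with $\Sigma = \bigcup_{f \in T}(\kcal \setminus \ical_f)$, yields $\sum_f c_f \prod_{k \in B_f} Q_k = 0$, where the sets $B_f = \Sigma \cap \ical_f$ are distinct subsets of $\Sigma$ of common size $|\Sigma| - r$. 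Establishing linear independence of these distinct products of pairwise-coprime polynomials $\{Q_k\}_{k \in \Sigma}$ over $\Fb_q(a)$ is the remaining delicate algebraic step, and it will likely require exploiting the full $2\ei$-fold block of evaluations per node together with the coprimality of the $Q_k$'s to propagate cancellations through the coupled system.
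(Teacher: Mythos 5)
Your high-level plan (Schwartz--Zippel plus a union bound over the $\binom{F}{2\ei}$ choices of $T$) is exactly the paper's plan, and your reinterpretation of each column $\hb_f$ as the coefficient vector of $P_f(x) = \prod_{k \in \ical_f} Q_k(x)$ with $Q_k(x) = \prod_{m=1}^{2\ei}(x-a_{k,m})$ is precisely the structure the construction is built on. However, the proof is not finished. The entire content of the lemma is that some $2\ei \times 2\ei$ minor is a \emph{nonzero} polynomial in the $a_{k,m}$'s, and your argument stops at the crucial point. Your ``easy'' case (a node $k^\ast$ that lies outside a unique $\ical_{f^\ast}$) is fine, but you yourself identify a ``covering'' case where no such isolating node exists, reduce it to the claim that $2\ei$ distinct products $\prod_{k \in B_f}Q_k$ of pairwise-coprime polynomials are linearly independent over $\Fb_q(a)$, and then explicitly say that establishing this ``is the remaining delicate algebraic step'' and speculate about what it ``will likely require.'' That is a statement of hope, not a proof. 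The reduced claim is not a free lunch: it is false for arbitrary degrees (e.g., for degree-$1$ $Q_k$'s one can have $Q_1 - 2Q_2 + Q_3 \equiv 0$), so any successful argument must use the specific relationship between the number of products ($2\ei$), the degree of each $Q_k$ ($2\ei$), and the structure of the sets $\ical_f$; none of that is carried out.

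The paper closes this gap by working with coefficients rather than evaluations, and without any case split. It selects the $2\ei$ rows of $[\hb_{f_1}\cdots\hb_{f_{2\ei}}]$ indexed by $(i-1)(K-r)+1$ for $i \in [2\ei]$ to form a $2\ei\times 2\ei$ submatrix $\Bb$, expands $\det(\Bb)$ via the permutation sum into monomials indexed by a permutation $\sigma$ and subsets $S_1,\dots,S_{2\ei}$ of controlled sizes, and exhibits the monomial $g^\ast$ arising from $\sigma = \mathrm{id}$ and $S_j = \{t_{k,m} : k \in \ical_{f_j},\ m \geq j\}$. An induction on $j$, driven precisely by the pairwise distinctness of $\ical_{f_1},\dots,\ical_{f_{2\ei}}$ (i.e., condition~\eqref{eq:mild_condition_pda}), shows that $g^\ast$ occurs \emph{exactly once} in the expansion and hence cannot cancel, so $\det(\Bb)$ is a nonzero polynomial; Schwartz--Zippel then finishes the job. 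The row choice $(i-1)(K-r)+1$ is what makes the block sizes $|S_j| = (2\ei-\sigma(j)+1)(K-r)$ line up so that the inductive elimination of $\sigma$ and the $S_j$'s succeeds. If you want to salvage your polynomial-product route, you would essentially need to rediscover an equivalent combinatorial uniqueness argument; the paper's direct monomial count is the concrete missing piece.
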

\begin{proof}
See Appendix~\ref{app:lem:construction_any_2ei_independent}.
\end{proof}

\subsubsection*{Proof of Theorem~\ref{thm:new_coding_scheme}}
We are now ready to prove the main result of this section by showing that the proposed random construction of $\Hb$ satisfies the criteria of Theorem~\ref{thrm1}. Consider any $k \in \kcal$. 
From~Lemma~\ref{lem:construction_any_2ei_independent}, we know that any $2\ei$ columns of $\Hb_{\xcal_k}$ are linearly independent with high probability.

Consider any subfile $f \in \ycal_k$. Since $p_{f,k} \neq \ast$, we have $k \in \ical_f$. Hence, $V_{\ical_f} \subset V_k$, and therefore, $\hb_f$, which is a basis vector of $V_{\ical_f}$, will be in $V_k$. 
Thus, we have 
\begin{equation} \label{eq:Hyk_subset_Vk}
\colspan(\Hb_{\ycal_k}) \subset V_k.
\end{equation}

By construction, the random subspaces $V_k$, $k \in \kcal$ are statistically independent.
Now consider any subfile $f \in \xcal_k$. Since $p_{f,k} = \ast$, $k$ does not belong to $\ical_f$. Hence, for every $f \in \xcal_k$, the subspace $V_{\ical_f}$ and the vector $\hb_f \in V_{\ical_f}$ are statistically independent of $V_k$. 
It follows that the subspace spanned by a given set of $2\ei$ columns of $\Hb_{\xcal_k}$ is statistically independent of $V_k$. Let $\mathcal{A}_k$ be any $2\ei$-sized subset of $\xcal_k$. Using~\eqref{eq:Hyk_subset_Vk}, Lemmas~\ref{lem:random_vandermonde} and~\ref{lem:construction_any_2ei_independent}, and the fact that $\Hb_{\mathcal{A}_k}$ and $V_k$ are statistically independent, we have
\begin{align*}
P\left[ \colspan(\Hb_{\mathcal{A}_k}) \cap \colspan(\Hb_{\ycal_k}) \neq \{\pmb{0}\}\right] &\leq P\left[ \colspan(\Hb_{\mathcal{A}_k}) \cap V_k \neq \{\pmb{0}\}\right] \\
&\leq P[\rank(\Hb_{\mathcal{A}_k}) \neq 2\ei] + P\left[ \colspan(\Hb_{\mathcal{A}_k}) \cap V_k \neq \{\pmb{0}\}~|~\rank(\Hb_{\mathcal{A}_k}) = 2\ei\right] \\
&= O(q^{-1}) + \sum_{\substack{U \subset \Fb_q^l \\ \dim(U)=2\ei}} P[\colspan(\Hb_{\mathcal{A}_k})=U] \, P[U \cap V_k \neq \{\pmb{0}\} \,| \,\colspan(\Hb_{\mathcal{A}_k})=U ] \\
&= O(q^{-1}) + \sum_{\substack{U \subset \Fb_q^l \\ \dim(U)=2\ei}} P[\colspan(\Hb_{\mathcal{A}_k})=U] \, P[U \cap V_k \neq \{\pmb{0}\}] \\
&= O(q^{-1}) + \sum_{\substack{U \subset \Fb_q^l \\ \dim(U)=2\ei}} P[\colspan(\Hb_{\mathcal{A}_k})=U] \, O(q^{-1}) \\
&= O(q^{-1}) + O(q^{-1}) P[\rank(\Hb_{\mathcal{A}_k})=2\ei] \\
&= O(q^{-1}).
\end{align*} 
Using a union bound argument, we immediately deduce that $\Hb$ satisfies all the design conditions of Theorem~\ref{thrm1} with probability $1-O(q^{-1})$.


\section{Converse Bounds} \label{sec:lower_bounds}

We derive lower bounds on the optimal communication cost $\lopt$ in this section. 
In Section~\ref{sub:sec:converse:general} we exhibit results applicable to all PDAs. 
We first show that $2\ei \leq \lopt \leq F$ and characterize the update problems with near-extreme values of $\lopt$, i.e., problems with $\lopt$ close to $2\ei$ and $F$.
We then derive a generic lower bound on $\lopt$ (Theorem~\ref{thm:generic_lower_bound}) that will be used in the rest of this section as a benchmark for our achievable schemes.
%
In Section~\ref{sub:sec:shangguan} we apply these results to a PDA designed by Shangguan et al.~\cite{SZG_IT_18}. In Sections~\ref{sub:sec:MN-PDA} and~\ref{sub:sec:uv_PDA} we consider the Maddah-Ali \& Niesen caching scheme~\cite{Ali_Niesen_2014} and a PDA independently designed by Yan et al.~\cite{YCTC_2017} and Tang \& Ramamoorthy~\cite{TaR_IT_18}, respectively. 
We show that for these latter two families of PDAs the update schemes proposed in this paper are optimal up to a constant multiplicative factor under some operating regimes. 
We also prove the following strong result for these two classes of PDAs when the updates are sufficiently sparse: as the number of nodes in the system increases while the caching ratio is kept constant, the cost $l = 2\ei(K-r)+1$ of the scheme of Theorem~\ref{thm:new_coding_scheme} satisfies $\frac{l}{\lopt} \to 1$ if $\log_2 \ei = o(\log_2 F)$.

\subsection{General Results for any PDA} \label{sub:sec:converse:general}

\subsubsection{Near-Extreme Communication Costs} 

We know from~\eqref{eq:trivial_upper_bound} that $\lopt$ is at the most $F$. We now identify the update problem scenarios for which $\lopt$ takes this largest possible value.

\begin{lemma} \label{lem:condition_lopt_F}
For the $(\xcal,\ei)$ cache update problem based on a $(K,F,Z,S)$ PDA, $\lopt = F$ if and only if $Z \leq 2\ei$.
\end{lemma}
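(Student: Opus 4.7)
The plan is to prove the two directions of the equivalence separately, leveraging results already stated in the excerpt.

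For the ``only if'' direction, I will prove the contrapositive: if $Z > 2\ei$, then $\lopt < F$. This follows immediately from Lemma~\ref{ub1}, which gives $\lopt \leq F - (Z-2\ei)^+$. When $Z \geq 2\ei + 1$, the quantity $(Z-2\ei)^+ = Z - 2\ei \geq 1$ is strictly positive, so $\lopt \leq F - 1 < F$. This part is essentially a one-line deduction.

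For the ``if'' direction, I will combine the trivial upper bound $\lopt \leq F$ from~\eqref{eq:trivial_upper_bound} with a matching lower bound $\lopt \geq F$ obtained via Theorem~\ref{thm:lower_bound_XS}. Recall the remark made in the text immediately after Theorem~\ref{thm:lower_bound_XS}: since $|\xcal_k| = Z$ for every $k \in \kcal$ in the cache update setting, the index set $S = \{k \in \kcal : |\xcal_k| \leq 2\ei\}$ is either empty or equals $\kcal$ in its entirety. Under the hypothesis $Z \leq 2\ei$, we have $S = \kcal$, and therefore $\xcal_S = \bigcup_{k \in \kcal} \xcal_k$. Assuming $Z \geq 1$ (so that $r = KZ/F \geq 1$ and each row of the PDA has at least one $\ast$), every subfile is cached at some node, giving $\xcal_S = \fcal$ and $|\xcal_S| = F$. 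Then Theorem~\ref{thm:lower_bound_XS} yields
\begin{equation*}
\lopt \;\geq\; |\xcal_S| + \min\{2\ei,\,F-|\xcal_S|\} \;=\; F + \min\{2\ei,\,0\} \;=\; F,
\end{equation*}
so $\lopt = F$ as desired. The degenerate boundary case $Z = 0$ (where no subfiles are cached) is handled by a separate one-line observation: with no side information at any node, the server must transmit the entire file, so the naive scheme is clearly optimal and $\lopt = F$ trivially (and indeed $Z = 0 \leq 2\ei$ is consistent with the hypothesis).

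There is no real obstacle here; the lemma is essentially a direct consequence of juxtaposing the upper bound of Lemma~\ref{ub1} with the lower bound of Theorem~\ref{thm:lower_bound_XS} specialized to the cache update setting via the observation that $S \in \{\phi, \kcal\}$. The only minor subtlety is handling the $Z = 0$ edge case cleanly, but it is immediate from the trivial ``no side information'' argument.
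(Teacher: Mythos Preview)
Your proposal is correct and follows essentially the same approach as the paper: both directions are proved by combining Lemma~\ref{ub1} (for the contrapositive of ``only if'') with Theorem~\ref{thm:lower_bound_XS} specialized via $S=\kcal$, $\xcal_S=\fcal$ (for ``if''). Your explicit treatment of the $Z=0$ edge case is an addition not present in the paper, but it is harmless and does not change the argument.
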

\begin{proof}
If $Z \geq 2\ei + 1$, from Lemma~\ref{ub1}, $\lopt \leq F - (Z - 2\ei)^+ \leq F-1$. 
On the other hand, if $Z \leq 2\ei$, then $|\xcal_k|=Z \leq 2\ei$ for all $k \in \kcal$. Using Theorem~\ref{thm:lower_bound_XS} we see that $S=\kcal$, $\xcal_S = \fcal$ and $\lopt \geq |\xcal_S| + \min\{2\ei,F-|\xcal_S|\} = F - \min\{2\ei,0\} = F$.
\end{proof}

In other words, if the number of subfiles to be updated $\ei$ is $Z/2$ or more then broadcasting the updated file contents without coding is optimal with respect to communication cost.

Let us now assume a non-trivial coding scenario, i.e., $Z \geq 2\ei + 1$. Applying Theorem~\ref{thm:lower_bound_XS} to such a scenario, we see that $S=\phi$, $\xcal_S=\phi$, and hence, $\lopt \geq \min\{2\ei,F\}=2\ei$, where we have used $2\ei < Z \leq F$. Hence, $2\ei$ is a lower bound for $\lopt$ if $Z \geq 2\ei + 1$. The following result is useful in determining when the optimal communication cost $\lopt$ takes values close to $2\ei$.

\begin{lemma} \label{lem:H_2_columns}
For any $\ei \geq 1$ and any PDA, if $\Hb \in \Fb_q^{l \times F}$ is a valid encoder for a cache update problem then any two columns of $\Hb$ are linearly independent.
\end{lemma}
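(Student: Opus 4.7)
The plan is to argue by contradiction using the linear-algebraic criterion already captured in Theorem~\ref{thrm1} and Corollary~\ref{corr1}. Suppose two distinct columns $\hb_{f_1}$ and $\hb_{f_2}$ of $\Hb$ are linearly dependent, so that $\alpha\hb_{f_1} + \beta\hb_{f_2} = \pmb{0}$ for some $(\alpha,\beta) \neq (0,0)$. The proof then splits into two cases depending on whether some cache holds both $f_1$ and $f_2$. Throughout, the role of the hypothesis $\ei \geq 1$ is that $2\ei \geq 2$, which is exactly what is needed to invoke the two prior results for pairs of columns.

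In the first case some $k \in \kcal$ has $\{f_1,f_2\} \subseteq \xcal_k$. I would then form the vector $\xb \in \Fb_q^Z$ (indexed by $\xcal_k$) with $\alpha$ in coordinate $f_1$, $\beta$ in coordinate $f_2$, and zeros elsewhere. This $\xb$ is non-zero and has Hamming weight at most $2 \leq 2\ei$, yet $\Hb_{\xcal_k}\xb = \alpha\hb_{f_1}+\beta\hb_{f_2} = \pmb{0}$, which directly contradicts Corollary~\ref{corr1}.

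In the remaining case no single cache contains both subfiles. Since every subfile is cached at $r \geq 1$ nodes in the PDA framework assumed here, I can pick $k \in \kcal$ with $f_1 \in \xcal_k$; then necessarily $f_2 \in \ycal_k$. Without loss of generality take $\alpha \neq 0$ (otherwise swap the roles of $f_1$ and $f_2$), so $\hb_{f_1} = -(\beta/\alpha)\hb_{f_2} \in \colspan(\Hb_{\ycal_k})$. Letting $\xb \in \Fb_q^Z$ be the standard basis vector with a single $1$ in position $f_1$ and $\yb \in \Fb_q^{F-Z}$ carry the entry $\beta/\alpha$ in position $f_2$ and zeros elsewhere, I get $\Hb_{\xcal_k}\xb + \Hb_{\ycal_k}\yb = \hb_{f_1} + (\beta/\alpha)\hb_{f_2} = \pmb{0}$ with $\wtm(\xb)=1\leq 2\ei$, contradicting Theorem~\ref{thrm1}. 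The only place where I expect to pause is the degenerate sub-case in which one of $\hb_{f_1}, \hb_{f_2}$ is the zero vector, since then the scalar ratio $\beta/\alpha$ may be $0$ or undefined; however, Corollary~\ref{corr1} already forces every single column indexed by a cached subfile to be non-zero, so under the standing assumption $r \geq 1$ both columns are automatically non-zero and the scalar manipulations above are well-defined.
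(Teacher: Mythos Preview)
Your proposal is correct and follows essentially the same argument as the paper: both proofs first note that every column is non-zero (using $r \geq 1$ and Corollary~\ref{corr1}), then split into the two cases $\{f_1,f_2\}\subseteq \xcal_k$ for some $k$ (handled by Corollary~\ref{corr1} with $2\ei \geq 2$) versus $f_1 \in \xcal_k$, $f_2 \in \ycal_k$ (handled by Theorem~\ref{thrm1}). The only cosmetic difference is that you frame it as a contradiction with explicit vectors $\xb,\yb$, while the paper argues directly.
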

\begin{proof}
Let the columns of $\Hb$ be indexed by $\fcal$. We first prove that every column of $\Hb$ is non-zero. For any $f \in \fcal$, there must exist a $k \in \kcal$ such that $f \in \xcal_k$ since $r \geq 1$. 
From Corollary~\ref{corr1} we conclude that the column vector indexed by $f$ must be linearly independent by itself, i.e., must be non-zero. 

Consider two columns with indices $f,f' \in \fcal$. If $f,f' \in \xcal_k$ for some choice of $k$, then by Corollary~\ref{corr1} and using the fact $2\ei \geq 2$, these two column vectors are linearly independent. On the other hand, if $f,f'$ are such that $f \in \xcal_k$ and $f' \notin \xcal_k$ for some choice of $k$, then we observe that $f' \in \ycal_k$. The column indexed by $f'$ must be non-zero, and from Theorem~\ref{thrm1}, its span must not include the column indexed by $f$.
\end{proof}

We are now ready to identify the scenarios when $\lopt = 2\ei$.

\begin{lemma} \label{lem:cost_2ei}
Assume $Z \geq 2\ei + 1$. The optimal communication cost of updating the cache contents based on a $(K,F,Z,S)$ PDA satisfies $\lopt \geq 2\ei$ and attains equality if and only if $Z=F$.
\end{lemma}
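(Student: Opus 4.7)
The plan is to handle the inequality and the equality characterization separately. The bound $\lopt \geq 2\ei$ follows immediately from Theorem~\ref{thm:lower_bound_XS}: under the hypothesis $Z \geq 2\ei+1$ we have $|\xcal_k|=Z > 2\ei$ for every $k \in \kcal$, so $S=\phi$, $\xcal_S=\phi$, and the bound specialises to $\lopt \geq \min\{2\ei,F\}=2\ei$ (using $2\ei < Z \leq F$).

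For the forward direction of the equality claim, if $Z=F$ then $(Z-2\ei)^+ = F-2\ei$, and Lemma~\ref{ub1} gives $\lopt \leq F-(F-2\ei)=2\ei$; combined with the lower bound just established, $\lopt = 2\ei$.

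The core of the argument is the converse: showing that $Z < F$ forces $\lopt \geq 2\ei+1$. My plan is to argue by contradiction — suppose some valid encoder $\Hb \in \Fb_q^{2\ei \times F}$ exists, and fix any $k \in \kcal$. Since $Z \geq 2\ei+1$, Corollary~\ref{corr1} says that every $2\ei$-subset of columns of $\Hb_{\xcal_k}$ is linearly independent, and because the ambient space has dimension $2\ei$, any such subset must in fact span all of $\Fb_q^{2\ei}$. On the other hand $Z < F$ implies $\ycal_k \neq \phi$, and Lemma~\ref{lem:H_2_columns} guarantees that any column $\hb_{f'}$ with $f' \in \ycal_k$ is nonzero. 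Expressing $\hb_{f'}$ in a chosen basis of $\Fb_q^{2\ei}$ formed by $2\ei$ columns of $\Hb_{\xcal_k}$ yields a nonzero coefficient vector $\xb \in \Fb_q^Z$ of Hamming weight at most $2\ei$ with $\Hb_{\xcal_k}\xb = \hb_{f'}$. Taking $\yb \in \Fb_q^{F-Z}$ to be the standard basis vector indexed by $f'$, we obtain $\Hb_{\xcal_k}\xb - \Hb_{\ycal_k}\yb = \pmb{0}$, contradicting the criterion of Theorem~\ref{thrm1}.

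I do not anticipate a genuine obstacle: the upper bound side is supplied verbatim by Lemma~\ref{ub1}, and all ingredients for the converse are already packaged in the preceding lemmas. The only subtlety is verifying that the hypothesis $Z \geq 2\ei+1$ is used in the right way — it is needed both to invoke Corollary~\ref{corr1} (so that $2\ei$ columns of $\Hb_{\xcal_k}$ are actually available and independent) and to conclude that those $2\ei$ columns span the whole ambient $\Fb_q^{2\ei}$ so that $\hb_{f'}$ can be written in terms of them.
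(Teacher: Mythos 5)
Your proposal is correct and uses the same ingredients as the paper's proof (Corollary~\ref{corr1}, Lemma~\ref{lem:H_2_columns}, Theorem~\ref{thrm1}, and Lemma~\ref{ub1}). The only difference is packaging: the paper argues directly that the $2\ei$ columns of $\Hb_{\xcal_k}$ together with one non-zero column from $\Hb_{\ycal_k}$ form a linearly independent set, so $\rank(\Hb) \geq 2\ei+1$, whereas you assume $l=2\ei$ for contradiction and show the $\ycal_k$-column would then lie in the span of $2\ei$ columns of $\Hb_{\xcal_k}$, violating Theorem~\ref{thrm1}; these are logically equivalent.
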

\begin{proof}
We have already shown earlier in this section that $\lopt \geq 2\ei$ if $Z \geq 2\ei + 1$.

Now suppose $Z=F$. Using the MDS-based coding scheme and Lemma~\ref{ub1}, we have $\lopt \leq F - (Z - 2\ei) = 2\ei$.

On the other hand, consider the case $Z < F$. For any $k \in \kcal$, $|\ycal_k| \geq 1$. Now consider any $2\ei$ columns whose indices lie in $\xcal_k$ and any one column with index in $\ycal_k$. From Lemma~\ref{lem:H_2_columns}, the column with index in $\ycal_k$ is non-zero. Further, using Theorem~\ref{thrm1}, this column along with the $2\ei$ former columns form a linearly independent set. Hence, $\lopt \geq \rank(\Hb) \geq 2\ei + 1$.
\end{proof}

The scenario when $\lopt=2\ei$, or equivalently $Z=F$, corresponds to a trivial cache placement since every node caches all the subfiles. The next possible values of $\lopt$ are $2\ei+1$ and $2\ei+2$, which we consider next.

\begin{lemma} \label{lem:cost_2ei_plus}
Assume $Z \geq 2\ei+1$. 
The optimal communication cost $\lopt$ is
\begin{enumerate}
\item $\lopt = 2\ei + 1$ if $Z=F-1$, and
\item $\lopt = 2\ei + 2$ if $Z=F-2$.
\end{enumerate} 
\end{lemma}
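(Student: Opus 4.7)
The plan is to match lower and upper bounds for both parts separately. The upper bounds are immediate from Lemma~\ref{ub1}: since $Z \geq 2\ei+1$, we have $\lopt \leq F - (Z-2\ei)$, which specializes to $2\ei+1$ when $Z = F-1$ and to $2\ei+2$ when $Z = F-2$.

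For the lower bound in part (1), I would simply observe that $Z = F-1 < F$, so Lemma~\ref{lem:cost_2ei} directly yields $\lopt \geq 2\ei+1$, matching the upper bound.

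For the lower bound in part (2), I would show that every valid encoder $\Hb$ has rank at least $2\ei+2$, so $l \geq 2\ei+2$. Fix any $k \in \kcal$; since $Z = F-2$, we have $|\ycal_k| = 2$. Write $\ycal_k = \{f_1, f_2\}$. By Lemma~\ref{lem:H_2_columns} any two columns of $\Hb$ are linearly independent, so $\dim(\colspan(\Hb_{\ycal_k})) = 2$. Because $|\xcal_k| = Z \geq 2\ei+1$, I can pick $2\ei$ distinct columns $\hb_{g_1},\dots,\hb_{g_{2\ei}}$ from $\Hb_{\xcal_k}$. I would then argue that these $2\ei$ columns together with $\hb_{f_1}, \hb_{f_2}$ form a linearly independent set: from a putative vanishing combination $\sum_j \alpha_j \hb_{g_j} + \beta_1 \hb_{f_1} + \beta_2 \hb_{f_2} = \pmb{0}$, rearrange to get $\sum_j \alpha_j \hb_{g_j} \in \colspan(\Hb_{\ycal_k})$; Theorem~\ref{thrm1} then forces all $\alpha_j = 0$, after which the linear independence of $\hb_{f_1}, \hb_{f_2}$ forces $\beta_1 = \beta_2 = 0$. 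Hence $\rank(\Hb) \geq 2\ei+2$.

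I do not foresee any significant obstacle; every needed ingredient (Lemma~\ref{ub1}, Lemma~\ref{lem:cost_2ei}, Lemma~\ref{lem:H_2_columns}, and Theorem~\ref{thrm1}) is already available, and the proof is a brief assembly. The only minor point to verify is that $Z \geq 2\ei+1$ ensures $|\xcal_k| > 2\ei$ so the required $2\ei$ columns from $\Hb_{\xcal_k}$ exist, which is immediate.
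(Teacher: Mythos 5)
Your proposal is correct and takes essentially the same approach as the paper: Lemma~\ref{ub1} for both upper bounds, Lemma~\ref{lem:cost_2ei} for the lower bound when $Z=F-1$, and the rank argument via Lemma~\ref{lem:H_2_columns} together with Theorem~\ref{thrm1} for the lower bound when $Z=F-2$. Your explicit vanishing-combination step is just a slightly more spelled-out version of the same rank argument.
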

\begin{proof}
Using the code design based on the parity-check matrix of MDS codes, we know that $\lopt \leq 2\ei+1$ and $\lopt \leq 2\ei + 2$ for $Z=F-1$ and $Z=F-2$, respectively.

To obtain a converse for $Z=F-1$, we use Lemma~\ref{lem:cost_2ei} which states that $\lopt=2\ei$ if and only if $Z=F$. Hence, $Z=F-1$ necessarily implies $\lopt > 2\ei$, i.e, $\lopt \geq 2\ei+1$.


We now prove the converse for $Z=F-2$.
Let $\Hb$ be any valid encoder matrix. For any $k \in \kcal$, $|\ycal_k| = 2$. By Lemma~\ref{lem:H_2_columns} these two columns are linearly independent, and by Theorem~\ref{thrm1} their column span intersects with the column span of any $2\ei$ columns from $\Hb_{\xcal_k}$ only at $\pmb{0}$. Since these $2\ei$ columns are linearly independent, we deduce that $\rank(\Hb) \geq 2\ei + 2$. Hence, $\lopt \geq 2\ei + 2$.
\end{proof}

\subsubsection{A generic lower bound} \label{sub:sec:generic_lower_bound}

The following lower bound is applicable to any PDA. In the sequel we will use this result to derive good lower bounds for specific well-known families of PDAs from the literature. Consider any choice of $L \in [K]$. Our lower bound is obtained by considering any $L$ out of the $K$ caching nodes in a sequence, say $\pi_1,\dots,\pi_L \in \kcal$, and counting the number of subfiles cached in each node $\pi_i$ which are not cached in any of the earlier nodes $\pi_1,\dots,\pi_{i-1}$ in the sequence. 

\begin{theorem} \label{thm:generic_lower_bound}
For any $(\xcal,\ei)$ update problem and any choice of $L \leq K$, let $\pi_1,\dots,\pi_L \in \kcal$ be indices of distinct nodes. Then
$\lopt \geq \sum_{i=1}^{L} \min \left\{\,2\ei, ~\left|\xcal_{\pi_i} \setminus \left(\xcal_{\pi_1} \cup \cdots \cup \xcal_{\pi_{i-1}}\right) \right|~\right\}$.
\end{theorem}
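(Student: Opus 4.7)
The plan is to show that any valid encoder matrix $\Hb$ of codelength $l$ satisfies $\rank(\Hb) \geq \sum_{i=1}^L \min\{2\ei, |\Delta_i|\}$, where $\Delta_i := \xcal_{\pi_i} \setminus \left(\xcal_{\pi_1} \cup \cdots \cup \xcal_{\pi_{i-1}}\right)$. Since $\lopt = l \geq \rank(\Hb)$ for any valid encoder, the theorem follows by exhibiting $\sum_{i=1}^L \min\{2\ei, |\Delta_i|\}$ linearly independent columns of $\Hb$.

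For each $i \in [L]$, I would pick an arbitrary subset $\bcal_i \subseteq \Delta_i$ with $|\bcal_i| = \min\{2\ei, |\Delta_i|\}$ and let $\bcal = \bigcup_{i=1}^L \bcal_i$. The sets $\bcal_i$ are pairwise disjoint, since $\Delta_i \subseteq \xcal_{\pi_i}$ is disjoint from $\acal_{i-1} = \xcal_{\pi_1} \cup \cdots \cup \xcal_{\pi_{i-1}}$, forcing $\Delta_i \cap \Delta_j = \emptyset$ for $j < i$. Thus $|\bcal| = \sum_i |\bcal_i|$, and the claim to be proved is that the columns $\{\hb_f : f \in \bcal\}$ are linearly independent.

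The crucial observation driving the proof is that $\Delta_i \subseteq \ycal_{\pi_j}$ whenever $j < i$: indeed, $\xcal_{\pi_j} \subseteq \acal_{i-1}$ in that case, so $\Delta_i \cap \xcal_{\pi_j} \subseteq \Delta_i \cap \acal_{i-1} = \emptyset$. Consequently $\bcal_i \subseteq \ycal_{\pi_j}$ for every $i > j$. To prove the independence claim by contradiction, suppose there exist vectors $\mathbf{t}_i$ supported on $\bcal_i$, not all zero, with $\sum_{i=1}^L \Hb_{\bcal_i}\mathbf{t}_i = \mathbf{0}$, and let $j = \min\{i : \mathbf{t}_i \neq \mathbf{0}\}$. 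Rearranging,
\[
\Hb_{\bcal_j}\mathbf{t}_j \,=\, -\sum_{i > j} \Hb_{\bcal_i}\mathbf{t}_i.
\]
By the observation, the right-hand side lies in $\colspan(\Hb_{\ycal_{\pi_j}})$. On the left, since $\bcal_j \subseteq \xcal_{\pi_j}$, I can write $\Hb_{\bcal_j}\mathbf{t}_j = \Hb_{\xcal_{\pi_j}}\mathbf{u}$ with $\mathbf{u}$ equal to $\mathbf{t}_j$ on $\bcal_j$ and zero elsewhere; hence $\mathbf{u} \neq \mathbf{0}$ and $\wtm(\mathbf{u}) = \wtm(\mathbf{t}_j) \leq |\bcal_j| \leq 2\ei$. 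This directly contradicts Theorem~\ref{thrm1} applied at node $\pi_j$.

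The main subtlety is the choice of the \emph{smallest} index $j$ with $\mathbf{t}_j \neq \mathbf{0}$ rather than the largest. Taking the largest $j$ would force columns from $\bcal_i$ with $i < j$ (which may well lie in $\xcal_{\pi_j}$) onto the left-hand side of the dependency, potentially inflating the $\xcal_{\pi_j}$-support weight beyond $2\ei$ and rendering Theorem~\ref{thrm1} inapplicable. Picking the smallest $j$ instead leverages the key observation so that every later $\bcal_i$ lies in $\ycal_{\pi_j}$, which is precisely the side on which Theorem~\ref{thrm1} imposes no weight restriction, thereby keeping the weight on the $\xcal_{\pi_j}$-side at most $|\bcal_j| \leq 2\ei$ as required.
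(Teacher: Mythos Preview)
Your proposal is correct and follows essentially the same approach as the paper: choose subsets $\bcal_i \subseteq \xcal_{\pi_i} \setminus (\xcal_{\pi_1} \cup \cdots \cup \xcal_{\pi_{i-1}})$ of size $\min\{2\ei,|\Delta_i|\}$, observe that $\bcal_i \subseteq \ycal_{\pi_j}$ for $i>j$, and invoke Theorem~\ref{thrm1} at the earliest node with a nonzero contribution to conclude linear independence of the corresponding columns. The paper phrases the last step structurally (each $\Hb_{\mathcal{A}_{\pi_i}}$ has independent columns and its span meets the span of later blocks trivially), whereas you phrase it as a contradiction at the smallest index $j$; these are equivalent formulations of the same argument.
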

\begin{proof}
For each $i \in [L]$, let $\mathcal{A}_{\pi_i}$ be any subset of $\xcal_{\pi_i} \setminus \left(\xcal_{\pi_1} \cup \cdots \cup \xcal_{\pi_{i-1}}\right)$ of size $\min \{2\ei, \, |\xcal_{\pi_i} \setminus (\xcal_{\pi_1} \cup \cdots \cup \xcal_{\pi_{i-1}}) | \}$. Note that $\mathcal{A}_{\pi_1},\dots,\mathcal{A}_{\pi_L}$ are disjoint subsets of $\fcal$, and that the lower bound claimed in the theorem is $\sum_{i=1}^{L} |\mathcal{A}_{\pi_i}| = |\mathcal{A}_{\pi_1} \cup \cdots \cup \mathcal{A}_{\pi_L}|$.
We will show that if $\Hb$ is any valid encoder matrix for this update problem, then the columns of $\Hb$ indexed by $\mathcal{A}_{\pi_1} \cup \cdots \cup \mathcal{A}_{\pi_L}$ are linearly independent. Then, the number of transmissions required, which is equal to the number of rows of $\Hb$, is lower bounded by the rank of $\Hb$, which in turn is lower bounded by the size $|\mathcal{A}_{\pi_1} \cup \cdots \cup \mathcal{A}_{\pi_L}|$ of this set of linearly independent columns. 

Observe that $\mathcal{A}_{\pi_i} \subset \xcal_{\pi_i}$ and $|\mathcal{A}_{\pi_i}| \leq 2\ei$. Thus, from Corollary~\ref{corr1}, the columns of $\Hb$ indexed by $\mathcal{A}_{\pi_i}$ are linearly independent. For any $j > i$, we note that $\mathcal{A}_{\pi_j} \cap \xcal_{\pi_i} = \phi$, i.e., $\mathcal{A}_{\pi_j} \subset \ycal_{\pi_i}$. Considering all values of $j > i$, we then deduce that $\mathcal{A}_{\pi_{i+1}} \cup \cdots \cup \mathcal{A}_{\pi_L} \subset \ycal_{\pi_i}$.
Now applying Theorem~\ref{thrm1}, we see that the column span of $\Hb_{\mathcal{A}_{\pi_i}}$ intersects with the column span of $\Hb_{\mathcal{A}_{\pi_{i+1}} \cup \cdots \cup \mathcal{A}_{\pi_L}}$ only at $\pmb{0}$. 
In summary, the matrix $\Hb_{\mathcal{A}_{\pi_{1}} \cup \cdots \cup \mathcal{A}_{\pi_L}}$ can be partitioned into submatrices $\Hb_{\mathcal{A}_{\pi_1}},\dots,\Hb_{\mathcal{A}_{\pi_L}}$, and each submatrix $\Hb_{\mathcal{A}_{\pi_i}}$ has linearly independent columns and its column span intersects trivially with the column span of the submatrices appearing later in the sequence, i.e., with the column span of $\Hb_{\mathcal{A}_{\pi_{i+1}} \cup \cdots \cup \mathcal{A}_{\pi_L}}$. Hence, $\Hb_{\mathcal{A}_{\pi_{1}} \cup \cdots \cup \mathcal{A}_{\pi_L}}$ has linearly independent columns.
\end{proof}

\subsection{The Construction-I PDA of Shangguan et al.} \label{sub:sec:shangguan} 

In this sub-section we consider a class of PDAs given by Construction-I of Shangguan et~al.~\cite{SZG_IT_18} using hypergraphs and equivalently by Yan et al.~\cite{YTCC_COMML_18} using strong edge coloring of bipartite graphs. This family of PDAs includes the Maddah-Ali \& Niesen placement as a special case. We will derive a lower bound for this class of PDAs, and then specialize our bound to the Maddah-Ali \& Niesen PDA in Section~\ref{sub:sec:MN-PDA}.
This family of PDAs is characterized by three positive integers $n,a,b$ such that $a + b \leq n$. The subfiles are indexed by $a$-sized subsets of $[n]$, i.e., $\fcal=\binom{[n]}{a}$, and the users are indexed by $b$-sized subsets of $[n]$, $\kcal= \binom{[n]}{b}$. A subfile $f \in \binom{[n]}{a}$ is cached at user $k \in \binom{[n]}{b}$ if and only if $f \cap k \neq \phi$. This PDA has $F=\binom{n}{a}$, $K=\binom{n}{b}$, $Z = \binom{n}{a} - \binom{n-b}{a}$.

We will assume that $Z \geq 2\ei+1$, since otherwise we know that $\lopt=F$.
To derive a lower bound on the optimal communication cost of updating the cache contents, we use Theorem~\ref{thm:generic_lower_bound} with $L=n-b+1$ nodes indexed by
\begin{equation*}
\pi_1 = \{1,\dots,b\}, \, \pi_2 = \{2,\dots,b+1\},\,\dots,\, \pi_i=\{i,\dots,b+i-1\}, \, \dots, \, \pi_L = \{n-b+1,\dots,n\}.
\end{equation*} 
Observe that $|\xcal_{\pi_1}|=Z > 2\ei$, and for any $i=2,\dots,L$, $\xcal_{\pi_i} \setminus (\xcal_{\pi_1} \cup \cdots \cup \xcal_{\pi_{i-1}})$ is the collection of all $a$-sized subsets of $[n]$ that contain $b+i-1$ and do not contain any of $1,\dots,b+i-2$. Thus, $|\xcal_{\pi_i} \setminus (\xcal_{\pi_1} \cup \cdots \cup \xcal_{\pi_{i-1}})| = \binom{n-(b+i-1)}{a-1}$. From Theorem~\ref{thm:generic_lower_bound}, we have
\begin{align}
\lopt &\geq \min\{2\ei,|\xcal_{\pi_1}|\} + \sum_{i=2}^{n-b+1} \min\{2\ei, |\xcal_{\pi_i} \setminus (\xcal_{\pi_1} \cup \cdots \cup \xcal_{\pi_{i-1}})|\} \nonumber \\
&= 2\ei + \sum_{i=2}^{n-b+1} \min\left\{2\ei, \binom{n-b-i+1}{a-1}\right\} \nonumber \\
&= 2\ei + \sum_{j=0}^{n-b-1} \min\left\{2\ei, \binom{j}{a-1} \right\}. \label{eq:shangguan:1}
\end{align} 
Let $a_0$ be the smallest integer such that $\binom{a_0}{a-1} \geq 2\ei$. Then $\min\{2\ei,\binom{j}{a-1}\} = \binom{j}{a-1}$ if and only if $j \leq a_0-1$. Using this in~\eqref{eq:shangguan:1} we have, if $a_0 \leq n-b$ then
\begin{align}
\textstyle \lopt &\geq 2\ei + \sum_{j=0}^{a_0-1} \binom{j}{a-1} + \sum_{j=a_0}^{n-b-1} 2\ei 
= 2\ei + \binom{a_0}{a} + 2\ei(n-b-a_0) = 2\ei(n-b-a_0+1) + \binom{a_0}{a}. \label{eq:shangguan:2}
\end{align} 

\begin{example}
This example shows that the bound~\eqref{eq:shangguan:2} is tight. Consider $n=5$, $a=b=2$ and $\ei=1$. Then, $F=K=\binom{5}{2}=10$, $Z=\binom{5}{2} - \binom{3}{2}=7$, which is greater than $2\ei$, and the value of $a_0$ is $2$. Then the lower bound~\eqref{eq:shangguan:2} on $\lopt$ is $5$. The achievability scheme using the parity-check matrix of MDS codes has codelength $F-(Z-2\ei)^+=5$, which meets this lower bound. Hence, $\lopt=5$ for this update problem.
\end{example}

We now identify the optimal communication cost for the family of PDAs corresponding to $a=1$, i.e., $\fcal=[n]$, $\kcal=\binom{[n]}{b}$ and a subfile $f \in [n]$ is cached at node $k \in \binom{[n]}{b}$ if and only if $f \in k$. In this case $F=n$ and $Z=b$. Assuming $Z=b \geq 2\ei+1$ and applying~\eqref{eq:shangguan:1}, we have
\begin{equation*}
\lopt \geq 2\ei + \sum_{j=0}^{n-b-1} \min\left\{ 2\ei, \binom{j}{0} \right\} = 2\ei + n - b.
\end{equation*} 
From Lemma~\ref{ub1}, $\lopt \leq F - Z + 2\ei = 2\ei + n - b$. Hence, the optimal communication cost is $\lopt=2\ei+n-b$.

\subsection{The Maddah-Ali \& Niesen PDA} \label{sub:sec:MN-PDA}

The placement scheme of Maddah-Ali and Niesen, denoted as MN PDA, is the special case of the family considered in Section~\ref{sub:sec:shangguan} corresponding to $b=1$. 
We will follow the standard notation used in the literature, i.e., $\kcal=[K]$, $\fcal=\binom{[K]}{t}$ and a subfile $f \in \binom{[K]}{t}$ is cached at node $k \in [K]$ if and only if $k \in f$. Hence, $F=\binom{K}{t}$, $Z=\binom{K-1}{t-1}$ and $r=t$. Compared to the notation used in Section~\ref{sub:sec:shangguan}, $t$ and $K$ replace the symbols $a$ and $n$, respectively. 

Note that $\lopt=F$ when $Z = \binom{K-1}{t-1} \leq 2\ei$. In the rest of this sub-section we will assume $Z= \binom{K-1}{t-1} \geq 2\ei + 1$, where $\ei \geq 1$. This implies $t \geq 2$. Since $\binom{K-1}{t-1} > 2\ei$, the smallest integer $a_0$ such that $\binom{a_0}{t-1} \geq 2\ei$ satisfies $a_0 \leq K-1 = n-b$. Hence, the lower bound~\eqref{eq:shangguan:2} holds.

The \emph{caching ratio} of the MN PDA is $Z/F = t/K$, which is the fraction of the overall library cached at each node; this is denoted as $M/N$ in the literature, where $N$ is the number of files in the library and $M$ is the cache size at each node in terms of number of files. It is common to assume that as the number of nodes in the system varies, the cache size of individual nodes remain the same, and hence, the caching ratio remains constant. We will assume that the caching ratio is a constant $\beta$, $0 < \beta < 1$, and $t=\beta K$. 
The number of subfiles $F=\binom{K}{\beta K}=O(2^{K H_2(\beta)})$, where $H_2(\beta) = \beta \log_2 (1/\beta) + (1-\beta) \log_2 (1/(1-\beta))$ is the binary entropy function.

We will consider three different operating regimes for updating the MN PDA based on the sparsity level of the update, i.e., how $\ei$ varies with $F$. 
For each case we show that the communication cost of the achievability schemes proposed in this paper are either optimal or within a constant multiplicative gap from the optimal scheme.

\subsubsection{Regime 1, $\ei \leq t/2$} \label{regime1_mnpda}
In this regime the update scheme of Section~\ref{sec:coding_scheme} yields the optimal communication cost. Observe that the communication cost of this scheme is $2\ei(K-t) + 1$. To show that this cost is optimal, we use the lower bound~\eqref{eq:shangguan:2}. Here $a_0$ is the smallest integer such that $\binom{a_0}{t-1} \geq 2\ei$. Since $\binom{t-1}{t-1} = 1 < 2\ei$ and $\binom{t}{t-1}=t \geq 2\ei$, we conclude that $a_0=t$. Hence,~\eqref{eq:shangguan:2} implies $\lopt \geq 2\ei(K-t)+1$. Thus, we have $\lopt = 2\ei(K-t) + 1$.

\subsubsection{Regime 2, Sparse Update} \label{regime2_mnpda}
We assume that the number of subfiles to be updated grows exponentially with $K$, i.e., $\ei = 2^{\gamma K}$ for some constant $\gamma$. 
If $\gamma < H_2 (\beta)$, then $\frac{\ei}{F} \to 0$ as $K \to \infty$. 
We will show that if $\gamma$ is sufficiently small and $K$ large, the communication cost of the update scheme of Section~\ref{sec:coding_scheme} is within a constant multiplicative gap from the optimal cost. 

\begin{lemma} \label{lem:MN:sparse_update}
Let $\ei = 2^{\gamma K}$ and $t \geq 2$. If $t=\beta K$, 
the communication cost $l=2\ei(K-t)+1$ of the update scheme in Theorem~\ref{thm:new_coding_scheme} satisfies
\begin{equation} \label{eq:lem:MN:sparse_update:1}
 \frac{l}{\lopt} \leq \frac{1 - \beta + \frac{1}{2\ei K}}{\left[ 1 - \frac{2}{K} - 2^{1/(\beta K - 1)} \left(\beta - \frac{1}{K}\right) 2^{\gamma K/(\beta K - 1)}  \right]^+}.
\end{equation} 
\end{lemma}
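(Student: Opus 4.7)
The plan is to combine the achievability cost $l = 2\ei(K-t)+1$ from Theorem~\ref{thm:new_coding_scheme} with the converse~\eqref{eq:shangguan:2} specialized to the Maddah-Ali \& Niesen case ($n = K$, $b = 1$, $a = t$), which reads $\lopt \geq 2\ei(K - a_0) + \binom{a_0}{t}$, where $a_0$ is the smallest integer satisfying $\binom{a_0}{t-1} \geq 2\ei$. Since the binomial term only strengthens the bound, I would discard it and work with the cleaner inequality $\lopt \geq 2\ei(K-a_0)$. Substituting $t = \beta K$ and dividing both numerator and denominator of $l/\lopt$ by $2\ei K$ reduces the entire task to producing a sufficiently tight upper bound on $a_0/K$.

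For the estimate on $a_0$ I would invoke its defining minimality $\binom{a_0-1}{t-1} < 2\ei$, together with the elementary lower bound $\binom{m}{k} \geq (m/k)^k$ (valid for $m \geq k$), to obtain
\[
\left(\frac{a_0 - 1}{t-1}\right)^{t-1} \;<\; 2 \cdot 2^{\gamma K},
\]
which solves to $a_0 - 1 < (t-1)\, 2^{1/(t-1)}\, 2^{\gamma K/(t-1)}$. Substituting $t = \beta K$ and dividing through by $K$ yields an upper bound of the form $a_0/K \leq O(1/K) + (\beta - \tfrac{1}{K}) \, 2^{1/(\beta K - 1)} \, 2^{\gamma K/(\beta K - 1)}$. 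Plugging this into $1 - a_0/K$ in the denominator of the ratio, and attaching the positive-part operator $[\,\cdot\,]^+$ to handle the degenerate case when the bracketed expression is non-positive (in which case the stated inequality is vacuously true), produces~\eqref{eq:lem:MN:sparse_update:1}.

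The only substantive step is the binomial estimate; the rest is routine bookkeeping. The principal difficulty I anticipate is calibrating the $O(1/K)$ slack tightly enough that the denominator of~\eqref{eq:lem:MN:sparse_update:1} matches exactly, since $a_0$ is an integer while $(t-1)(2\ei)^{1/(t-1)}$ is not, and the rounding gap must be tracked through to the final form. A useful sanity check on the statement itself is that when $\gamma < \beta$ the exponent $\gamma K/(\beta K - 1) \to \gamma/\beta < 1$ as $K \to \infty$, so the second term in the bracket tends to $\beta \cdot 2^{\gamma/\beta}$, which is strictly less than $1$ for sufficiently small $\gamma$, ensuring the denominator stays positive and the bound is non-vacuous in the sparse regime of interest.
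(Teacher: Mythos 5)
Your proposal is correct and follows essentially the same route as the paper: start from the converse~\eqref{eq:shangguan:2} specialized to the MN case, bound $a_0$ via the elementary estimate $\binom{m}{k}\geq (m/k)^k$, substitute $t=\beta K$, and divide through by $2\ei K$. The two places where you deviate both happen to tighten rather than weaken the argument. First, you discard the $\binom{a_0}{t}$ term outright and keep $\lopt\geq 2\ei(K-a_0)$; the paper instead retains it, rewrites $\binom{a_0}{t}=\binom{a_0}{t-1}\frac{a_0-t+1}{t}$, and after simplifying settles for the strictly weaker $\lopt\geq 2\ei(K-1-a_0)$. Second, you apply the binomial estimate to the minimality condition $\binom{a_0-1}{t-1}<2\ei$ directly, whereas the paper constructs $a'=\lceil(t-1)(2\ei)^{1/(t-1)}\rceil$, verifies $\binom{a'}{t-1}\geq 2\ei$ so that $a_0\leq a'$, and then pays an extra unit when bounding the ceiling. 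Both routes give $a_0<(t-1)(2\ei)^{1/(t-1)}+1$, but combining your two choices produces $\lopt>2\ei K\big(1-\tfrac{1}{K}-(\beta-\tfrac{1}{K})2^{1/(\beta K-1)}2^{\gamma K/(\beta K-1)}\big)$, i.e.\ a $1/K$ slack where~\eqref{eq:lem:MN:sparse_update:1} has $2/K$. So your chain is strictly tighter than the lemma and does not ``match exactly'' as you worried; you simply weaken $1/K$ to $2/K$ and attach the positive part (justified by $\lopt>0$) to recover the stated form. One small point you should make explicit: the inequality $\binom{m}{k}\geq(m/k)^k$ needs $a_0-1\geq t-1$, which holds because $\binom{t-1}{t-1}=1<2\ei$ forces $a_0\geq t$.
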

\begin{proof}
See Appendix~\ref{app:lem:MN:sparse_update}.
\end{proof}

We now apply this bound to the scenario when the sparsity parameter $\gamma$ is small and the number of users in the system $K$ is large. 
If $\gamma < \beta\log_2(\frac{1}{\beta})$, then Lemma~\ref{lem:MN:sparse_update} implies that $\lim \sup_{K \to \infty} \frac{l}{\lopt} \leq \frac{1-\beta}{1 - \beta 2^{\gamma/\beta}}$. 

As another corollary to Lemma~\ref{lem:MN:sparse_update}, we observe that if $\ei=2^{o(K)}$ is sub-exponential in $K$, i.e., ${\log_2 \ei}\,/\,{K} \to 0$, or equivalently $\log_2 \ei \, / \, \log_2 F \to 0$, as $K \to \infty$, then $\big(2\ei(K-t)+1\big)\,/\,\lopt \to 1$ as $K \to \infty$. We state this observation as

\begin{lemma} \label{lem:MN:main:1}
For the MN PDA, if $t/K \in (0,1)$ is a constant and $\log_2 \ei = o (\log_2 F)$, then the cost $l=2\ei(K-t)+1$ of the update scheme from Theorem~\ref{thm:new_coding_scheme} satisfies
$\lim_{K \to \infty} \frac{l}{\lopt} = 1$.
\end{lemma}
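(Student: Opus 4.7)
The plan is to obtain Lemma~\ref{lem:MN:main:1} as a direct corollary of Lemma~\ref{lem:MN:sparse_update} by taking the asymptotic $K \to \infty$ limit of the bound given there. Since the cost $l = 2\ei(K-t)+1$ is achievable by Theorem~\ref{thm:new_coding_scheme}, we automatically have $\lopt \leq l$ and hence $l/\lopt \geq 1$, so the only work is to establish $\limsup_{K \to \infty} l/\lopt \leq 1$.

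First I would reconcile the hypotheses with the parametrization of Lemma~\ref{lem:MN:sparse_update}. Write $\ei = 2^{\gamma K}$ where $\gamma = \gamma(K) \triangleq (\log_2 \ei)/K$, and recall that for the MN PDA with $t = \beta K$ we have $F = \binom{K}{\beta K}$, whose logarithm satisfies $\log_2 F = \Theta(K)$ (indeed $\log_2 F \sim K H_2(\beta)$). The assumption $\log_2 \ei = o(\log_2 F)$ therefore forces $\gamma(K) \to 0$ as $K \to \infty$. Also note that the hypothesis $t/K \in (0,1)$ constant with $t \geq 2$ (for large $K$) puts us in the regime where Lemma~\ref{lem:MN:sparse_update} applies.

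Next I would substitute into the upper bound from Lemma~\ref{lem:MN:sparse_update}:
\begin{equation*}
\frac{l}{\lopt} \leq \frac{1 - \beta + \frac{1}{2\ei K}}{\left[\,1 - \frac{2}{K} - 2^{1/(\beta K - 1)} \left(\beta - \frac{1}{K}\right) 2^{\gamma K/(\beta K - 1)}\,\right]^+}
\end{equation*}
and take limits factor by factor. The numerator tends to $1-\beta$. In the denominator, $2/K \to 0$ and $2^{1/(\beta K - 1)} \to 1$. The delicate factor is $2^{\gamma K/(\beta K - 1)}$; rewriting the exponent as $\gamma/(\beta - 1/K)$ and using $\gamma(K) \to 0$ with $\beta > 0$, this exponent tends to $0$, so the factor tends to $1$. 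Hence the bracketed quantity tends to $1 - \beta > 0$, which is strictly positive; for sufficiently large $K$ the $[\cdot]^+$ is superfluous and the denominator is bounded away from zero. The ratio of limits is therefore $(1-\beta)/(1-\beta) = 1$, yielding $\limsup_{K\to\infty} l/\lopt \leq 1$.

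The only subtlety in this plan is justifying that the $[\cdot]^+$ does not render the bound vacuous; this is handled in the previous paragraph by showing the bracketed expression converges to $1-\beta > 0$, so from some $K$ onward the raw expression inside $[\cdot]^+$ is positive and the bound is genuine. Combining $l/\lopt \geq 1$ with $\limsup l/\lopt \leq 1$ gives the claim $\lim_{K \to \infty} l/\lopt = 1$. No new combinatorial machinery is needed beyond Lemma~\ref{lem:MN:sparse_update}; the proof is an asymptotic unpacking of that bound.
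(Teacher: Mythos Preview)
Your proposal is correct and follows exactly the approach the paper takes: the paper simply states that Lemma~\ref{lem:MN:main:1} is a corollary of Lemma~\ref{lem:MN:sparse_update}, noting that $\log_2 F = \Theta(K)$ so $\log_2 \ei = o(\log_2 F)$ is equivalent to $\gamma(K)\to 0$, and then the bound in~\eqref{eq:lem:MN:sparse_update:1} tends to $1$. Your write-up spells out the limit computation and the handling of the $[\cdot]^+$ more carefully than the paper does, but the argument is the same.
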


\subsubsection{Regime 3, Dense Update} \label{regime3_mnpda}
In this case, we assume that $\ei$ is a constant fraction of the number of subfiles $F$, i.e., $\ei = \alpha F$ for some $0 < \alpha < 1/2$ (if $\alpha \geq 1/2$, then $2\ei \geq F \geq Z$, and hence $\lopt = F$). We know from Lemma~\ref{lem:cost_2ei}, that $\lopt \geq 2\ei = 2 \alpha F$.
Consider the MDS codes based update scheme of Lemma~\ref{ub1}, with $l = F - Z + 2\ei$. Since, $\beta = t/K = Z/F$, we observe that
\begin{equation*}
\frac{l}{\lopt} \leq \frac{F - Z + 2\ei}{2\ei} = \frac{1 - \beta + 2\alpha}{2\alpha}.
\end{equation*} 
Hence, when $\ei$ is a constant fraction of the number of subfiles, the communication cost promised by Lemma~\ref{ub1} is within a constant multiplicative factor of the optimal cost.

\subsection{The Yan et al. and Tang \& Ramamoorthy PDA} \label{sub:sec:uv_PDA}

We consider a family of PDAs constructed by Yan et al.~\cite{YCTC_2017} and Tang and Ramamoorthy~\cite{TaR_IT_18} that have smaller subpacketization than the MN PDA. This family of PDAs is parameterized by two integers $q,m \geq 2$.
Let $\Zb_q=\{0,1,\dots,q-1\}$ denote the additive cyclic group of order $q$.
The subfiles are indexed by $\fcal = \Zb_q^m$, the vectors of length $m$ over $\Zb_q$. The index set of nodes is $\kcal = \{(u,v)~|~ 1 \leq u \leq m+1, v \in \Zb_q\}$. 
For $1 \leq u \leq m$ and any $v \in \Zb_q$, the set $\xcal_{(u,v)}$ (which consists of the indices of the subfiles cached at the user $(u,v)$) contains all vectors from $\Zb_q^m$ whose $u^{\text{th}}$ coordinate is equal to $v$. For $u=m+1$ and any $v \in \Zb_q$, $\xcal_{(m+1,v)}$ contains a vector from $\Zb_q^m$ if and only if the sum of its coordinates over $\Zb_q$ is equal to $v$.
Note that $K=q(m+1)$, $F=q^m$, $Z=q^{m-1}$, $r=m+1$ and the caching ratio $\beta=Z/F=1/q$. 
The PDA in Fig.~\ref{fig:A32_PDA} is an instance of this family corresponding to parameters $q=3$ and $m=2$.

We will assume $Z \geq 2\ei + 1$. Consider the nodes $(u,v) \in \kcal$ with $1 \leq u \leq m$ and $v \in \Zb_q$. To apply Theorem~\ref{thm:generic_lower_bound}, we order these $qm$ nodes in the following sequence:
\begin{equation} \label{eq:uv_sequence}
(1,0),(2,0),\dots,(m,0),\,(1,1),(2,1),\dots,(m,1),\,\dots,(1,q-1),(2,q-1),\dots,(m,q-1).
\end{equation} 
A node $(u',v')$ appears earlier in the sequence than a node $(u,v)$ if either $v' < v$, or $v'=v$ and $u' < u$. In this case, we say that $(u',v')$ \emph{precedes} $(u,v)$ and denote this by $(u',v') \prec (u,v)$.
Let $x_{(u,v)}$ denote the number of subfiles in the node ${(u,v)}$ that are not contained in any of the preceding nodes, i.e.,
\begin{equation*}
 x_{(u,v)} = \Big| \xcal_{(u,v)} \, \setminus \bigcup_{\substack{(u',v'): \\ (u',v') \prec (u,v)}} \xcal_{(u',v')} \Big|.
\end{equation*} 
Applying Theorem~\ref{thm:generic_lower_bound} to this sequence of nodes, we have
\begin{equation} \label{eq:uv_PDA_general_lower_bound}
\lopt \geq \sum_{v=0}^{q-1} \sum_{u=1}^{m} \min \left\{ 2\ei, x_{(u,v)} \right\}.
\end{equation} 
The value of $x_{(u,v)}$ is the number of vectors $(s_1,\dots,s_m) \in \Zb_q^m$ that satisfy the following properties
\begin{enumerate}
\item $s_u=v$, 
\item if $u>1$, none of $s_1,\dots,s_{u-1}$ belong to $\{0,\dots,v\}$, i.e., $s_1,\dots,s_{u-1} \in \{v+1,\dots,q-1\}$, 
\item if $u<m$, none of $s_{u+1},\dots,s_m$ belong to $\{0,\dots,v-1\}$, i.e., $s_{u+1},\dots,s_m \in \{v,\dots,q-1\}$.
\end{enumerate} 
In the second condition above, the set $\{v+1,\dots,q-1\}$ is empty if $v = q-1$, in which case there is no suitable choice for the coordinates $s_1,\dots,s_{u-1}$. We conclude that
\begin{equation} \label{eq:x_uv}
x_{(u,v)} = (q-v-1)^{u-1}(q-v)^{m-u} \text{ for any } u \in [m], v \in \Zb_q,
\end{equation} 
where we treat $0^0$ to be equal to $1$.
Note that $(q-v-1)^{m-1} \leq x_{(u,v)} \leq (q-v)^{m-1}$ for any $u \in [m]$ and $v \in \Zb_q$.

\begin{lemma} \label{lem:x_uv_decreasing}
The sequence $x_{(u,v)}$ is a decreasing sequence, that is, $x_{(u',v')} \geq x_{(u,v)}$ if $(u',v') \prec (u,v)$.
\end{lemma}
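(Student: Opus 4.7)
The proposed plan is to decompose the ordering $\prec$ into two elementary types of steps and verify monotonicity for each, then chain them together.

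First I would handle within-block comparisons, i.e., fixed $v$ and $u' < u$. Using the explicit formula $x_{(u,v)} = (q-v-1)^{u-1}(q-v)^{m-u}$, the ratio
\begin{equation*}
\frac{x_{(u+1,v)}}{x_{(u,v)}} = \frac{q-v-1}{q-v}
\end{equation*}
is at most $1$ (and undefined/zero only when $v = q-1$, which is fine since then $x_{(u,q-1)}=0$ for $u<m$). Iterating this, $x_{(u',v)} \geq x_{(u,v)}$ whenever $u' \leq u$.

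Second I would handle the transition step between blocks, showing that $x_{(m,v)} \geq x_{(1,v+1)}$ for every $v \in \{0,\dots,q-2\}$. A direct substitution yields
\begin{equation*}
x_{(m,v)} = (q-v-1)^{m-1}(q-v)^{0} = (q-v-1)^{m-1}, \qquad x_{(1,v+1)} = (q-v-2)^{0}(q-v-1)^{m-1} = (q-v-1)^{m-1},
\end{equation*}
so in fact equality holds. This is the only non-obvious bit and, as seen, it works out cleanly because the exponents of $(q-v-1)$ and $(q-v)$ are cooked so that the ``wrap-around'' preserves the value.

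Finally, I would combine the two cases. Given any $(u',v') \prec (u,v)$, I can walk from $(u',v')$ to $(u,v)$ by first stepping through the remainder of block $v'$ (if $v' < v$, take $(u',v') \to (u'+1,v') \to \cdots \to (m,v')$), then stepping across block boundaries $(m,v') \to (1,v'+1) \to \cdots \to (1,v)$, and finally through block $v$ up to $(u,v)$. Each elementary step is of one of the two types analyzed above, and each is non-increasing in $x$, so the composition gives $x_{(u',v')} \geq x_{(u,v)}$. No step in this outline looks difficult; the only spot requiring a brief check is the boundary equality $x_{(m,v)} = x_{(1,v+1)}$, which is a one-line computation from the formula~\eqref{eq:x_uv}.
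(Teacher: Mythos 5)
Your proof is correct, and it takes a slightly different route from the paper. The paper handles the case $v' < v$ in one shot by sandwiching: from~\eqref{eq:x_uv} it observes $(q-v-1)^{m-1} \leq x_{(u,v)} \leq (q-v)^{m-1}$, so $x_{(u',v')} \geq (q-v'-1)^{m-1} \geq (q-v)^{m-1} \geq x_{(u,v)}$. You instead decompose $\prec$ into adjacent elementary moves, proving that each within-block step has ratio $(q-v-1)/(q-v) \leq 1$ and that the wrap-around step $(m,v) \to (1,v+1)$ preserves the value exactly, then chain. Both are valid and of comparable length; yours buys the extra observation that the boundary transitions are equalities (so the sequence is a chain of strict within-block decreases connected by plateaus), while the paper's bound-based argument avoids having to walk through intermediate blocks when $v - v' > 1$. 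One small expository nit: your summary path $(m,v') \to (1,v'+1) \to \cdots \to (1,v)$ glosses over the fact that $(1,v'+1) \to (1,v'+2)$ is not itself elementary — you need to walk through block $v'+1$ first — but the intent is clear and the chaining argument is sound.
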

\begin{proof}
Let $(u',v') \prec (u,v)$.
If $v'=v$, then necessarily $u' < u$. From~\eqref{eq:x_uv}, it clear that $x_{(u',v')} \geq x_{(u,v)}$. On the other hand, if $v' < v$, then 
\begin{equation*}
x_{(u',v')} \geq (q-v'-1)^{m-1} \geq (q-v)^{m-1} \geq x_{(u,v)}.
\end{equation*} 
\end{proof}

Based on the fact that $x_{(u,v)}$ is a decreasing sequence, we define $(u_0,v_0)$ to be the first index in the sequence~\eqref{eq:uv_sequence} such that $x_{(u_0,v_0)} < 2\ei$. A counting exercise leads us to

\begin{lemma} \label{lem:uv_PDA_good_lower_bound}
The optimal communication cost $\lopt \geq 2\ei\left(\,v_0(m-1) + u_0 + q-2\,\right)$, and $u_0 > 1$.
\end{lemma}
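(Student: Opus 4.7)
My plan is to apply the generic lower bound~\eqref{eq:uv_PDA_general_lower_bound} and split its sum at the threshold $(u_0,v_0)$: every node strictly preceding $(u_0,v_0)$ contributes exactly $2\ei$ (by Lemma~\ref{lem:x_uv_decreasing} and the definition of $u_0$), while the tail $\sum_{(u,v) \succeq (u_0,v_0)} x_{(u,v)}$ I intend to evaluate in closed form through a telescoping argument, and then lower-bound by exhibiting a factor of $x_{(u_0-1,v_0)}$ (which is where the auxiliary claim $u_0 > 1$ is used).

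For the auxiliary claim, I will rule out $(u_0,v_0) = (1,v_0)$ for both $v_0=0$ and $v_0 \geq 1$. When $v_0=0$, formula~\eqref{eq:x_uv} gives $x_{(1,0)} = q^{m-1} = Z$, which exceeds $2\ei$ under the standing assumption $Z \geq 2\ei+1$, so $(1,0)$ cannot be the first index with $x < 2\ei$. When $v_0 \geq 1$, a direct substitution yields $x_{(1,v_0)} = (q-v_0)^{m-1} = x_{(m,v_0-1)}$; since $(m,v_0-1) \prec (u_0,v_0)$, the definition of $u_0$ forces $x_{(m,v_0-1)} \geq 2\ei$, hence $x_{(1,v_0)} \geq 2\ei$, ruling out $u_0 = 1$.

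For the main bound, the strictly-preceding nodes account for $2\ei(v_0 m + u_0 - 1)$. For the tail, the key identity is the geometric sum
\[
\sum_{u=1}^{m}(q-v-1)^{u-1}(q-v)^{m-u} \;=\; (q-v)^m - (q-v-1)^m,
\]
(valid because the two bases differ by $1$), which telescopes across $v \in \{v_0+1,\dots,q-1\}$ to $(q-v_0-1)^m$. Applying the partial version $\sum_{u=u_0}^{m} x_{(u,v_0)} = (q-v_0-1)^{u_0-1}(q-v_0)^{m-u_0+1} - (q-v_0-1)^m$ and adding, the two $(q-v_0-1)^m$ terms cancel and the tail collapses to $(q-v_0-1)^{u_0-1}(q-v_0)^{m-u_0+1} = (q-v_0-1)\cdot x_{(u_0-1,v_0)}$.

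Since $u_0 > 1$ guarantees $(u_0-1,v_0) \prec (u_0,v_0)$, we have $x_{(u_0-1,v_0)} \geq 2\ei$, so the tail is at least $2\ei(q-1-v_0)$. Combining with the preceding-nodes contribution gives $\lopt \geq 2\ei(v_0 m + u_0 - 1) + 2\ei(q-1-v_0) = 2\ei(v_0(m-1) + u_0 + q - 2)$, as required. The main thing to watch is the bookkeeping of the telescoping cancellation across the split at $(u_0,v_0)$; the edge case $v_0 = q-1$ needs no separate treatment because $q-v_0-1 = 0$ makes the required tail bound trivially nonnegative, and the partial-sum formula degenerates correctly when $u_0 = m$.
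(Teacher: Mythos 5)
Your proof is correct and follows essentially the same route as the paper's: split the sum in Theorem~\ref{thm:generic_lower_bound} at $(u_0,v_0)$, count the $v_0 m + u_0 - 1$ preceding nodes contributing $2\ei$ each, telescope the tail using $\sum_{u=1}^m x_{(u,v)} = (q-v)^m - (q-v-1)^m$, use $x_{(u_0-1,v_0)} \geq 2\ei$ to bound the residual, and prove $u_0 \geq 2$ by the same two-case contradiction ($v_0 = 0$ versus $v_0 \geq 1$, the latter via $x_{(m,v_0-1)} = x_{(1,v_0)}$). Your presentation of the tail as $(q-v_0-1)\,x_{(u_0-1,v_0)}$ is a slightly cleaner factorization of exactly the cancellation the paper performs, but the argument is the same.
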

\begin{proof}
See Appendix~\ref{app:lem:uv_PDA_good_lower_bound}.
\end{proof}

To illustrate the goodness of our achievability schemes we consider two different operating regimes. We will assume that the caching ratio $Z/F = 1/q$ is a constant, and the number of nodes in the system is increased by increasing the value of $m$. Note that the subpacketization $F=q^m$ is exponential in $m$. 

\subsubsection{Regime 1, Sparse Update} \label{regime1_yanpda}

We assume that $\ei = \gamma^m$ for some constant $\gamma \in [1,q)$. In this case the fraction of subfiles being updated $\ei/F \to 0$ as $m \to \infty$. To apply Lemma~\ref{lem:uv_PDA_good_lower_bound}, we will first derive a lower bound on $v_0$. Since $u_0 > 1$ (from Lemma~\ref{lem:uv_PDA_good_lower_bound}) and $(u_0-1,v_0)$ immediately precedes $(u_0,v_0)$, we have
\begin{align*}
(q-v_0)^{m-1} \geq x_{(u_0-1,v_0)} \geq 2\ei > x_{(u_0,v_0)} \geq (q-v_0-1)^{m-1}.
\end{align*} 
This implies $q - (2\ei)^{\frac{1}{m-1}} - 1 < v_0 \leq q - (2\ei)^{\frac{1}{m-1}}$. Using this and $u_0 \geq 2$ in Lemma~\ref{lem:uv_PDA_good_lower_bound}, we have
\begin{align} \label{eq:uv_PDA_intemediate:1}
\lopt \geq 2\ei\left( (m-1)(q-2^{\frac{1}{m-1}}\gamma^{\frac{m}{m-1}}-1) + q \right).
\end{align} 
Comparing this with the communication cost $l=2\ei(K-r)+1=2\ei(m+1)(q-1) + 1$ guaranteed by Theorem~\ref{thm:new_coding_scheme}, we arrive at

\begin{lemma} \label{lem:uv_PDA_sparse:1}
Consider the Yan et al. and Tang \& Ramamoorthy PDA, with $Z/F=1/q$ being a constant and $\ei = \gamma^m$ for $\gamma \in [1,q-1)$. The communication cost of the scheme in Theorem~\ref{thm:new_coding_scheme} for updating this PDA satisfies $\limsup_{m \to \infty} \frac{l}{\lopt} \leq \frac{q-1}{q-1-\gamma}$.
\end{lemma}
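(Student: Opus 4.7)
The plan is a direct asymptotic comparison between the achievable cost $l = 2\ei(K-r)+1$ from Theorem~\ref{thm:new_coding_scheme} and the lower bound already derived in~\eqref{eq:uv_PDA_intemediate:1}. First I would substitute the PDA parameters $K = q(m+1)$ and $r = m+1$ to obtain $K - r = (q-1)(m+1)$, so that $l = 2\ei(q-1)(m+1) + 1$, which is asymptotic to $2\ei(q-1)m$ as $m \to \infty$.

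Second, I would verify that the hypothesis $Z \geq 2\ei + 1$ required for the derivation of~\eqref{eq:uv_PDA_intemediate:1} is satisfied in the limiting regime. Since $\ei = \gamma^m$ and $Z = q^{m-1}$ with $\gamma < q-1 < q$, the ratio $\ei/Z = q(\gamma/q)^m \to 0$, and so $Z > 2\ei$ for all sufficiently large $m$. Hence~\eqref{eq:uv_PDA_intemediate:1} is available and gives
$$\lopt \geq 2\ei\left((m-1)\bigl(q - 2^{1/(m-1)}\gamma^{m/(m-1)} - 1\bigr) + q\right).$$

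Third, I would analyse the bracket in this lower bound asymptotically. Since $2^{1/(m-1)} \to 1$ and $\gamma^{m/(m-1)} \to \gamma$ as $m \to \infty$, the bracket behaves like $m(q-1-\gamma) + o(m)$; the assumption $\gamma < q - 1$ ensures the leading coefficient $q - 1 - \gamma$ is strictly positive, so the lower bound grows linearly in $m$ at the same order as $l$. Taking the ratio and cancelling the common factor $2\ei\, m$ between numerator and denominator then yields
$$\limsup_{m \to \infty}\frac{l}{\lopt} \;\leq\; \lim_{m \to \infty}\frac{(q-1)(m+1) + o(m)}{(m-1)(q-1-\gamma) + o(m)} \;=\; \frac{q-1}{q-1-\gamma},$$
which is the desired claim.

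This calculation is essentially routine once~\eqref{eq:uv_PDA_intemediate:1} and the formula for $l$ are in hand; there is no substantive obstacle to overcome. The only point requiring any care is to check the applicability of~\eqref{eq:uv_PDA_intemediate:1} in the regime $m \to \infty$, and to track the asymptotic behaviour of $2^{1/(m-1)}\gamma^{m/(m-1)}$, both of which are elementary.
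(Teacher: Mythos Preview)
Your proposal is correct and follows essentially the same route as the paper: form the ratio of $l=2\ei(m+1)(q-1)+1$ to the lower bound~\eqref{eq:uv_PDA_intemediate:1}, use $2^{1/(m-1)}\to 1$ and $\gamma^{m/(m-1)}\to\gamma$, and take the limit. Your additional check that $Z\geq 2\ei+1$ eventually holds is a welcome bit of care but is already a standing assumption in the section.
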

\begin{proof}
From~\eqref{eq:uv_PDA_intemediate:1}, $\frac{l}{\lopt}$ is upper bounded by 
\begin{align*}
 \frac{(m+1)(q-1) + \frac{1}{2\ei}}{(m-1)(q-2^{\frac{1}{m-1}}\gamma^{\frac{m}{m-1}}-1) + q} =  
 \frac{\frac{m+1}{m-1}(q-1) + \frac{1}{2 \gamma^m (m-1)}}{q-2^{\frac{1}{m-1}}\gamma^{\frac{m}{m-1}}-1 + \frac{q}{m-1}}.  
\end{align*} 
The lemma follows from observing that $2^{\frac{1}{m-1}} \to 1$ and $\gamma^{\frac{m}{m-1}} \to \gamma$ as $m \to \infty$.
\end{proof}

It is clear from Lemma~\ref{lem:uv_PDA_sparse:1} that if $\ei$ is subexponential in $m$, i.e., if $\log_2 \ei = o(m)=o(\log_2 F)$, then $\lim_{m \to \infty} \frac{l}{\lopt}=1$.

We now consider the case $q=2$. For this case $x_{(u,v)}=2^{m-u}$ if $v=0$, $x_{(1,1)}=1$, and $x_{(u,v)}=0$ for $v=1$, $u>1$. 
Clearly, $v_0=0$ and $u_0$ satisfies $2^{m-u_0+1} \geq 2\ei > 2^{m-u_0}$. Hence, we have $m - \log_2 (2\ei) < u_0 \leq m - \log_2(2\ei) + 1$. Applying Lemma~\ref{lem:uv_PDA_good_lower_bound}, 
\begin{align*}
 \lopt \geq 2\ei u_0 \geq 2\ei(m - \log_2 (2\ei)) = 2\ei\left(\,m(1-\log_2 \gamma) - 1 \,  \right).
\end{align*} 
The ratio of $l=2\ei(K-r)+1=2\ei(m+1) + 1$ to this lower bound tends to $\frac{1}{(1 - \log_2 \gamma)}$ as $m \to \infty$. Hence, we have proved

\begin{lemma} \label{lem:uv_PDA_sparse:2}
Consider the Yan et al. and Tang \& Ramamoorthy PDA, with $Z/F=1/2$ being a constant and $\ei = \gamma^m$ for $\gamma \in [1,2)$. The communication cost of the scheme in Theorem~\ref{thm:new_coding_scheme} for updating this PDA satisfies $\limsup_{m \to \infty} \frac{l}{\lopt} \leq \frac{1}{1-\log_2 \gamma}$.
\end{lemma}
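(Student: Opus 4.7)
The plan is to specialize the general lower bound machinery of Lemma~\ref{lem:uv_PDA_good_lower_bound} to the case $q=2$, where the combinatorics of $x_{(u,v)}$ simplifies dramatically, and then take the ratio with the achievable cost $l=2\ei(m+1)+1$ of Theorem~\ref{thm:new_coding_scheme}.

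First I would evaluate the formula $x_{(u,v)}=(q-v-1)^{u-1}(q-v)^{m-u}$ at $q=2$. For $v=0$ this gives $x_{(u,0)}=1^{u-1}\cdot 2^{m-u}=2^{m-u}$, which is a strictly decreasing geometric sequence in $u$. For $v=1$ the factor $(q-v-1)^{u-1}=0^{u-1}$ vanishes whenever $u>1$, while for $u=1$ it equals $0^0=1$ by convention, yielding $x_{(1,1)}=1$. So, among the $qm=2m$ terms in the sum~\eqref{eq:uv_PDA_general_lower_bound}, only the $m$ terms with $v=0$ contribute substantially, and they form the decreasing sequence $2^{m-1},2^{m-2},\dots,1$.

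Next I would locate $(u_0,v_0)$, defined above Lemma~\ref{lem:uv_PDA_good_lower_bound} as the first index in the ordering~\eqref{eq:uv_sequence} for which $x_{(u_0,v_0)}<2\ei$. Since $x_{(u,0)}$ is decreasing in $u$ and $x_{(u,0)}\geq x_{(u',v')}$ for any $v'\geq 1$ by Lemma~\ref{lem:x_uv_decreasing}, this transition must occur while $v_0=0$. The defining inequalities $2^{m-u_0+1}\geq 2\ei>2^{m-u_0}$ give $m-\log_2(2\ei)<u_0\leq m-\log_2(2\ei)+1$. Substituting $v_0=0$ and $q=2$ into Lemma~\ref{lem:uv_PDA_good_lower_bound} produces $\lopt\geq 2\ei\,u_0\geq 2\ei\bigl(m-\log_2(2\ei)\bigr)$, and substituting $\ei=\gamma^m$ so that $\log_2(2\ei)=1+m\log_2\gamma$ yields $\lopt\geq 2\ei\bigl(m(1-\log_2\gamma)-1\bigr)$. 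The assumption $\gamma<2$ ensures $1-\log_2\gamma>0$, so this lower bound grows linearly with $m$.

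Finally I would compare against the achievable cost. For this PDA $K=2(m+1)$ and $r=m+1$, hence $l=2\ei(K-r)+1=2\ei(m+1)+1$. Forming the ratio,
\begin{equation*}
\frac{l}{\lopt}\leq\frac{2\ei(m+1)+1}{2\ei\bigl(m(1-\log_2\gamma)-1\bigr)}=\frac{(m+1)+\frac{1}{2\ei}}{m(1-\log_2\gamma)-1},
\end{equation*}
and sending $m\to\infty$ (noting $1/(2\ei)=1/(2\gamma^m)\to 0$ since $\gamma\geq 1$) gives $\limsup_{m\to\infty} l/\lopt\leq 1/(1-\log_2\gamma)$. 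There is no real obstacle here beyond bookkeeping; the only subtle point is ensuring the transition index indeed satisfies $v_0=0$ (which follows immediately from Lemma~\ref{lem:x_uv_decreasing}) and that we may freely ignore the $v\geq 1$ terms when lower-bounding $\lopt$. The calculation closely mirrors the proof of Lemma~\ref{lem:uv_PDA_sparse:1} but avoids its approximation of $v_0$ because for $q=2$ the value $v_0=0$ is forced exactly.
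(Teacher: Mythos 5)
Your proposal is correct and follows essentially the same route as the paper: evaluate $x_{(u,v)}$ at $q=2$ to find $v_0=0$ and $m-\log_2(2\ei)<u_0\leq m-\log_2(2\ei)+1$, plug into Lemma~\ref{lem:uv_PDA_good_lower_bound} to get $\lopt\geq 2\ei\,u_0\geq 2\ei(m(1-\log_2\gamma)-1)$, and compare with $l=2\ei(m+1)+1$ as $m\to\infty$. The only cosmetic difference is that you spell out explicitly why $v_0=0$ is forced via Lemma~\ref{lem:x_uv_decreasing}, whereas the paper simply asserts it.
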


\subsubsection{Regime 2, Dense Update} \label{regime2_yanpda}

We now consider the case where the number of subfiles being updated are a constant fraction of $F$, say $\ei = \alpha F$ for some $0 < \alpha < \frac{1}{2q}$ (if $\alpha > \frac{1}{2q}$, then $2\ei = 2\alpha F \geq \frac{F}{q} = Z$, and hence, $\lopt = F$).
From Lemma~\ref{lem:uv_PDA_good_lower_bound}, $\lopt \geq 2\ei(u_0 + q-2) \geq 2\ei q=2\alpha F q$. Comparing with the cost $l = F - Z + 2\ei = F - F/q + 2\alpha F$ from the scheme of Lemma~\ref{ub1}, 
\begin{align*}
\frac{l}{\lopt} \leq \frac{1 - \frac{1}{q} + 2\alpha}{2\alpha q},
\end{align*} 
which is a constant independent of the scaling parameter $m$.

To conclude, we have shown that $l=\min\{2\ei(K-r)+1,F - (Z-2\ei)^+\}$ is order optimal in some sparse and as well as dense operating regimes.

\section{Conclusion and Discussion} \label{conclusions}

We formulated the problem of pushing updates blindly into the client nodes of a coded caching system. 
We designed a new coded transmission strategy for blind updates, and this scheme has near-optimal communication cost when the updates are sufficiently sparse for two well-known families of PDAs.
On the other hand the simple scheme of using the parity-check matrices of MDS codes is order-optimal when the updates are dense. 

We are yet to device efficient decoding strategies for our achievability schemes, and explicitly characterize the optimal cost $\lopt$ in general.
Another line of work is to consider a probabilistic scenario where the contents of the file being replaced and that of the new file arise from a known joint probability distribution, and characterize the information-theoretically optimal communication load in such a case.

\appendices

\section{Proof of Lemma~\ref{lem:construction_any_2ei_independent}} \label{app:lem:construction_any_2ei_independent}

We will rely on the Schwartz-Zippel lemma for our proof.

\begin{theorem}[{The Schwartz-Zippel lemma}] \label{thm:schwartz-zippel}
Let $f(t_1,\dots,t_n) \in \Fb_q[t_1,\dots,t_n]$ be a non-zero polynomial of total degree $d$ over $\Fb_q$. If $a_1,\dots,a_n$ are chosen at random independently and uniformly from $\Fb_q$, then $f(a_1,\dots,a_n)=0$ with probability at the most $d/q$.
\end{theorem}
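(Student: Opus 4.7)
The plan is to prove the Schwartz--Zippel lemma by induction on the number $n$ of indeterminates, which is the standard textbook route and fits the tools that the paper has already set up. The intuition is that a nonzero polynomial's zero set has small ``dimension,'' and we capture this by peeling off one variable at a time.

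For the base case $n=1$, a nonzero univariate polynomial $f(t_1) \in \Fb_q[t_1]$ of degree $d$ has at most $d$ roots in $\Fb_q$ by the standard fact that $(t_1 - \alpha)$ divides $f$ for each root $\alpha$ and that $\Fb_q[t_1]$ is a unique factorization domain. Since $a_1$ is uniform over a set of size $q$, the probability that it lies in this root set is at most $d/q$.

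For the inductive step, assume the result for all polynomials in strictly fewer than $n$ variables. Given a nonzero $f(t_1,\dots,t_n)$ of total degree $d$, I would single out the variable $t_1$ and write
\begin{equation*}
 f(t_1,\dots,t_n) \;=\; \sum_{i=0}^{k} t_1^{\,i}\, g_i(t_2,\dots,t_n),
\end{equation*}
where $k$ is the largest power of $t_1$ that actually appears in $f$, so $g_k$ is a nonzero polynomial in $n-1$ variables and $k \leq d$. The key degree bookkeeping is that the total degree of $g_k$ is at most $d-k$, because any monomial contributing to $g_k$ came from a monomial of $f$ of the form $t_1^k \cdot (\text{rest})$ whose total degree is at most $d$.

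Now I would condition on the value of $g_k(a_2,\dots,a_n)$ and apply the law of total probability. On the event $\{g_k(a_2,\dots,a_n) \neq 0\}$, the polynomial $f(t_1, a_2,\dots,a_n)$ becomes a nonzero univariate polynomial of degree exactly $k$ in $t_1$, so by the base case the conditional probability that $f(a_1,\dots,a_n)=0$ is at most $k/q$. On the complementary event, I use the inductive hypothesis on $g_k$, which is a nonzero polynomial in $n-1$ variables of total degree at most $d-k$, to bound $\Pr[g_k(a_2,\dots,a_n)=0] \leq (d-k)/q$. Combining,
\begin{equation*}
 \Pr[f(a_1,\dots,a_n)=0] \;\leq\; \frac{d-k}{q} + \frac{k}{q} \;=\; \frac{d}{q},
\end{equation*}
completing the induction. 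The only mildly delicate point is the degree accounting for $g_k$, which ensures the inductive hypothesis applies with exactly the right bound so that the two conditional contributions add up cleanly to $d/q$; independence of $a_1$ from $(a_2,\dots,a_n)$ is what licenses invoking the base case conditionally.
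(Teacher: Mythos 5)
The paper states the Schwartz--Zippel lemma (Theorem~\ref{thm:schwartz-zippel}) as a known classical result and does not include a proof, so there is no proof in the paper to compare against. Your proof is the standard induction-on-the-number-of-variables argument and it is correct: the base case uses the bound on the number of roots of a univariate polynomial over a field, the inductive step isolates the highest power $k$ of $t_1$, correctly notes that the leading coefficient $g_k$ has total degree at most $d-k$, and the conditioning on $\{g_k(a_2,\dots,a_n)=0\}$ together with the union bound yields $(d-k)/q + k/q = d/q$. The one place a careless writer might slip --- that $f(t_1,a_2,\dots,a_n)$ is a nonzero univariate polynomial of degree exactly $k$ precisely on the event $g_k(a_2,\dots,a_n)\neq 0$, and that this requires $a_1$ to be independent of $(a_2,\dots,a_n)$ --- you handle explicitly. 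No gaps.
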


In other words, if the multivariate polynomial $f$ is not the zero polynomial, and if $a_1,\dots,a_n$ are independent and uniformly distributed over $\Fb_q$, then the probability that $f(a_1,\dots,a_n) \neq 0$ is $1 - O(q^{-1})$.

In order to use the Schwartz-Zippel lemma, we will replace the random variables $a_{k,m}$ with indeterminates $t_{k,m}$, and consider the entries~\eqref{eq:components_of_H} of the encoder matrix as multivariate polynomials in  $t_{k,m}, k \in \kcal, m \in [2\ei]$. Analogous to $A_k$ and $A_{\ical_f}$, we also define $T_k = \{t_{k,m}~|~m \in [2\ei]\}$ and $T_{\ical_f} = \cup_{k \in \ical_f} T_k = \{t_{k,m}~|~k \in \ical_f, m \in [2\ei]\}$. 
Then, the $i^{\text{th}}$ component of the $f^{\text{th}}$ column of $\Hb$ is 
\begin{equation*}
 h_{i,f} = \sum_{S \in \binom{T_{\ical_f}}{l-i}} \prod_{t_{k,m} \in S} t_{k,m}.
\end{equation*} 
%
Now suppose $f_1,\dots,f_{2\ei}$ are distinct columns of $\Hb$. 
We will denote the $2\ei \times 2\ei$ submatrix of $[\hb_{f_1} \cdots \hb_{f_{2\ei}}]$ indexed by the rows 
$(i-1)(K-r)+1$, $i \in [2\ei]$,
as $\Bb=[b_{i,j}]$. Note that
\begin{equation*}
 b_{i,j} = \sum_{S \in \binom{T_{\ical_{f_j}}}{l-\left( (i-1)(K-r) + 1 \right)}} \prod_{t_{k,m} \in S} t_{k,m} = \sum_{S \in \binom{T_{\ical_{f_j}}}{(2\ei-i+1)(K-r)}} \prod_{t_{k,m} \in S} t_{k,m}, \text{ where } i,j \in [2\ei].
\end{equation*} 
We will show that the determinant of $\Bb$ is a non-zero polynomial. This will imply that $[\hb_{f_1} \cdots \hb_{f_{2\ei}}]$ has rank $2\ei$ with probability $1 - O(q^{-1})$ when the indeterminates $t_{k,m}$ are replaced by random elements from $\Fb_q$.

Denoting the set of all permutations on $[2\ei]$ by $\mathcal{P}$, and using the fact that the characteristic of $\Fb_q$ is $2$, we have
\begin{equation} \label{eq:detM_expansion}
\det(\Bb) = \sum_{\sigma \in \mathcal{P}} ~\prod_{j=1}^{2\ei} b_{\sigma(j),j} = \sum_{\sigma \in \mathcal{P}}~ \prod_{j=1}^{2\ei}~ \left( \sum_{\substack{S \subset T_{\ical_{f_j}}\\ |S|=(2\ei - \sigma(j)+1)(K-r)}}  \!\!\!\!\prod_{t_{k,m} \in S} t_{k,m} \right).
\end{equation} 
Considering $\det(\Bb)$ as a polynomial, each monomial in its expansion is of the form 
\begin{equation} \label{eq:each_monomial}
g(\sigma,S_1,\dots,S_{2\ei}) \triangleq \prod_{j=1}^{2\ei} \prod_{t_{k,m} \in S_j} t_{k,m}
\end{equation} 
for some choice of permutation $\sigma$ and subsets $S_1,\dots,S_{2\ei}$ where $S_j \subset T_{\ical_{f_j}}$ and $|S_j|=(2\ei-\sigma(j)+1)(K-r)$.
%
Note that $T_{\ical_{f_j}}=\{t_{k,m}~|~k \in \ical_{f_j}, m \in [2\ei]\}$.
One of the monomials in the expansion~\eqref{eq:detM_expansion}, corresponding to $\sigma$ being the identity permutation and 
$S_j = \{t_{k,m}~|~k \in \ical_{f_j}, j \leq m \leq 2\ei \}$ for each $j \in [2\ei]$,
is $g^* = \prod_{j=1}^{2\ei} \prod_{k \in \ical_{f_j}} \prod_{m=j}^{2\ei} t_{k, \, m}$.

We will show that $\det(\Bb)$ is a non-zero polynomial by proving that the monomial $g^*$ occurs exactly once in the expansion~\eqref{eq:detM_expansion}.
Now suppose that a permutation $\sigma$ and subsets $S_1,\dots,S_{2\ei}$ satisfying $S_j \subset T_{\ical_{f_j}}$ and $|S_j| = (2\ei - \sigma(j) + 1)(K-r)$ are such that the monomial $g(\sigma,S_1,\dots,S_{2\ei})$ in~\eqref{eq:each_monomial} equals $g^*$, i.e.,
\begin{equation} \label{eq:app:g_g}
 \prod_{j=1}^{2\ei} \prod_{t_{k,m} \in S_j} t_{k,m} = \prod_{j=1}^{2\ei} \prod_{k \in \ical_{f_{j}}} \prod_{m=j}^{2\ei} t_{k, \, m}.
\end{equation} 
We prove by induction on $j$ that, necessarily, $\sigma(j)=j$, $S_j = \{t_{k,\,m}~|~ k \in \ical_{f_j}, m \geq j\}$ for all $j \in [2\ei]$.
From~\eqref{eq:mild_condition_pda} we know that $\ical_{f_1},\dots,\ical_{f_{2\ei}}$ are distinct. This will be used in our proof.

Consider $j_0 = \sigma^{-1}(1)$. Note that $|S_{j_0}|=2\ei(K-r)$ and $S_{j_0} \subset T_{\ical_{f_{j_0}}}$, that is, $S_{j_0} = T_{\ical_{f_{j_0}}}$. Thus, $\prod_{t_{k,m} \in T_{\ical_{f_{j_0}}}} t_{k,m} = \prod_{k \in \ical_{f_{j_0}}} \prod_{m=1}^{2\ei} t_{k, \, m}$ is a factor of $g(\sigma,S_1,\dots,S_{2\ei})$, and hence, a factor of $g^*$.
In particular, $\prod_{k \in \ical_{f_{j_0}}} t_{k,1}$ is a factor of the RHS of~\eqref{eq:app:g_g}.
Observe that the factors in the RHS of~\eqref{eq:app:g_g} of the form $t_{k,1}$, for some choice of $k$, are $\{t_{k,1}~|~k \in \ical_{f_1}\}$.
Since $\ical_{f_1},\dots,\ical_{f_{2\ei}}$ are distinct, we conclude that $\ical_{f_{j_0}}=\ical_{f_1}$. Hence $j_0=1$, and $\sigma(1)=1$.

Now let $i_0 \geq 2$. Assume $\sigma(j)=j$ and $S_j = \{t_{k,m}~|~k \in \ical_{f_j}, m \geq j \}$ for all $j \leq i_0-1$. From~\eqref{eq:app:g_g}, 
\begin{equation} \label{eq:app:g_g_ind}
g^*_{i_0} \triangleq \frac{g^*}{ \prod_{j=1}^{i_0-1} \prod_{t_{k,m} \in S_j} t_{k,m}} = \prod_{j=i_0}^{2\ei} \prod_{t_{k,m} \in S_j} t_{k,m} = \prod_{j=i_0}^{2\ei} \prod_{k \in \ical_{f_j}} \prod_{m=j}^{2\ei} t_{k,\,m}.
\end{equation} 
Let $j_0 = \sigma^{-1}(i_0)$. By the induction hypothesis, we know that $j_0 \geq i_0$. 
From the RHS of~\eqref{eq:app:g_g_ind}, we observe that each factor $t_{k,m}$ of $g^*_{i_0}$ satisfies $m \geq i_0$. Thus, for any $j \geq i_0$, we have $S_j \subset \{t_{k,\,m}~|~k \in \ical_{f_{j}}, m \geq i_0\}$.
Since $\sigma(j_0)=i_0$, we have $|S_{j_0}| = (K-r)(2\ei-i_0+1)$, and hence, we deduce that $S_{j_0} = \{t_{k,\,m}~|~k \in \ical_{f_{j_0}}, m \geq i_0\}$.
Using this with the fact that $\prod_{t_{k,m} \in S_{j_0}} t_{k,m}$ is a factor of $g^*_{i_0}$, we see that $\prod_{k \in \ical_{f_{j_0}}} t_{k,i_0}$ is a factor of $g^*_{i_0}$.
We also observe from the RHS of~\eqref{eq:app:g_g_ind} that the only factors of $g^*_{i_0}$ of the form $t_{k,i_0}$, for some choice of $k$, are $\{t_{k,i_0}~|~k \in \ical_{f_{i_0}}\}$.
Using the fact that $\ical_{f_1},\dots,\ical_{f_{2\ei}}$ are distinct, we conclude that $\ical_{f_{j_0}} = \ical_{f_{i_0}}$ i.e., $i_0=j_0$. Hence, $\sigma(i_0)=i_0$.
This completes the proof of the induction step. 

We conclude that the monomial $g^*$ appears exactly once in the expansion~\eqref{eq:detM_expansion} of $\det(\Bb)$, and hence this monomial does not vanish, and therefore, $\det(\Bb)$ is a non-zero polynomial in the indeterminates $t_{k,m}$, $k \in \kcal, m \in [2\ei]$.
From the Schwartz-Zippel lemma, we deduce that if $a_{k,m}$, $k \in \kcal, m \in [2\ei]$, are chosen independently and uniformly at random from $\Fb_q$, then the probability that $\det(\Bb)$ evaluates to a non-zero value at the evaluation point $(a_{k,m}~|~k \in \kcal, m \in [2\ei])$ is $1 - O(q^{-1})$. 
Since $\Bb$ is a $2\ei \times 2\ei$ submatrix of $[\hb_{f_1}~\cdots~\hb_{f_{2\ei}}]$, we conclude that $\hb_{f_1},\dots,\hb_{f_{2\ei}}$ are linearly independent with probability $1 - O(q^{-1})$.

\section{Proof of Lemma~\ref{lem:MN:sparse_update}} \label{app:lem:MN:sparse_update}

From~\eqref{eq:shangguan:2}, we know that if $a_0$ is the smallest integer such that $\binom{a_0}{t-1} \geq 2\ei$, then $\lopt$ is lower bounded by $2\ei(K-a_0) + \binom{a_0}{t}$. Using this, along with the result $\binom{a_0}{t} = \binom{a_0}{t-1} \times \frac{a_0 - t + 1}{t}$, we have
\begin{align*}
\lopt &\geq 2\ei(K-a_0) + \binom{a_0}{t-1}\, \frac{a_0-t+1}{t} \\
&\geq 2\ei(K-a_0) + 2\ei\frac{(a_0-t+1)}{t} \\
&= 2\ei\left( K - \frac{t-1}{t} - \frac{a_0(t-1)}{t} \right) \\
&\geq 2\ei(K-1-a_0), ~~~~~~~~~~~~\text{since } \frac{t-1}{t} \leq 1.
\end{align*} 
If $a'$ is any integer such that $\binom{a'}{t-1} \geq 2\ei$, then $a' \geq a_0$, and  
$\lopt \geq 2\ei(K-1-a_0) \geq 2\ei(K-1-a')$.
We now observe that $a' = \left\lceil (t-1) \times (2\ei)^{{1}/{(t-1)}} \right\rceil$ satisfies this condition, since
\begin{align*}
\binom{a'}{t-1} \geq \left( \frac{a'}{t-1} \right)^{t-1} \geq 2\ei.
\end{align*} 
Hence we have the lower bound
\begin{align*}
\lopt &\geq 2\ei(K-1-a') \geq 2\ei(K-1-\left\lceil (t-1) \times (2\ei)^{{1}/{(t-1)}} \right\rceil) \\ 
&\geq 2\ei\left(K-\,2\,-\,(t-1)(2\ei)^{{1}/{(t-1)}}\right) \\
&\geq 2\ei\left(K- 2 - (\beta K-1) 2^{1/(\beta K -1)} 2^{\gamma K / (\beta K - 1)} \right)
\end{align*} 
Using this with the trivial bound $\lopt \geq 0$, we have
\begin{align*}
&
\lopt \geq 2\ei K \left[ 1 - \frac{2}{K} - 2^{1/(\beta K - 1)} \left(\beta - \frac{1}{K}\right) 2^{\gamma K/(\beta K - 1)}  \right]^+.
\end{align*} 
Finally, comparing this lower bound with $l=2\ei(K-t)+1=2\ei K \left(1 - \beta + \frac{1}{2\ei K}\right)$, we arrive at the statement of this lemma.

\section{Proof of Lemma~\ref{lem:uv_PDA_good_lower_bound}} \label{app:lem:uv_PDA_good_lower_bound}


The sequence $x_{(u,v)}$ has the following properties. For any $v \in \Zb_q$,
\begin{align*}
\sum_{u=1}^{m} x_{(u,v)} = \sum_{u=1}^{m} (q-v-1)^{u-1}(q-v)^{m-u} = (q-v)^m - (q-v-1)^m.
\end{align*} 
This implies that 
$\sum_{v=v_0+1}^{q-1} \sum_{u=1}^{m}x_{(u,v)} = \sum_{v=v_0+1}^{q-1} (q-v)^m - (q-v-1)^m = (q-v_0-1)^m$.
Also, 
it is straightforward to see that
\begin{align*}
\sum_{u=u_0}^{m} x_{(u_0,v_0)} &= \sum_{u=u_0}^{m} (q-v_0-1)^{u-1}(q-v_0)^{m-u} \\
&= \frac{(q-v_0)^m}{(q-v_0-1)} \sum_{u=u_0}^{m} \left( \frac{q-v_0-1}{q-v_0} \right)^u \\
&= \frac{(q-v_0)^m}{(q-v_0-1)} \times \frac{\left( \frac{q-v_0-1}{q-v_0} \right)^{u_0} - \left( \frac{q-v_0-1}{q-v_0} \right)^{m+1}}{1 - \frac{q-v_0-1}{q-v_0}} \\
&=(q-v_0-1)^{u_0-1}(q-v_0)^{m-u_0+1} - (q-v_0-1)^m. 
%
\end{align*} 

We now argue that \mbox{$u_0 \geq 2$}. Let us assume the contrary, i.e., \mbox{$u_0=1$}. Then \mbox{$v_0 \geq 1$}, since otherwise $2\ei > x_{(u_0,v_0)}=x_{(1,0)} = Z$, which contradicts our assumption that $Z \geq 2\ei + 1$. Now notice that $(m,v_0-1) \prec (1,v_0)$ and $x_{(m,v_0-1)} = x_{(1,v_0)}$ since both are equal to $(q-v_0)^{m-1}$. Hence, $x_{(m,v_0-1)} < 2\ei$. This contradicts the assumption that $(u_0,v_0)$ is the first element in the sequence~\eqref{eq:uv_sequence} with value strictly less than $2\ei$ since $(m,v_0-1) \prec (1,v_0) = (u_0,v_0)$. Hence, we conclude that $u_0 \geq 2$.


The element immediately preceding $(u_0,v_0)$ is $(u_0-1,v_0)$ and 
$x_{(u_0-1,v_0)} \geq 2\ei$, that is, \mbox{$(q-v_0-1)^{u_0-2}(q-v_0)^{m-u_0+1} \geq 2\ei$}. Hence,
\begin{equation*}
\sum_{u=u_0}^{m} x_{(u_0,v_0)} = (q-v_0-1)^{u_0-1}(q-v_0)^{m-u_0+1} - (q-v_0-1)^m \geq 2\ei(q-v_0-1) - (q-v_0-1)^m.
\end{equation*} 

We now apply all the results derived in this appendix together with Lemma~\ref{lem:x_uv_decreasing} to obtain
\begin{align*}
\lopt \geq \sum_{(u,v)} \min\{2\ei,x_{(u,v)}\} &= \sum_{(u,v) \prec (u_0,v_0)} \!\!\!\!\!\!\! 2\ei ~+ \sum_{u=u_0}^{m} x_{(u_0,v_0)} + \sum_{v=v_0+1}^{q-1}\sum_{u=1}^{m} x_{(u,v)} \\
&\geq 2\ei(v_0m + u_0 - 1) + 2\ei(q-v_0-1) - (q-v_0-1)^m + (q-v_0-1)^m \\
&= 2\ei\left( v_0(m-1) + u_0 + q-2 \right).
\end{align*} 



\end{document}